\title{\Large \bf
A Martingale Approach and Time-Consistent Sampling-based Algorithms \\for
Risk Management in Stochastic Optimal Control
}
\author{Vu Anh Huynh \and Leonid Kogan \and Emilio Frazzoli% <-this % stops a space
\thanks{Huynh and Frazzoli are affiliated with or members of the Laboratory for Information and Decision Systems, Kogan is with the Sloan School of Management, Massachusetts Institute of Technology, 77 Massachusetts Ave., Cambridge, MA 02139.   
{\tt\small huyn0002@gmail.com,$\lbrace$lkogan2, frazzoli$\rbrace$@mit.edu}}
}
\newcommand{\reals}{\ensuremath{\mathbb{R}}}
\newcommand{\naturals}{\ensuremath{\mathbb{N}}}
\newcommand{\EE}{\ensuremath{\mathbb{E}}}
\newcommand{\Cov}{\ensuremath{\mathrm{Cov}}}
\newcommand{\given}{\ensuremath{\,\vert\,}}
\newcommand{\FirstExit}[2]{{\ensuremath{T_{#1}^{#2}}}}
\newcommand{\CostToGo}[2]{{\ensuremath{J_{#2}(#1)}}}
\newcommand{\OptimalCostToGo}[1]{{\ensuremath{J^*(#1)}}}
\newcommand{\DummyS}{Y}
\newcommand{\dummyrate}{\varsigma}
\newcommand{\dummyasyn}{\theta}
\newcommand{\plim}{\ensuremath{\mathrm{plim}}}
\definecolor{darkgreen}{rgb}{0,0.5,0} 
\definecolor{fullred}{rgb}{0.85,.0,.1} 
\definecolor{brown}{rgb}{0.65,0.16,0.16} 
\newcommand{\todofootnote}[1]{{\let\thefootnote\relax\footnotetext{#1}}}
\newtheorem{theorem}{Theorem}
\newtheorem{lemma}[theorem]{Lemma}
\begin{document}

\maketitle

\begin{abstract}
In this paper, we consider a class of stochastic optimal control problems with risk constraints that are expressed as bounded probabilities of failure for particular initial states. 
We present here a martingale approach that diffuses a risk constraint into a martingale to construct time-consistent control policies. The martingale stands for the level of risk tolerance that is contingent on available information over time. By augmenting the system dynamics with the controlled martingale, the original risk-constrained problem is transformed into a stochastic target problem. 
We extend the incremental Markov Decision Process (iMDP) algorithm to approximate arbitrarily well an optimal feedback policy of the original problem by sampling in the augmented state space and computing proper boundary conditions for the reformulated problem.
We show that the algorithm is both probabilistically sound and asymptotically optimal. 
The performance of the proposed algorithm is demonstrated on motion planning and control problems subject to bounded probability of collision in uncertain cluttered environments.

\end{abstract}

\section{Introduction}
Controlling dynamical systems in uncertain environments is a fundamental and essential problem in several fields, ranging from robotics~\cite{Kuwata.Teo.ea:TCST09,thrun2005}, healthcare~\cite{todorov.neural_comp05,alterovitz.simeon.ea.robotics.2008} to management science, economics and finance~\cite{fleming.stein.jour_bank_finance04,sethi.thompson.book06}.
Given a system with dynamics described by a controlled diffusion process, a stochastic optimal control problem is to find an optimal feedback policy to optimize an objective function.
Risk management has always been an important part of stochastic optimal control problems to guarantee safety during the execution of control policies. For instance, in critical applications such as self-driving cars and robotic surgery, regulatory authorities can impose a threshold of failure probability during operation of these systems. Thus, finding control policies that fully respect this type of constraint is important in practice.

%For instance, in urban navigation applications, it is desirable that autonomous cars depart from origins to reach destinations with minimum energy and at the same time maintain bounded probability of collision. Ensuring such performance is critical before deploying autonomous cars in real life.  

There has been intensive literature on stochastic optimal control without risk constraints. Even in this setting, it is well-known that closed-form or exact algorithmic solutions for general continuous-time, continuous-space stochastic optimal control problems are computationally challenging~\cite{Vincent.Tsitsiklis.Complexity.2000}. 
Thus, many approaches have been proposed to investigate approximate solutions of such problems.
Deterministic approaches such as discrete Markov Decision Process approximation~\cite{chow.tsitsiklis.autocontrol.1991,munos.machinelearning.2001} and solving associated Hamilton-Jacobi-Bellman (HJB) PDEs~\cite{Lars.FiniteDifference.HJB.1997,Wang:2003:NSH:859448.859457,Boulbrachene.FiniteEllement.HJB.2004} have been proposed, but the complexities of these approaches scale poorly with the dimension of the state space. 
In \cite{Rust-97,Rust:97b, Vincent.Tsitsiklis.Complexity.2000}, the authors show that randomized algorithms (or sampling-based algorithms) provide a possibility to alleviate the curse of dimensionality by sampling the state space while assuming discrete control inputs.
Recently, in \cite{huynh.karaman.ea:icra12,Huynh2012.ArXiV}, a new computationally-efficient sampling-based algorithm called the incremental Markov Decision Process (iMDP) algorithm has been proposed to provide asymptotically-optimal solutions to problems with continuous control spaces. 

Built upon the approximating Markov chain method \cite{Kushner2000,kushner1977probability}, the iMDP algorithm constructs a sequence of finite-state Markov Decision Processes (MDPs) that consistently approximate the original continuous-time stochastic dynamics. 
Using the rapidly-exploring sampling technique \cite{Lavalle.98} to sample in the state space, iMDP forms the structures of finite-state MDPs randomly over iterations. Control sets for states in these MDPs are constructed or sampled properly in the control space. The finite models serve as incrementally refined models of the original problem. 
Consequently, distributions of approximating trajectories and control processes returned from these finite models approximate arbitrarily well distributions of optimal trajectories and optimal control processes of the original problem. 
The iMDP algorithm also maintains low time complexity per iteration by asynchronously computing Bellman updates in each iteration.
There are two main advantages when using the iMDP algorithm to solve stochastic optimal control problems. First, the iMDP algorithm provides a method to compute optimal control policies without the need to derive and characterize viscosity solutions of associated HJB equations. Second, the algorithm is suitable for various online robotics applications without \textit{a priori} discretization of the state space.

Risk management in stochastic optimal control has also been received extensive attention by researchers in several fields.
In robotics, a common risk management problem is \textit{chance-constrained optimization} \cite{BlackmoreOnoBektassovWilliamsIEEE-TRO10,Banerjee11ACC,MarcoChow2012}. Chance constraints specify that starting from \textit{a given initial state}, the \textit{time}-$\mathit{0}$ probability of success must be above a given threshold where success means reaching goal areas safely. Alternatively, we call these constraints \textit{risk constraints} if we concern more about failure probabilities. Despite intensive work done to solve this problem in last 20 years, designing computationally-efficient algorithms that respect chance constraints for systems with continuous-time dynamics is still an open question. The Lagrangian approach~\cite{KirkOptimalControl,Kosmol1993,Kosmol2001907} is a possible method for solving the mentioned constrained optimization. However, this approach requires numerical procedures to compute Lagrange multipliers before obtaining a policy, which is computationally demanding for high dimensional systems and unsuitable for online robotics applications. 

In another approach (see, e.g.,~\cite{Blackmore06aprobabilistic,OnoW08,Luders10_GNC,Luders13_GNC,luders2010bounds}), most previous works use discrete-time multi-stage formulations to model this problem. In these modified formulations, failure is defined as collision with convex obstacles which can be represented as a set of linear inequalities. Probabilities of safety for states at different time instants as well as for the entire path are pre-specified by users. The proposed algorithms to solve these formulations often involve two main steps. In the first step, these algorithms often use heuristic~\cite{Blackmore06aprobabilistic} or iterative~\cite{OnoW08} \textit{risk allocation} procedures to identify the tightness of different constraints. In the second step, the formulations with identified active constraints can be solved using mixed integer-linear programming with possible assistance of particle sampling~\cite{BlackmoreOnoBektassovWilliamsIEEE-TRO10} and linear programming relaxation~\cite{Banerjee11ACC}. Computing risk allocation fully is computationally intensive. Thus, in more recent works~\cite{Luders10_GNC,Luders13_GNC,luders2010bounds}, the authors make use of the Rapidly-Exploring Random Tree (RRT) and RRT$^*$ algorithms to build tree data structures that also store incremental approximate allocated risks at tree nodes. Based on the RRT$^*$ algorithm, the authors have proposed the Chance-Constrained-RRT$^*$ (CC-RRT$^*$) algorithm that would provide asymptotically-optimal and probabilistically-feasible trajectories for linear Gaussian systems subject to process noise, localization error, and uncertain environmental
constraints. In addition, the authors have also proposed a new objective function that allows users to trade-off between minimizing path duration and risk-averse behavior by adjusting the weights of these additive components in the objective function.
%For example, in \cite{BlackmoreOnoBektassovWilliamsIEEE-TRO10}, the authors use particle sampling to approximate the stochastic control problem as a deterministic problem. Subsequently, the approximate deterministic problem can be solved using mixed integer-linear programming. Another work by Banerjee et al. \cite{Banerjee11ACC} provides a linear programming relaxation of the constrained stochastic control problem and uses regression to approximate the objective function of the relaxed problem to find a near-optimal solution.

We note that the modified formulations in the above approach do not preserve well the intended guarantees of the original chance constraint formulation. % that specifies the bounded probability of failure from time-$0$ for only a particular initial state. 
In addition, %although the recent developed algorithms can provide asymptotically-optimal and probabilistically-feasible trajectories, 
the approach requires the direct representation of convex obstacles into the formulations. Therefore, solving the resulting mixed integer-linear programming in the presence of a large number of obstacles is computationally demanding. The proposed algorithms are also over-conservative due to loose union bounds when performing the risk allocation procedures. To counter these conservative bounds, CC-RRT$^*$ constructs more aggressive trajectories by adjusting the weights of the path duration and risk-averse components in the objective function. As a result, it is hard to automate the selection of trajectory patterns. 

Moreover, specifying in advance probabilities of safety for states at different time instants and for the entire path can lead to policies that have irrational behaviors due inconsistent risk preference over time. This phenomenon is known as \textit{time-inconsistency} of control policies. For example, when we execute a control policy returned by one of the proposed algorithms, due to noise, the system can be in an area surrounded by obstacles at some later time $t$, it would be safer if the controller takes into account this situation and increases the required probability of safety at time  $t$ to encourage careful maneuvers. Similarly, if the system enters an obstacle-free area, the controller can reduce the required probability of safety at time $t$ to encourage more aggressive maneuvers. Therefore, to maintain time-consistency of control policies, the controller should adjust safety probabilities so that they are \textit{contingent} on available information along the controlled trajectory.   

In other related works~\cite{chen2004dynamic,piunovskiy2006dynamic,mannor2011mean}, several authors have proposed new formulations in which the objective functions and constraints are evaluated using (different) single-period risk metrics. However, these formulations again lead to potential inconsistent behaviors as risk preferences change in an irrational manner between periods~\cite{huang2011price}. Recently, in~\cite{MarcoChow2012}, the authors used Markov dynamic time-consistent risk measures~\cite{ruszczynski2006optimization,ruszczynski2006conditional,rudloff2011time} to assess the risk of future cost stream in a consistent manner and established a dynamic programming equation for this modified formulation. The resulting dynamic programming equation has functionals over the state space as control variables. When the state space is continuous, the control space has infinite dimensionality, and therefore, solving the dynamic programming equation in this case is computationally challenging. 
%
%Very recently, in~\cite{huynh2012probabilistically}, the authors modify the problem formulation to enforce the risk constraint for \textit{all states} along executed trajectories. By extending the iMDP algorithm, the authors show that with probability one, the sequence of policies returned from the extended iMDP algorithm is both probabilistically sound and asymptotically optimal. Nevertheless, obtained solutions are more conservative compared to solutions that would be obtained from the original formulation.

In mathematical finance, closely-related problems have been studied in the context of hedging with portfolio constraints where constraints on terminal states are enforced almost surely (a.s.), yielding so-called \textit{stochastic target problems}~\cite{bouchard2011weak,bouchard2010optimal,bouchard2009stochastic,touzi2013optimal,bouchard2012weak}. Research in this field focuses on deriving HJB equations for this class of problems. Recent analytical tools such as weak dynamic programming~\cite{bouchard2011weak} and geometric dynamic programming~\cite{SoTo02b, BV10:1} have been developed to achieve this goal. These tools allow us to derive HJB equations and find viscosity solutions for a larger class of problems while avoiding measurability issues.

In this paper, we consider the above risk-constrained problems. That is, we investigate stochastic optimal control problems with risk constraints that are expressed in terms of bounded failure probabilities for \textit{particular initial states}. We present here a martingale approach to solve these problems such that obtained control policies are time-consistent with the initial threshold of failure probability. The martingale represents the level of risk tolerance that is contingent on available information over time. Thus, the martingale approach transforms a risk-constrained problem into a stochastic target problem. By sampling in the augmented state space and computing proper boundary conditions of the reformulated problem, we extend the iMDP algorithm to compute anytime solutions after a small number of iterations. When more computing time is allowed, the proposed algorithm refines the solution quality in an efficient manner.

The main contribution of this paper is twofold. First, we present a novel martingale approach that fully respects the considered risk constraints for systems with continuous-time dynamics in a time-consistent manner. The approach enable us to manage risk in several practical robotics applications without directly deriving HJB equations, which are hard to obtain in many situations. Second, we propose a computationally-efficient algorithm that guarantees probabilistically-sound and asymptotically-optimal solutions to the stochastic optimal control problem in the presence of risk constraints. That is, all constraints are satisfied in a suitable sense, and the objective function is minimized as the number of iterations approaches infinity. We demonstrate the effectiveness of the proposed algorithm on motion planning and control problems subject to bounded collision probability in uncertain cluttered environments.

This paper is organized as follows. A formal problem definition is given in Section~\ref{section:problem}. In Section~\ref{section:martingale}, we discuss the martingale approach and the key transformation. The extended iMDP algorithm is described in Section~\ref{section:algorithm}. The analysis of the proposed algorithm is presented in Section~\ref{section:analysis}. We present simulation examples and experimental results in Section~\ref{section:experiments} and conclude the paper in Section~\ref{section:conclusions}.

\section{Problem Definition} \label{section:problem}

In this section, we present a generic stochastic optimal control formulation with definitions and technical assumptions as discussed in~\cite{huynh.karaman.ea:icra12,Huynh2012.ArXiV,huynh2012probabilistically}. We also explain how to formulate risk constraints.

\paragraph*{Stochastic Dynamics}

Let $d_x$, $d_u$, and $d_w$ be positive integers. 
Let $S$ be a compact subset  of $\reals^{d_x}$, which is the closure of its interior $S^o$ and has a smooth boundary $\partial S$. Let a compact subset ${U}$ of $\reals^{d_u}$ be a control set. 
The state of the system at time $t$ is $x(t) \in S$, which is fully observable at all times.

Suppose that a stochastic process $\{w(t); t \ge 0\}$ is a $d_w$-dimensional Brownian motion
on some probability space. We define $\{\mathcal{F}_t ; t \ge 0\}$ as the augmented filtration generated by the Brownian motion $w(\cdot)$.
Let a control process $\{u(t) ; t \ge 0\}$ be a $U$-valued, measurable random process also defined on the same probability space
such that the pair $(u(\cdot),w(\cdot))$ is admissible~\cite{huynh.karaman.ea:icra12}. Let the set of all such control processes be $\mathcal{U}$. Let $\reals^{d_x \times d_w}$ denote the set of all $d_x$ by $d_w$ real matrices. We consider systems with dynamics described by the controlled diffusion process:
\begin{align}
dx(t) = f(x(t),u(t)) \, dt + F(x(t),u(t)) \, dw(t), \forall t \ge 0 \label{eqn:system}
\end{align}
where $f: S \times {U} \to \reals^{d_x}$ and $F : S \times U \to \reals^{d_x \times d_w}$ are bounded measurable and continuous functions as long as $x(t) \in S^o$. The initial state $x(0)$ is a random vector in $S$. We assume that the matrix $F(\cdot,\cdot)$ has full rank. The continuity requirement of $f$ and $F$ can be relaxed with mild assumptions~\cite{Kushner2000,huynh.karaman.ea:icra12} such that we still have a weak solution to Eq.~\eqref{eqn:system} that is unique in the weak sense~\cite{Karatzas1991}.

\paragraph*{Cost-to-go Function and Risk Constraints}
We define the {\em first exit time} $\FirstExit{u}{z}: \mathcal{U} \times S \rightarrow [0,+\infty]$ under a control process $u(\cdot) \in \mathcal{U}$ starting from $x(0)=z \in S$ as 
$$
\FirstExit{u}{z} = \inf\big\{ t : x(0)=z, \ x(t) \notin {S^o} \mbox{, and Eq.\eqref{eqn:system}} \big\}.
$$
In other words, $\FirstExit{u}{z}$ is the first time that the trajectory of the dynamical system given by Eq.~\eqref{eqn:system} starting from $x(0)=z$ hits the boundary $\partial S$ of $S$. 
The random variable $\FirstExit{u}{z}$ can take value $\infty$ if the trajectory $x(\cdot)$ never exits $S^o$.

The expected cost-to-go function under a control process $u(\cdot)$ is a mapping from $S$ to $\reals$ defined as
\begin{align}
\resizebox{0.9\hsize}{!}{$\CostToGo{z}{u} = {\EE}_0^z\left[\int_0^\FirstExit{u}{z} \alpha^t \, g\big(x(t), u(t)\big)\,dt + \alpha^{\FirstExit{u}{z}}h(x(\FirstExit{u}{z}))\right]$}, \label{eqn:costtogo}
\end{align}
where $\EE_t^z$ denotes the conditional expectation given $x(t)=z$, and $g : S \times U \to \reals$, $h : S \to \reals$ are bounded measurable and continuous functions, called the cost rate function and the terminal cost function, respectively, and $\alpha \in [0,1)$ is the discount rate.
We further assume that $g(x,u)$ is uniformly H\"{o}lder continuous in $x$ with exponent $2\rho \in (0,1]$ for all $u \in U$. 
We note that the discontinuity of $g,h$ can be treated as in~\cite{Kushner2000,huynh.karaman.ea:icra12}.

Let $\Gamma \subset \partial S$ be a set of failure states, and $\eta \in [0,1]$ be a threshold for risk tolerance given as a parameter. We consider a risk constraint that is specified for an initial state $x(0)=z$ under a control process $u(\cdot)$ as follows:
\begin{align*}
P_0^z (x(\FirstExit{u}{z}) \in \Gamma) \leq \eta,
\end{align*}
where $P_t^z$ denotes the conditional probability at time $t$ given $x(t)=z$. That is, controls that drive the system from time $0$ until the first exit time must be consistent with the choice of $\eta$ and initial state $z$ at time $0$. Intuitively, the constraint enforces that starting from \textit{a given state $z$} at time $t=0$, if we execute a control process $u(\cdot)$ for $N$ times, when $N$ is very large, there are at most $N\eta$ executions resulting in failure. Control processes $u(\cdot)$ that satisfy this constraint are called time-consistent. To have time-consistent control processes, the risk tolerance along controlled trajectories must vary consistently with the initial choice of risk tolerance $\eta$ based on available information over time.

Let $\overline{\reals}$ be the extended real number set. The {\em optimal cost-to-go function} $J^*: S \to \overline{\reals}$ is defined as follows \footnote{The semicolon in \OptimalCostToGo{z ; \eta} signifies that $\eta$ is a parameter.} \footnote{Compared to~\cite{huynh2012probabilistically}, we consider a larger set of control processes than the set of Markov control processes here. We will restrict again to Markov control processes in the reformulated problem.}:
\begin{align}
\mathcal{OPT}1: \ \ \ &\OptimalCostToGo{z ; \eta} = \inf_{u(\cdot) \in \mathcal{U}} \CostToGo{z}{u} \label{eqn:opt1}\\
\textit{ s/t } \ \ \ \ \ &P_0^z (x(\FirstExit{u}{z}) \in \Gamma) \leq \eta \ \textit{ and \ Eq.~\eqref{eqn:system}} \label{eqn:opt1a}.
\end{align} 
A control process $u^*(\cdot)$ is called optimal if $ \CostToGo{z}{u^*}=J^*(z; \eta)$. For any $\epsilon >0$, a control process $u(\cdot)$ is called an $\epsilon$-optimal policy if $|\CostToGo{z}{u}-J^*(z;\eta)| \leq \epsilon$.

We call a sampling-based algorithm probabilistically-sound if the probability that a solution returned by the algorithm is feasible approaches one as the number of samples increases. We also call a sampling-based algorithm asymptotically-optimal if the sequence of solutions returned from the algorithm converges to an optimal solution in probability as the number of samples approaches infinity. Solutions returned from algorithms with such properties are called probabilistically-sound and asymptotically-optimal.

In this paper, we consider the problem of computing the optimal cost-to-go function $J^*$ and an optimal control process $u^*$ if obtainable. Our approach, outlined in Section~\ref{section:algorithm}, approximates the optimal cost-to-go function and an optimal policy in an anytime fashion using an incremental sampling-based algorithm that is both probabilistically-sound and asymptotically-optimal.

\section{Martingale Approach} \label{section:martingale}
We now present the martingale approach that transforms the considered risk-constrained problem into an equivalent stochastic target problem. The following lemma to diffuse risk constraints is a key tool for our transformation.
\subsection{Diffusing Risk Constraints}
\begin{lemma}[see \cite{touzi2013optimal,bouchard2012weak}]
From $x(0)=z$, a control process $u(\cdot)$ is feasible for $\mathcal{OPT}1$ if and only if there exists a square-integrable (but possibly unbounded) process $c(\cdot) \in \reals^{d_w}$ and a martingale $q(\cdot)$ satisfying:
\begin{enumerate}
	\item $q(0)=\eta$, and $dq(t)=c^T(t)dw(t)$,
	\item For all $t$, $q(t) \in [0,1]$ a.s.,
	\item $1_\Gamma(x(\FirstExit{u}{z})) \leq q(\FirstExit{u}{z})$ a.s,
\end{enumerate}
where $1_\Gamma(x)=1$ if and only if $x \in \Gamma$ and $0$ otherwise. The martingale $q(t)$ stands for the level of risk tolerance at time $t$. We call $c(\cdot)$ a martingale control process.
\end{lemma}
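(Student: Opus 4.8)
The plan is to prove both implications using two classical facts about the augmented Brownian filtration $\{\mathcal{F}_t\}$: the martingale representation theorem, which supplies the integrand $c(\cdot)$ in $dq(t)=c^T(t)\,dw(t)$, and the optional sampling theorem for bounded (hence uniformly integrable) martingales, which lets us evaluate $q$ at the stopping time $\FirstExit{u}{z}$. Throughout I write $T := \FirstExit{u}{z}$ and adopt the convention that on $\{T=\infty\}$ we set $q(T):=\lim_{t\to\infty}q(t)$, which exists a.s. and in $L^1$ by bounded martingale convergence, and $1_\Gamma(x(T)):=0$.

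For the ``if'' direction, suppose $c(\cdot)$ and $q(\cdot)$ satisfy conditions 1--3. Since $q(t)\in[0,1]$, $q$ is a bounded martingale, so optional sampling gives $\EE_0^z[q(T)]=q(0)=\eta$. Combining this with condition 3,
\[
P_0^z(x(T)\in\Gamma)=\EE_0^z[1_\Gamma(x(T))]\le\EE_0^z[q(T)]=\eta,
\]
so $u(\cdot)$ is feasible for $\mathcal{OPT}1$. This direction is essentially immediate.

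For the ``only if'' direction, suppose $u(\cdot)$ is feasible, i.e. $\EE_0^z[1_\Gamma(x(T))]\le\eta$. The natural candidate $M(t):=\EE_0^z[1_\Gamma(x(T))\mid\mathcal{F}_t]$ is a bounded martingale with terminal value $M(T)=1_\Gamma(x(T))$, but it has the wrong initial value $M(0)=P_0^z(x(T)\in\Gamma)$, which may be strictly below $\eta$; and merely shifting $M$ by the constant $\eta-M(0)$ would violate the upper bound in condition 2. Instead I would construct the desired \emph{terminal} value directly: I seek an $\mathcal{F}_T$-measurable random variable $Q$ with $Q\in[0,1]$, $Q\ge 1_\Gamma(x(T))$, and $\EE_0^z[Q]=\eta$. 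On $\{x(T)\in\Gamma\}$ the constraints force $Q=1$; on $\{x(T)\notin\Gamma\}$ I am free to choose $Q\in[0,1]$, and I only need $\EE_0^z[Q\,1_{\{x(T)\notin\Gamma\}}]=\eta-P_0^z(x(T)\in\Gamma)=:\delta$. Since $0\le\delta\le P_0^z(x(T)\notin\Gamma)$ (using $\eta\le 1$), taking $Q$ equal to the constant $\delta/P_0^z(x(T)\notin\Gamma)\in[0,1]$ on that event does the job. I then set $q(t):=\EE_0^z[Q\mid\mathcal{F}_t]$, which is a martingale with $q(0)=\eta$ and, because $Q\in[0,1]$, satisfies $q(t)\in[0,1]$ for all $t$ (condition 2); its terminal value is $q(T)=\EE_0^z[Q\mid\mathcal{F}_T]=Q\ge 1_\Gamma(x(T))$ (condition 3). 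Finally, the martingale representation theorem yields a square-integrable $c(\cdot)$ with $q(t)=\eta+\int_0^t c^T(s)\,dw(s)$, i.e. $dq(t)=c^T(t)\,dw(t)$ (condition 1).

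The main obstacle is exactly this construction in the forward direction: reconciling the fixed initial value $q(0)=\eta$, the pointwise bound $q(t)\in[0,1]$, and the terminal domination $q(T)\ge 1_\Gamma(x(T))$ simultaneously. The device that resolves it is to specify the terminal variable $Q$ first and propagate it backward by conditioning, rather than pushing the indicator's own martingale forward. The remaining care is routine: justifying optional sampling at the possibly infinite stopping time $T$ via uniform integrability of the bounded martingale, verifying the $\mathcal{F}_T$-measurability of $1_\Gamma(x(T))$ and of $Q$, and handling $\{T=\infty\}$ through the stated conventions (and, in the degenerate case $P_0^z(x(T)\notin\Gamma)=0$, the slack $\delta$ is also $0$ and one simply takes $Q=1_\Gamma(x(T))$).
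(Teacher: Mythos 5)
Your proof is correct, and your ``if'' direction coincides with the paper's. For the ``only if'' direction, however, you take a genuinely different route to the same witness problem. The paper starts from the same natural martingale $\overline{q}(t)=\EE[1_\Gamma(x(\FirstExit{u}{z}))\mid\mathcal{F}_t]$, shifts it by the constant slack $\eta-\eta_0$ to fix the initial value, and then---precisely to repair the violation of the upper bound that you flag as the obstacle---stops the shifted martingale at the first time $\tau$ it reaches $1$, setting $q(t)=\widehat q(t)1_{t\le\tau}+1_{t>\tau}$; condition 3 is then verified by the case split $\tau<\FirstExit{u}{z}$ versus $\tau=\FirstExit{u}{z}$. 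You instead prescribe the terminal variable $Q$ first ($Q=1$ on the failure event, the constant $\delta/P_0^z(x(\FirstExit{u}{z})\notin\Gamma)$ elsewhere, with the degenerate case handled separately) and define $q$ by conditioning backward, so that $q(0)=\eta$, the bound $q(t)\in[0,1]$, and the terminal domination all hold at once with no stopping-time argument or case analysis. In effect your construction concentrates the slack $\eta-\eta_0$ on the non-failure event, whereas the paper's spreads it uniformly over paths until the bound would be breached; both yield valid witnesses and both then invoke the martingale representation theorem identically. Your write-up is the more economical and also more careful about the points the paper leaves implicit (optional sampling at a possibly infinite exit time, measurability, the degenerate case); the paper's stopped-martingale version has the advantage that $q(t)$ remains interpretable along the whole path as the current conditional failure probability plus unspent tolerance, which matches the ``risk budget'' reading emphasized in the surrounding text.
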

\begin{proof}
Assuming that there exists $c(\cdot)$ and $q(\cdot)$ as above, due to the martingale property of $q(\cdot)$, we have:
\begin{align*}
P_0^z(x(\FirstExit{u}{z} ) \in \Gamma ) &= \EE \left[ 1_\Gamma(x(\FirstExit{u}{z})) | \mathcal{F}_0 \right] \\
&\leq \EE \left[ q(\FirstExit{u}{z}) | \mathcal{F}_0  \right] = q(0) =\eta.
\end{align*}
Thus, $u(\cdot)$ is feasible.

Now, let $u(\cdot)$ be a feasible control policy. Set $\eta_0 = P_0^z(x(\FirstExit{u}{z} )\in \Gamma)$. We note that $\eta_0 \leq \eta$. We define the martingale
$$
  \overline{q}(t) = \EE[1_\Gamma(x(\FirstExit{u}{z})) | \mathcal{F}_t].
$$
Since $\overline{q}(\FirstExit{u}{z}) \in [0,1]$, we infer that $\overline{q}(t) \in [0,1] $ almost surely. We now set 
$$\widehat{q}(t) = \overline{q}(t) + (\eta - \eta_0),$$
then $\widehat{q}(t)$ is a martingale with $\widehat{q}(0) = \overline{q}(0) + (\eta - \eta_0) = \eta_0 + (\eta - \eta_0) = \eta$ and $\widehat{q}(t) \geq 0$ almost surely.

Now, we define $\tau = \inf\{ t \in [0,\FirstExit{u}{z}] \ | \ \widehat{q}(t) \geq 1 \}$, which is a stopping time. 
Thus, $$q(t) = \widehat{q}(t)1_{t \leq \tau} + 1_{t > \tau},$$ as a stopped process of the martingale $\widehat{q}(t)$ at $\tau$, is a martingale with values in [0,1] a.s. 

If $\tau < \FirstExit{u}{z}$, we have 
$$1_\Gamma(x(\FirstExit{u}{z})) \leq 1 = q(\FirstExit{u}{z}),$$ 
and if $\tau = \FirstExit{u}{z}$, we have 
\begin{align*}
q(\FirstExit{u}{z}) &= \EE[1_\Gamma(x(\FirstExit{u}{z})) | \mathcal{F}_{\FirstExit{u}{z}}] + (\eta - \eta_0) \\
                      &=  1_\Gamma(x(\FirstExit{u}{z})) + (\eta - \eta_0) \geq  1_\Gamma(x(\FirstExit{u}{z})).
\end{align*} 
Hence, $q(\cdot)$ also satisfies that $1_\Gamma(x(\FirstExit{u}{z})) \leq q(\FirstExit{u}{z})$.

The control process $c(\cdot)$ exists due to the martingale representation theorem~\cite{lamberton2008introduction}, which yields $dq(t)=c^T(t)dw(t)$. We however note that $c(t)$ is possibly unbounded. We also emphasize that the risk tolerance $\eta$ becomes the initial value of the martingale $q(\cdot)$.
\end{proof}

\subsection{Stochastic Target Problem} \label{subsection:reformulation}
Using the above lemma, we augment the original system dynamics with the martingale $q(t)$ into the following form:
\begin{align}
\resizebox{.9\hsize}{!} {$d\begin{bmatrix} x(t) \\ q(t) \end{bmatrix}= \begin{bmatrix} f(x(t),u(t)) \\ 0 \end{bmatrix} dt +  \begin{bmatrix} F(x(t),u(t)) \\ c^T(t) \end{bmatrix}dw(t),$} \label{eqn:extendedsystem}
\end{align}
where $(u(\cdot),c(\cdot))$ is the control process of the above dynamics. The initial value of the new state is $(x(0),q(0))=(z,\eta)$. We will refer to the augmented state space $S \times [0,1]$ as $\overline{S}$ and the augmented control space $U \times \reals^{d_w}$ as $\overline{U}$. We also refer to the nominal dynamics and diffusion matrix of Eq.~\eqref{eqn:extendedsystem} as $\overline{f}(x,q,u,c)$ and $\overline{F}(x,q,u,c)$ respectively.

It is well-known that in the following reformulated problem, optimal control processes are Markov controls~\cite{touzi2013optimal,bouchard2012weak,Oksendal:1992}. Thus, let us now focus on the set of Markov controls that depend only on the current state, i.e., $(u(t),c(t))$ is a function only of $(x(t),q(t))$, for all $t \ge 0$. 
A function $\varphi: \overline{S} \rightarrow \overline{U}$ represents a \textit{Markov or feedback control policy}, which is known to be admissible with respect to the process noise $w(\cdot)$. 
Let $\Psi$ be the set of all such policies $\varphi$. Let $\mu: \overline{S} \rightarrow U$ and $\kappa: \overline{S} \rightarrow \reals^{d_w}$ so that $\varphi=(\mu,\kappa)$.
We rename $\FirstExit{u}{z}$ to \FirstExit{\varphi}{z} for the sake of notation clarity.
Using these notations, $\mu(\cdot,1)$ is thus a Markov control policy for the unconstrained problem, i.e. the problem without the risk constraint, that maps from $S$ to $U$. Henceforth, we will \textit{use $\mu(\cdot)$ to refer to $\mu(\cdot,1)$} when it is clear from the context. Let $\Pi$  be the set of all such Markov control policies $\mu(\cdot)$ on $S$. %Clearly, we have $\Pi \subset \mathcal{U}$.

Now, let us rewrite cost-to-go function $\CostToGo{z}{u}$ in Eq.~\eqref{eqn:costtogo} for the threshold $\eta$ at time $0$ in a new form:
\begin{align}
J_{\varphi}(z,\eta) &= {\EE} \Bigg [ \int_0^\FirstExit{\varphi}{z} \alpha^t \, g\big(x(t), \mu(x(t),q(t))\big)\,dt \nonumber \\
           &\ \ \ \ \ \ \ \ \ +  \alpha^{\FirstExit{\varphi}{z}}h(x(\FirstExit{\varphi}{z})) \Big | (x,q)(0)=(z,\eta) \Bigg ]. \label{eqn:reformulatedcosttogo}
\end{align}
We therefore transform the risk-constrained problem $\mathcal{OPT}1$ into a stochastic target problem as follows\footnote{The comma in $\OptimalCostToGo{z,\eta}$ signifies that $\eta$ is a state component rather than a parameter, and $\OptimalCostToGo{z,\eta}$ is equal to $\OptimalCostToGo{z;\eta}$ in the previous formulation.}:
\begin{align}
\mathcal{OPT}2: \ \ \ &\OptimalCostToGo{z,\eta} = \inf_{\varphi \in \Psi} J_{\varphi}(z,\eta) \label{eqn:stpcost} \\ 
\text{s/t} \quad \quad &1_\Gamma(x(\FirstExit{\varphi}{z})) \leq q(\FirstExit{\varphi}{z}) \ \ \textit{a.s.} \ \textit{ and \ Eq.~\eqref{eqn:extendedsystem}}. \label{eqn:asconstraint}
\end{align}
The constraint in the above formulation specifies the relationship of random variables at the terminal time as target, and hence the name of this formulation~\cite{touzi2013optimal,bouchard2012weak}. In this formulation, we solve for feedback control policies $\varphi$ for all $(z,\eta) \in \overline{S}$ instead of a particular choice of $\eta$ for $x(0)=z$ at time $t=0$. We note that in this formulation, boundary conditions are not fully specified \textit{a priori}. In the following subsection, we discuss how to remove the constraint in Eq.~\eqref{eqn:asconstraint} by constructing its boundary and computing the boundary conditions.

\subsection{Characterization and Boundary Conditions} \label{subsection:characterization}
The domain of the stochastic target problem in $\mathcal{OPT}2$ is: $$D= \{ (z,\eta) \in \overline{S} \ | \ \exists \varphi \in \Psi \ \textit{s/t} \ 1_\Gamma(x(\FirstExit{\varphi}{z})) \leq q(\FirstExit{\varphi}{z}) \ \textit{a.s.}\}.$$ 
By the definition of the risk-constrained problem $\mathcal{OPT}1$, we can see that if $(z,\eta) \in D$ then $(z,\eta') \in D$ for any $\eta < \eta' \leq 1$. Thus, for each $z\in S$, we define 
\begin{align}
\gamma(z) = \inf \ \{ \eta \in [0,1] \ | \ (z,\eta) \in D \},
\end{align} 
as the infimum of risk tolerance at $z$. Therefore, we also have: 
\begin{align}
\gamma(z)= \inf_{u \in \mathcal{U}} P_0^z \big ( x(\FirstExit{u}{z}) \in \Gamma) = \inf_{u \in \mathcal{U}} \EE_0^z \Big [ 1_\Gamma(x(\FirstExit{u}{z})) \Big]. \label{eqn:minCollisionProb}
\end{align}
Thus, the boundary of $D$ is 
\begin{align}
\partial D = &S \times \{1\} \cup \{(z,\gamma(z)) \ | \ z \in S \}  \nonumber \\
             &\cup \{(z,\eta) \ | \ z \in \partial S, \eta \in [\gamma(z),1] \}.
\end{align} 
For states in $\{(z,\eta) \ | \ z \in \partial S, \eta \in [\gamma(z),1] \}$, the system stops on $\partial S$ and takes terminal values according to $h(\cdot)$.

The domain $D$ is illustrated in Fig.~\ref{figDomain}. In this example, the state space $S$ is a bounded two-dimensional area with boundary $\partial S$ containing a goal region $G$ and an obstacle region $\Gamma=Obs$. The augmented state space $\overline{S}$ augments $S$ with an extra dimension for the martingale state $q$. The infimum probability of reaching into $\Gamma$ from states in $S$ is depicted as $\gamma$. As we can see, $\gamma$ takes value $1$ in $\Gamma$. The volume between $\gamma$ and the hyper-plane $q=1$ is the domain $D$ of $\mathcal{OPT}2$.

\begin{figure*}[t]
\begin{center}
  \subfigure[A domain of $\mathcal{OPT}1$.]{
  \label{figDomain}
  \includegraphics[width=85mm]{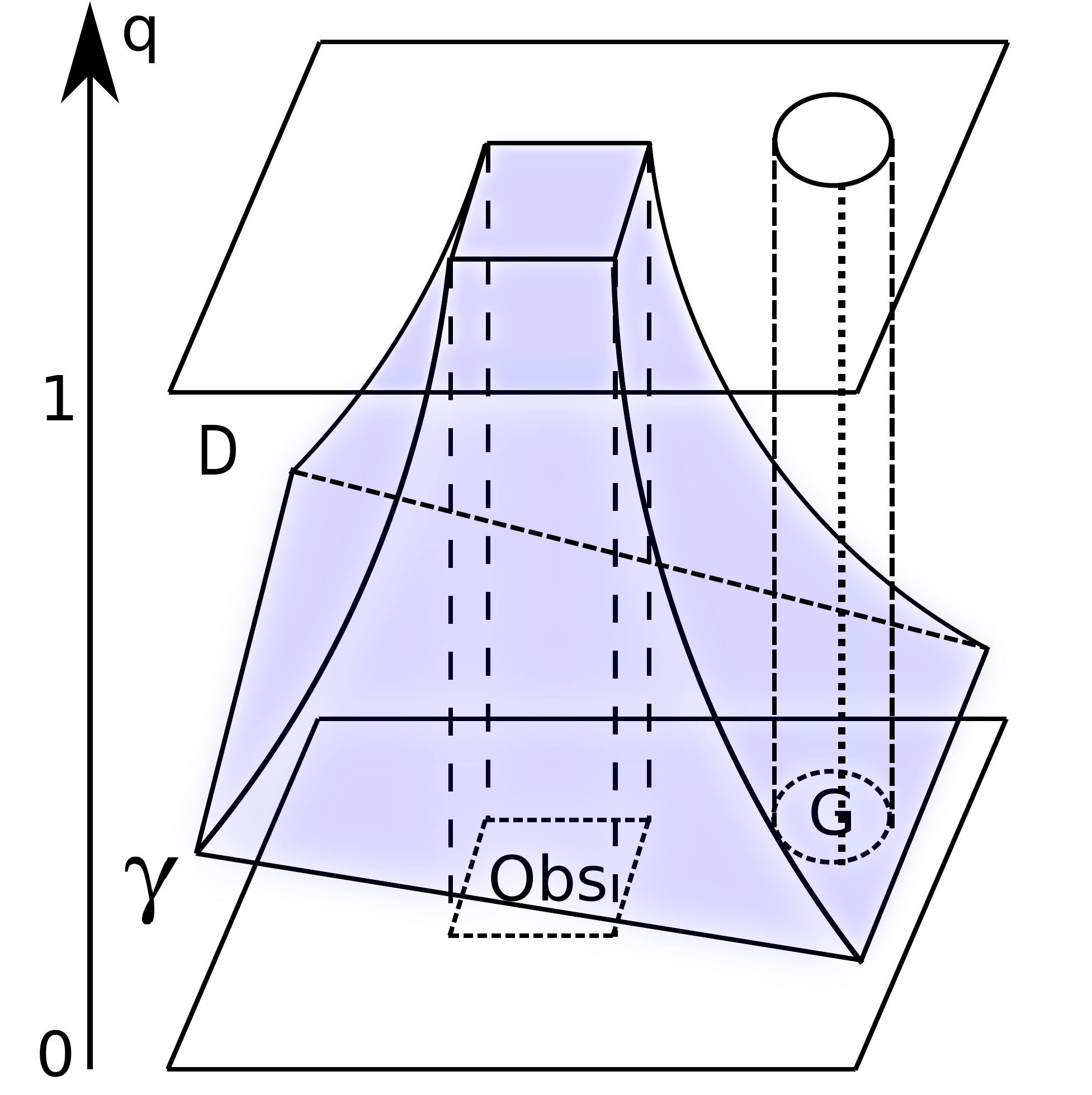}
  }
  \hfill
  \subfigure[Failure probabilities due to optimal policies of the unconstrained problem.]{
  \label{figExtraDomain}
  \includegraphics[width=85mm]{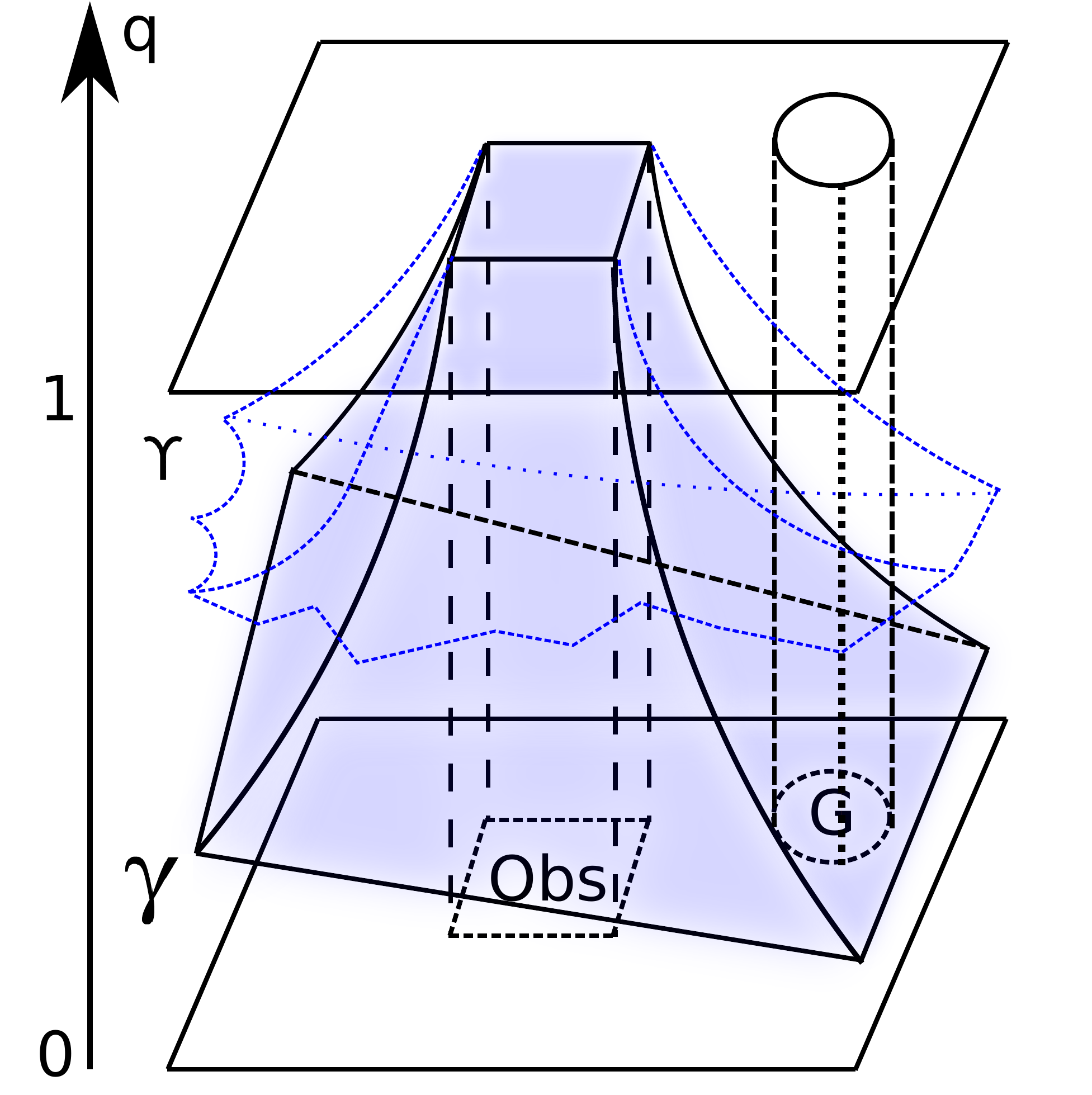}  
  }
  \caption{In Fig.~\ref{figDomain}, we show an example of the domain of $\mathcal{OPT}2$. The state space $S$ is a bounded two-dimensional area with boundary $\partial S$ containing a goal region $G$ and an obstacle region $\Gamma=Obs$. The augmented state space $\overline{S}$ augments $S$  with an extra dimension for the martingale state $q$.  The infimum probability of reaching into $\Gamma$ from states in $S$ is depicted as $\gamma$, which takes value $1$ in $\Gamma$. The volume between $\gamma$ and the hyper-plane $q=1$ is the domain $D$ of $\mathcal{OPT}2$. In Fig.~\ref{figExtraDomain}, we show an illustration of the failure probability function $\Upsilon$ due to an optimal control policy $\mu^*(\cdot,1)$ of the unconstrained problem. We plot $\Upsilon$ for the same two-dimensional example. By the definitions of $\gamma$ and $\Upsilon$, we have $\Upsilon \geq \gamma$.}
\end{center}
\end{figure*}

Now, let $\eta=1$, we notice that $\OptimalCostToGo{z,1}$ is the optimal cost-to-go from $z$ for the stochastic optimal problem without the risk constraint:
$$
\OptimalCostToGo{z,1} = \inf_{u \in \mathcal{U}} \CostToGo{z}{u}. 
$$ 
An optimal control process that solves this optimization problem is given by a Markov policy $\mu^*(\cdot,1) \in \Pi$. We now define the failure probability function $\Upsilon: S \rightarrow [0,1]$ under such an optimal policy $\mu^*(\cdot,1)$ as follows:
\begin{align}
      \Upsilon(z) = \EE \big[ 1_\Gamma(x(T_{\mu^*}^z)) \big ], \ \forall z \in S, \label{eqn:Upsilon}
\end{align}
where $T_{\mu^*}^z$ is the first exit time when the system follows the control policy $\mu^*(\cdot,1)$ from the initial state $z$. By the definitions of $\gamma$ and $\Upsilon$, we can recognize that $\Upsilon(z) \geq \gamma(z)$ for all $z \in S$. Figure~\ref{figExtraDomain} shows an illustration of $\Upsilon$ for the same example in Fig.~\ref{figDomain}.

Since following the policy $\mu^*(\cdot,1)$ from an initial state $z$ yields a failure probability $\Upsilon(z)$, we infer that:
\begin{align}
J^*(z,1)=J^*(z,\Upsilon(z)). \label{eqn:followingMustar}
\end{align}
From the definition of the problem $\mathcal{OPT}1$, we also have: 
\begin{align}
0 \leq \eta < \eta' \leq 1 \Rightarrow J^*(z,\eta) \geq J^*(z,\eta'). \label{eqn:compare}
\end{align}
Thus, for any $\Upsilon(z) < \eta < 1$, we have:
\begin{align}
J^*(z,1) \leq J^*(z,\eta) \leq J^*(z,\Upsilon(z)).  \label{eqn:followingMustar1}
\end{align}
Combining Eq.~\eqref{eqn:followingMustar} and Eq.~\eqref{eqn:followingMustar1}, we have:
\begin{align}
\forall \ \eta \in [\Upsilon(z), 1] \Rightarrow J^*(z,\eta)=J^*(z,1). \label{eqn:followingMustar2}
\end{align}
As a consequence, when we start from an initial state $z$ with a risk threshold $\eta$ that is at least $\Upsilon(z)$, it is optimal to execute an optimal control policy of the corresponding unconstrained problem from the initial state $z$.

It also follows from Eq.~\eqref{eqn:compare} that reducing the risk tolerance from $1.0$ along the controlled process can not reduce the optimal cost-to-go function evaluated at $(x(t),q(t)=1.0)$. Thus, we infer that for augmented states $(x(t),q(t))$ where $q(t)=1.0$, the optimal martingale control $c^*(t)$ is 0.

Now, under all admissible policies $\varphi$, we can not obtain a failure probability for an initial state $z$ that are lower than $\gamma(z)$. Thus, it is clear that $J^*(z,\eta)=+ \infty$ for all $0 \leq \eta < \gamma(z)$. The following lemma characterizes the optimal martingale control $c^*(t)$ for augmented states $(x(t),q(t)=\gamma(x(t)))$.

\begin{lemma}
Given the problem definition as in Eqs.~\eqref{eqn:opt1}-\eqref{eqn:opt1a}, we assume that $\gamma(x)$ is a smooth function\footnote{When $\gamma(x)$ is not smooth, we need the concept of viscosity solutions and weak dynamic programming principle. See \cite{touzi2013optimal,bouchard2012weak} for details.}. When $q(t)=\gamma(x(t))$ and $u(t)$ is chosen, we must have:
\begin{align}
	c(t)^T =\frac{\partial \gamma}{\partial x(t)}^T F(x(t),u(t)). \label{eqn:optimalc}
\end{align}
\end{lemma}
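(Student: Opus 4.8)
The plan is to exploit the fact that feasibility of the target constraint, propagated along the trajectory, forces the augmented state to remain inside the domain $D$. Concretely, I would first argue a dynamic version of the domain characterization: for any admissible $\varphi$ whose target constraint $1_\Gamma(x(\FirstExit{\varphi}{z})) \leq q(\FirstExit{\varphi}{z})$ holds a.s., the augmented trajectory must satisfy $(x(t),q(t)) \in D$ for all $t \leq \FirstExit{\varphi}{z}$, i.e.\ $q(t) \geq \gamma(x(t))$ a.s. This is precisely the geometric dynamic programming principle for stochastic target problems (see \cite{touzi2013optimal,bouchard2012weak}): if at some time the conditional failure probability could not be held at or above $\gamma(x(t))$, then by the definition of $\gamma$ as the infimum failure probability in Eq.~\eqref{eqn:minCollisionProb} no continuation could restore $1_\Gamma(x(\FirstExit{\varphi}{z})) \leq q(\FirstExit{\varphi}{z})$, contradicting feasibility. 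Establishing this invariance rigorously is, I expect, the main obstacle, since it is where the reachability structure of the problem enters.

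Granting the invariance, I would introduce the scalar gap process $Y(t) = q(t) - \gamma(x(t))$, which is nonnegative up to $\FirstExit{\varphi}{z}$ and vanishes precisely when $q(t) = \gamma(x(t))$. Using the smoothness of $\gamma$ assumed in the statement, apply It\^o's formula to $\gamma(x(t))$ under the dynamics of Eq.~\eqref{eqn:extendedsystem}, and combine with $dq(t) = c^T(t)\,dw(t)$ to obtain
\begin{align}
dY(t) = \Big[ c^T(t) - \frac{\partial \gamma}{\partial x}^T F(x(t),u(t)) \Big] dw(t) - \Big[ \frac{\partial \gamma}{\partial x}^T f(x(t),u(t)) + \tfrac{1}{2}\mathrm{tr}\big(F^T \tfrac{\partial^2 \gamma}{\partial x^2} F\big) \Big] dt. \nonumber
\end{align}
The key point is that $q(\cdot)$ contributes no drift, so the whole drift of $Y$ arises from $\gamma(x(t))$, while the diffusion coefficient of $Y$ is the row vector $\sigma(t) = c^T(t) - \frac{\partial \gamma}{\partial x}^T F(x(t),u(t))$.

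Finally, I would run the standard boundary argument. Fix a time $t$ at which $Y(t)=0$, i.e.\ $q(t)=\gamma(x(t))$, and suppose for contradiction that $\sigma(t) \neq 0$. Then on an arbitrarily short interval $(t,t+\epsilon)$ the martingale part $\int \sigma\,dw$ is a nondegenerate stochastic integral that changes sign in every neighborhood of $t$ (its fluctuations are of order $\sqrt{\epsilon}$, which the $O(\epsilon)$ drift term cannot dominate), so $Y$ becomes strictly negative with positive probability, contradicting $Y \geq 0$ a.s. Hence $\sigma(t)=0$, which is exactly $c(t)^T = \frac{\partial \gamma}{\partial x(t)}^T F(x(t),u(t))$, as claimed. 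The two places demanding care are the rigorous invariance $q(t)\geq \gamma(x(t))$ and this sign-change argument; the latter is a localization/quadratic-variation estimate rather than a deep step, so the former is where I expect the real work to lie.
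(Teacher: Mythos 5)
Your proposal is correct and follows essentially the same route as the paper: both invoke the geometric dynamic programming principle to get the invariance $q(\tau)\geq\gamma(x(\tau))$ at the boundary, apply It\^o's formula to $\gamma(x(t))$, and conclude that the diffusion coefficient of $q(t)-\gamma(x(t))$ must vanish for the nonnegativity to hold almost surely. Your gap process $Y$ and the explicit $\sqrt{\epsilon}$-versus-$\epsilon$ fluctuation argument merely spell out the paper's terser claim that ``the coefficient of $dw(t)$ must be $0$.''
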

\begin{proof}
Using the geometric dynamic programming principle~\cite{SoTo02b, BV10:1}, we have the following result: for all stopping time $\tau \ge t$, when $q(t) = \gamma(x(t))$, a feasible control policy $\varphi \in \Psi$ satisfies $q(\tau) \geq \gamma(x(\tau))$ almost surely.

Take $\tau = t+$, under a feasible control policy $\varphi$, we have $q(t+) \geq \gamma(x(t+))$ a.s. for all $t$, and hence $dq(t) \geq d\gamma(x(t))$ a.s. By It$\hat{o}$ lemma, we derive the following relationship:
\begin{align*}
c^T(t)dw(t) &\geq \frac{\partial \gamma}{\partial x}^T \Big( f(x(t),u(t))dt + F(x(t),u(t))dw(t) \Big) \\
            + \frac{1}{2} &Tr\Big(F(x(t),u(t)) F(x(t),u(t))^T \frac{\partial^2 \gamma}{(\partial x)^2} \Big) dt \ a.s.
\end{align*}
For the above inequality to hold almost surely, the coefficient of $dw(t)$ must be $0$. This leads to Eq.~\eqref{eqn:optimalc}.

\end{proof}
In addition, if a control process that solves Eq.~\eqref{eqn:minCollisionProb} is obtainable, say $u_{\gamma}$, the cost-to-go due to that control process is $\CostToGo{z}{u_{\gamma}}$. We will conveniently refer to $\CostToGo{z}{u_{\gamma}}$ as $J^{\gamma}(z)$. Under the mild assumption that $u_{\gamma}$ is unique, it follows that $J^{\gamma}(z)=\OptimalCostToGo{z,\gamma(z)}$.

We also emphasize that when $(x(t),q(t))$ is inside the interior $D^{o}$ of $D$, the usual dynamic programming principle holds. The extension of iMDP outlined below is designed to compute the sequence of approximate cost-to-go values on the boundary $\partial D$ and in the interior $D^o$.

\section{Algorithm} \label{section:algorithm}
In this section, we briefly overview how the Markov chain approximation technique is used in both the original and augmented state spaces. We then present the extended iMDP algorithm that incrementally constructs the boundary values and computes solutions to our problem. In particular, we sample in the original state space $S$ to compute $\OptimalCostToGo{\cdot,1}$ and its induced collision probability $\Upsilon(\cdot)$ as in Eq.~\eqref{eqn:Upsilon}, the min-failure probability $\gamma(\cdot)$ as in Eq.~\eqref{eqn:minCollisionProb} and its induced cost-to-go $J^{\gamma}(\cdot)$. Concurrently, we also sample in the augmented state space $\overline{S}$ with appropriate values for samples on the boundary of $D$ and approximate the optimal cost-to-go function $J^*(\cdot,\cdot)$ in the interior $D^o$. As a result, we construct a sequence of anytime control policies to approximate an optimal control policy $\varphi^*=(\mu^*,\kappa^*)$ in an efficient iterative procedure.

\subsection{Markov Chain Approximation} \label{subsection:mc}
A discrete-state Markov decision process (MDP) is a tuple ${\cal M} = (X,A,P,G,H)$ where $X$ is a finite set of states, $A$ is a set of actions that is possibly a continuous space, $P(\cdot \given \cdot,\cdot): X \times X \times A \to \reals_{\ge 0}$ is the transition probability function, $G(\cdot,\cdot) : X \times A \to \reals$ is an immediate cost function, and $H: X \to \reals$ is a terminal cost function. From an initial state $\xi_0$, under a sequence of controls $\{v_i ; i \in \naturals \}$, the induced trajectory $\{\xi_i ; i \in \naturals \}$ is generated by following the transition probability function $P$.

On the state space $S$, we want to approximate $\OptimalCostToGo{z,1}$, $\Upsilon(z)$, $\gamma(z)$ and $J^{\gamma}(z)$ for any state $z \in S$, and  it is suffice to consider optimal Markov controls as shown in~\cite{huynh.karaman.ea:icra12,Huynh2012.ArXiV}. The Markov chain approximation method approximates the continuous dynamics in Eq.~\eqref{eqn:system} using a sequence of MDPs $\{ {\cal M}_n = (S_n,U,P_n,G_n,H_n)\}_{n=0}^{\infty}$ and a sequence of holding times $\{ \Delta t_n\}_{n=0}^{\infty}$ that are locally consistent. In particular, we construct $G_n(z,v)=g(z,v)\Delta t_n(z)$, $H_n(z)=h(z)$ for each $z \in S_n$ and $v \in U$. We also require that $\lim_{n \rightarrow \infty} \sup_{i \in \naturals, \omega \in \Omega_n}||\Delta \xi_{i}^n||_2=0$ where $\Omega_n$ is the sample space of ${\cal M}_n$, $\Delta \xi_{i}^n = \xi^n_{i+1}-\xi^n_{i}$, and
\begin{itemize}
\item For all $z \in S$, $\lim_{n \to \infty} \Delta t_n(z) = 0$,
\item For all $z \in S$ and all $v \in {U}$:
\end{itemize}
\begin{eqnarray*}
\lim_{n \to \infty} \frac{\EE_{P_n}[\Delta \xi_{i}^n \given \xi_i^n = z, u_i^n = v]}{\Delta t_n (z)} &=& f(z,v),\\
\lim_{n \to \infty} \frac{\Cov_{P_n}[\Delta \xi_{i}^n \given \xi_i^n = z, u_i^n = v]}{\Delta t_n (z)}&=& F(z,v)F(z,v)^T.
\end{eqnarray*}

The main idea of the Markov chain approximation approach for solving the original continuous problem is to solve a sequence of control problems defined on $\{\mathcal M_n\}_{n=0}^{\infty}$ as follows.
A Markov or feedback policy $\mu_n$ is a function that maps each state $z \in S_n $ to a control $\mu_n(z) \in U$. The set of all such policies is $\Pi_n$. 
We define $t^n_i= \sum_{0}^{i-1}\Delta t_n(\xi^n_i)$ for $i \ge 1$ and $t^n_0=0$. 
Given a policy $\mu_n$ that approximates a Markov control process $u(\cdot)$ in Eq.~\eqref{eqn:costtogo}, the corresponding cost-to-go due to $\mu_n$ on $\mathcal M_n$ is:
$$
J_{n,\mu_n} (z) = \EE^z_{P_n}\left[ \sum_{i = 0}^{I_n-1} \alpha^{t_i^n} G_n(\xi^n_i, \mu_n(\xi^n_i)) + \alpha^{t^n_{I_n}}H_n(\xi^n_{I_n})\right],
$$
where $\EE^z_{P_n}$ denotes the conditional expectation given $\xi^n_0 = z$ under $P_n$, and $\{\xi^n_i ; i \in \naturals\}$ is the sequence of states of the controlled Markov chain under the policy $\mu_n$, and $I_n$ is termination time defined as $I_n = \min\{i : \xi^n_i \in \partial S_n\}$ where $\partial S_n = \partial S \cap S_n$.

The {\em optimal cost-to-go function} $J_n^*:S \rightarrow \overline{\reals}$ that approximates $\OptimalCostToGo{z,1}$ is denoted as 
\begin{align}
J_n^*(z,1) = \inf_{\mu_n \in \Pi_n} J_{n,\mu_n}(z) \ \forall z \in S_n. \label{eqn:approximateCost}
\end{align}
An {\em optimal policy}, denoted by $\mu_n^*$, satisfies 
$
J_{n,\mu_n^*}(z) =  J_n^*(z)
$
for all $z \in S_n$. For any $\epsilon >0$, $\mu_n$ is an $\epsilon$-optimal policy if $||J_{n,\mu_n}-J^*_n||_{\infty} \leq \epsilon$.

We also define the failure probability function $\Upsilon_n: S_n \rightarrow [0,1]$ due to an optimal policy $\mu^*_n$ as follows:
\begin{align}
\Upsilon_n(z) = \EE_{P_n} \left[ 1_\Gamma(\xi^n_{I_n}) \ \big | \ x(0)= z \ ; \ \mu^*_n \ \right] \ \forall z \in S_n, \label{eqn:approximateUpsilon}
\end{align}
where we denote $\mu_n^*$ after the semicolon (as a parameter) to emphasize the dependence of the Markov chain on this control policy.

In addition, the \textit{min-failure probability} $\gamma_n$ on  $\mathcal M_n$ that approximates $\gamma(z)$ is defined as:
\begin{align}
\gamma_n(z) = \inf_{\mu_n \in \Pi_n} \EE^z_{P_n} \left[ 1_\Gamma(\xi^n_{I_n}) \right] \ \forall z \in S_n. \label{eqn:approximateProb}
\end{align}
We note that the optimization programs in Eq.~\eqref{eqn:approximateCost} and Eq.~\eqref{eqn:approximateProb} may have two different optimal feedback control policies. Let $\nu_n \in \Pi_n$ be a control policy on $\mathcal{M}_n$ that achieves $\gamma_n$, then the cost-to-go function due to $\nu_n$ is $J_{n,\nu_n}$ which approximates $J^{\gamma}$. For this reason, we conveniently refer to $J_{n,\nu_n}$ as $J_n^{\gamma}$.

Similarly, in the augmented state space $\overline{S}$, we use a sequence of MDPs $\{ \overline{{\cal M}}_n = (\overline{S}_n,\overline{U},\overline{P}_n,\overline{G}_n,\overline{H}_n)\}_{n=0}^{\infty}$ and a sequence of holding times $\{ \overline{\Delta t}_n\}_{n=0}^{\infty}$ that are locally consistent with the augmented dynamics in Eq.~\eqref{eqn:extendedsystem}. In particular, $\overline{S}_n$ is a random subset of $D \subset \overline{S}$, $\overline{G}_n$ is identical to $G_n$, and $\overline{H}_n(z,\eta)$ is equal to $H_n(z)$ if $\eta \in [\gamma_n(z),1]$ and $+\infty$ otherwise. Similar to the construction of $P_n$ and $\Delta t_n$, we also construct the transition probabilities $\overline{P}_n$ on $\overline{{\cal M}}_n$ and holding time $\overline{\Delta t}_n$ that satisfy the local consistency conditions for nominal dynamics $\overline{f}(x,q,u,c)$ and diffusion matrix $\overline{F}(x,q,u,c)$. 

A trajectory on $\overline{\mathcal{M}}_n$ is denoted as $\{\overline{\xi}^n_i; i \in \naturals \}$ where $\overline{\xi}^n_i \in \overline{S}_n$. A Markov policy $\varphi_n$ is a function that maps each state $(z,\eta) \in \overline{S}_n $ to a control $(\mu_n(z,\eta),\kappa_n(z,\eta)) \in \overline{U}$. Moreover, admissible $\kappa_n$ at $(z,1) \in \overline{S}_n$ is $0$ and at $(z,\gamma_n(z)) \in \overline{S}_n$ is a function of $\mu(z,\gamma_n(z))$ as shown in Eq.~\eqref{eqn:optimalc}. Admissible $\kappa_n$ for other states in $\overline{S}_n$ is such that the martingale-component process of $\{\overline{\xi}^n_i ; i \in \naturals \}$ belongs to [0,1] almost surely. We can show that equivalently, each control component of $\kappa_n(z,\eta)$ belongs to $[-\frac{\min(\eta,1-\eta)}{\overline{\Delta t}_n d_w},\frac{\min(\eta,1-\eta)}{\overline{\Delta t}_n  d_w}]$. The set of all such policies $\varphi_n$ is $\Psi_n$. 

Under a control policy $\varphi_n$, the cost-to-go on $\overline{\mathcal{M}}_n$ that approximates Eq.~\eqref{eqn:reformulatedcosttogo} is defined as:
\begin{align*}
\resizebox{1.0\hsize}{!}{$J_{n,\varphi_n} (z,\eta) = \EE^{z,\eta}_{\overline{P}_n}\left[ \sum_{i = 0}^{\overline{I}_n-1} \alpha^{\overline{t}_i^n} \overline{G}_n(\overline{\xi}^n_i, \mu_n(\overline{\xi}^n_i)) + \alpha^{\overline{t}^n_{\overline{I}_n}}\overline{H}_n(\overline{\xi}^n_{\overline{I}_n})\right],$}
\end{align*}
where $\overline{t}^n_i= \sum_{0}^{i-1}\overline{\Delta t}_n(\overline{\xi}_i^n)$ for $i \geq 1$ with $\overline{t}^n_0=0$, and $\overline{I}_n$ is index when the $x$-component of $\overline{\xi}_i^n$ first arrives at $\partial S$. The approximating optimal cost $J^*_n:\overline{S}_n \rightarrow \overline{\reals}$ for $J^*$ in Eq.~\eqref{eqn:stpcost} is:
\begin{align}
J_n^*(z,\eta) = \inf_{\varphi_n \in \Psi_n} J_{n,\varphi_n}(z,\eta) \ \forall (z,\eta) \in \overline{S}_n. \label{eqn:approximateCostTotal}
\end{align}
To solve the above optimization, we compute approximate boundary values for states on the boundary of $D$ using the sequence of MDP $\{\mathcal M_n\}_{n=0}^{\infty}$ on $S$ as discussed above. For states $(z,\eta) \in \overline{S}_n \cap D^o$, the normal dynamic programming principle holds.  

The extension of iMDP outlined below is designed to compute the sequence of optimal cost-to-go functions $\{ J^*_n \}_{n=0}^{\infty}$, associated failure probability functions $\{ \Upsilon_n \}_{n=0}^{\infty}$, min-failure probability functions $\{ \gamma_n \}_{n=0}^{\infty}$, min-failure cost functions $\{ J^{\gamma}_n \}_{n=0}^{\infty}$, and the sequence of anytime control policies $\{ \mu_n \}_{n=0}^{\infty}$ and $\{ \kappa_n \}_{n=0}^{\infty}$ in an incremental procedure.

\subsection{Extension of iMDP} \label{subsection:algo_description}
Before presenting the details of the algorithm, we discuss a number of primitive procedures. More details about these procedures can be found in~\cite{huynh.karaman.ea:icra12,Huynh2012.ArXiV}.
\setcounter{paragraph}{0}
\subsubsection{Sampling}
The ${\tt Sample(X)}$ procedure sample states independently and uniformly in $X$.

\subsubsection{Nearest Neighbors}
Given $\zeta \in {X} \subset \reals^{d_X}$ and a set $\DummyS \subseteq {X}$, for any $k \in \naturals$, the procedure ${\tt Nearest}(\zeta,\DummyS,k)$ returns the $k$ nearest states $\zeta' \in \DummyS$ that are closest to $\zeta$ in terms of the $d_X$-dimensional Euclidean norm. 

\subsubsection{Time Intervals}
Given a state $\zeta \in {X}$ and a number $k \in \naturals$, the procedure ${\tt ComputeHoldingTime}(\zeta,k,d)$ returns a holding time computed as follows:
$
{\tt ComputeHoldingTime}(\zeta,k,d) = \chi_t \left(\frac{\log k}{k}\right)^{\dummyasyn \dummyrate \rho/d},
$
where $\chi_t >0 $ is a constant, and $\dummyrate, \dummyasyn$ are constants in $(0,1)$ and $(0,1]$ respectively. The parameter $\rho \in (0,0.5]$ defines the H\"{o}lder continuity of the cost rate function $g(\cdot,\cdot)$ as in Section~\ref{section:problem}.

\subsubsection{Transition Probabilities} \label{para:tranprob}
We are given a state $\zeta \in X$, a subset $Y \in X$, a control $v$ in some control set $V$, a positive number $\tau$ describing a holding time, $k$ is a nominal dynamics, $K$ is a diffusion matrix. The procedure ${\tt ComputeTranProb}(\zeta,v,\tau,Y,k,K)$ returns (i) a finite set $Z_\mathrm{near} \subset X$ of states
such that the state $\zeta + k(\zeta,v)\tau$ belongs to the convex hull of $Z_\mathrm{near}$ and $||z'-z||_2= O( \tau )$ for all $\zeta' \neq \zeta \in Z_\mathrm{near}$, 
and (ii) a function $P$ that maps $Z_\mathrm{near}$ to a non-negative real numbers such that $P(\cdot)$ is a probability distribution over the support $Z_\mathrm{near}$. It is crucial to ensure that these transition probabilities result in a sequence of locally consistent chains that approximate $k$ and $K$ as presented in \cite{Kushner2000,huynh.karaman.ea:icra12,Huynh2012.ArXiV}. 

\subsubsection{Backward Extension}
Given $T >0$ and two states $z,z' \in {S}$, the procedure ${\tt ExtBackwardsS}(z,z',T)$ returns a triple $(x,v,\tau)$ such that (i) $\dot{x}(t) = f(x(t), u(t)) dt$ and $u(t)=v \in U$ for all $t \in [0,\tau]$, (ii) $\tau \leq T$, (iii) $x(t) \in {S}$ for all $t \in [0,\tau]$, (iv) $x(\tau) = z$, and (v) $x(0)$ is close to $z'$. 
If no such trajectory exists, the procedure returns failure. 
We can solve for the triple $(x,v,\tau)$ by sampling several controls $v$ and choose the control resulting in $x(0)$ that is closest to $z'$. 

When $(z,\eta),(z',\eta')$ are in $\overline{S}$, the procedure ${\tt ExtBackwardsSM}((z,\eta),(z',\eta'),T)$ returns $(x,q,v,\tau)$ in which $(x,v,\tau)$ is output of ${\tt ExtBackwardsS}(z,z',T)$ and $q$ is sampled according to a Gaussian distribution $N(\eta',\sigma_q)$ where $\sigma_q$ is a parameter.

\subsubsection{Sampling and Discovering Controls} 
For $z \in S$ and $Y \subseteq S$, the procedure ${\tt ConstructControlsS}(k,z,Y,T)$ returns a set of $k$ controls in $U$. We can uniformly sample $k$ controls in $U$. Alternatively, for each state $z' \in$ ${\tt Nearest}(z,Y,k)$, we solve for a control $v \in U$ such that (i) $\dot{x}(t) = f(x(t), u(t)) dt$ and $u(t)=v \in U$ for all $t \in [0,T]$, (ii) $x(t) \in {S}$ for all $t \in [0,T]$, (iii) $x(0)=z$ and $x(T) = z'$. 

For $(z,\eta) \in \overline{S}$ and $Y \subseteq \overline{S}$, the procedure ${\tt ConstructControlsSM}(k,(z,\eta),Y,T)$ returns a set of $k$ controls in $\overline{U}$ such that the $U$-component of these controls are computed as in ${\tt ConstructControlsS}$, and the martingale-control-components of these controls are sampled in admissible sets.

%%%%%%%%%%%%%%%%%%%%%%%%%%%%%%%%%%%%%%%%%%%%%%%%
\begin{algorithm}[t]
\begin{footnotesize}
$(S_0,\overline{S}_0,J_0,\gamma_0,\Upsilon_0,J_0^\gamma,\mu_0,\kappa_0,\Delta t_0,\overline{\Delta t}_0) \leftarrow \boldsymbol{\emptyset}$\; 
\For{$n= 1 \to N$}
{
	${\tt UpdateDataStorage}(n-1,n)$ \;  \label{line:main:updatestorage_2}
	${\tt SampleOnBoundary}(n)$ \;  \label{line:main:sampleboundary_2}
	\vspace{0.03in}
	\tcp{$K_{1,n} \ge 1$ rounds for boundary conditions}
	\For{$i= 1 \to K_{1,n}$}  
	{
		${\tt ConstructBoundary}(S_n,\overline{S}_n,J_n,\gamma_n,\Upsilon_n,J_n^\gamma,\mu_n,\Delta t_n)$ \; \label{line:main:constructboundary_2}
	}
	\vspace{0.03in}
	\tcp{$K_{2,n} \ge 0$ rounds for the interior region}
	\For{$i= 1 \to K_{2,n}$}
	{  
	${\tt ProcessInterior}(S_n,\overline{S}_n,J_n,\gamma_n,\Upsilon_n,J_n^\gamma,\mu_n,\kappa_n,\overline{\Delta t}_n)$\; \label{line:main:processinterior_2}
	}
}	
\caption{\footnotesize Risk Constrained iMDP()} \label{algorithm:main_2}
\end{footnotesize}
\end{algorithm}

%%%%%%%%%%%%%%%%%%%%%%%%%%%%%%%%%%%%%%%%%%%%%
\begin{algorithm}[t]
\begin{footnotesize}
	$z_s \leftarrow {\tt Sample}(S)$ \label{line:constructboundary:add_interior_state_start_2}  \; 
	$z_{near} \leftarrow {\tt Nearest}(z_s, S_{n},1)$ \label{line:constructboundary:compute_nearest_2} \; 
	\If{$(x_e, u_e, \tau) \leftarrow {\tt ExtBackwardsS}(z_{near}, z_s,T_0)$}
	{ \label{line:constructboundary:steer_back_2}
		$z_e \leftarrow x_{e}(0)$\;
		$ic= \tau g(z_e,u_e) + \alpha^{\tau}J_{n}(z_{near},1)$\;
		$ic^{\gamma}= \tau g(z_e,u_e) + \alpha^{\tau}J_{n}^\gamma(z_{near})$\;		
		$ (S_{n}, \overline{S}_{n})\leftarrow (S_{n} \cup \{ z_e\}, \overline{S}_{n} \cup \{ (z_e,1) \})$ \; \label{line:constructboundary:add_interior_state_2_a}
		$ (J_{n}(z_{e},1), \gamma_{n}(z_e),\Upsilon_{n}(z_e),J_n^\gamma(z_e),\mu_{n}(z_e,1),\Delta t_{n} (z_e))\leftarrow (ic, \gamma_n(z_{near}),\Upsilon_n(z_{near}), ic^\gamma, u_{e}, \tau)$ \; \label{line:constructboundary:add_interior_state_2}
		\vspace{0.05in}
		\tcp{Perform $L_n \ge 1$ updates}
		\For{$i= 1 \to L_n$}  
		{\label{line:constructboundary:performupdate_start_2}
			\vspace{0.03in}
			\tcp{Choose $\mathcal{K}_n=\Theta \big(|S_n|^{\dummyasyn} \big ) < |S_n|$ states}
			$Z_{update} \leftarrow {\tt Nearest}(z_{e}, S_{n}\backslash \partial S_n, \mathcal{K}_n) \cup \{z_e\}$\;
			\label{line:constructboundary:compute_update_2}
			\For{$z \in Z_{update}$}
			{
				${\tt UpdateS} (z,S_{n},J_{n},\gamma_n,\Upsilon_n,J_n^\gamma,\mu_{n},\Delta t_{n})$  \;\label{line:constructboundary:performupdate_end_2}
			}
		}
	}
\end{footnotesize}
\caption{\footnotesize ConstructBoundary($S_n,\overline{S}_n,J_n,\gamma_n,\Upsilon_n,J_n^\gamma,\mu_n,\Delta t_n$)} \label{algorithm:constructboundary_2}
\end{algorithm}

%%%%%%%%%%%%%%%%%%%%%%%%%%%%%%%%%%%%%%%%%%%%%%%%
\begin{algorithm}[t]
\begin{footnotesize}
	$\overline{z}_s=(z_{s},q_s) \leftarrow {\tt Sample}(\overline{S})$\; \label{line:processinterior:sample_2}
	$\overline{z}_{near}=(z_{near},q_{near}) \leftarrow {\tt Nearest}(\overline{z}_{s}, \overline{S}_{n},1)$\;
	\If{$(x_{e}, q_e, u_{e}, \tau) \leftarrow {\tt ExtBackwardsSM}(\overline{z}_{near}, \overline{z}_{s},T_0)$}
	{  \label{line:processinterior:extend_2}
		$\overline{z}_{e} \leftarrow (x_{e}(0),q_e) $\;
		\eIf{$q_e < \gamma_n(z_{near})$}
		{ 	\label{line:main:checkminprob_1}		
			\vspace{0.05in}
			\tcp{$\mathcal{C}$ takes a large value}			
			$ (\overline{S}_{n}, J_{n}(\overline{z}_e), \mu_{n}(\overline{z}_e), \kappa_{n}(\overline{z}_e), \overline{\Delta t}_{n} (\overline{z}_e))  \leftarrow ( \overline{S}_{n} \cup \{ \overline{z}_e \}, \mathcal{C}, u_{e},0, \tau)$ \;  \label{line:main:checkminprob_2}
		}
		{
			$ic= \tau g(z_{e},u_{e}) + \alpha^{\tau}J_{n}(\overline{z}_{near})$\; \label{line:updateextended_1}
			$ (\overline{S}_{n}, J_{n}(\overline{z}_e), \mu_{n}(\overline{z}_e), \kappa_{n}(\overline{z}_e), \overline{\Delta t}_{n} (\overline{z}_e))  \leftarrow ( \overline{S}_{n} \cup \{ \overline{z}_e \}, ic, u_{e},0, \tau)$ \; \label{line:updateextended_2}
			\vspace{0.05in}
			\tcp{Perform $\overline{L}_n \ge 1$ updates}
			\For{$i= 1 \to \overline{L}_n$} 
			{ \label{line:processinterior:updatecost_start_2}
				\vspace{0.03in}
				\tcp{Choose $\overline{\mathcal{K}}_n=\Theta \big(|\overline{S}_n|^{\dummyasyn} \big ) < |\overline{S}_n|$ states}
				$\overline{Z}_\mathrm{update} \leftarrow {\tt Nearest}(\overline{z}_e, \overline{S}_{n}\backslash \partial \overline{S}_n, \overline{\mathcal{K}}_n) \cup \{\overline{z}_e\}$\;
				\label{line:main:compute_update_2}
				\For{$\overline{z}=(z,q) \in \overline{Z}_\mathrm{update}$}
				{
					${\tt UpdateSM} (\overline{z},\overline{S}_n,J_n,\gamma_n,\Upsilon_n,J_n^\gamma,\mu_n,\kappa_n,\overline{\Delta t}_n)$\; \label{line:processinterior:updatecost_end_2}
				}
			}
		}
	}
\end{footnotesize}
\caption{\footnotesize ProcessInterior($S_n,\overline{S}_n,J_n,\gamma_n,\Upsilon_n,J_n^\gamma,\mu_n,\kappa_n,\overline{\Delta t}_n$)} \label{algorithm:processInterior_2}
\end{algorithm}
%%%%%%%%%%%%%%%%%%%%%%%%%%%%%%%%%%%%%%%%%%%%%%%%
The extended iMDP algorithm is presented in Algorithms~\ref{algorithm:main_2}-\ref{algorithm:updateSM_2}. 
The algorithm incrementally refines two MDP sequences, namely $\{ \mathcal{M}_n \}_{n=0}^{\infty}$ and $\{ \overline{\mathcal{M}}_n \}_{n=0}^{\infty}$, and two holding time sequences, namely $\{ \Delta t_n \}_{n=0}^{\infty}$ and  $\{ \overline{\Delta t}_n \}_{n=0}^{\infty}$, that consistently approximate the original system in Eq.~\eqref{eqn:system} and the augmented system in Eq.~\eqref{eqn:extendedsystem} respectively. 
We associate with $z \in S_n$ a cost value $J_n(z,1)$, a control $\mu_n(z,1)$, a failure probability $\Upsilon_n(z)$ due to $\mu_n(\cdot,1)$, a min-failure probability $\gamma_n(z)$, a cost-to-go value $J^{\gamma}_n(z)$ induced by the obtained min-failure policy.
Similarly, we associate with $\overline{z} \in \overline{S}_n$ a cost value $J_n(\overline{z})$, a control $(\mu_n(\overline{z}),\kappa_n(\overline{z}))$.

As shown in Algorithm~\ref{algorithm:main_2}, initially, empty MDP models $\mathcal{M}_0$ and $\overline{\mathcal{M}}_0$ are created. The algorithm then executes $N$ iterations in which it samples states on the pre-specified part of the boundary $\partial D$, constructs the un-specified part of $\partial D$ and processes the interior of $D$. More specifically, at Line~\ref{line:main:updatestorage_2}, ${\tt UpdateDataStorage}(n-1,n)$ indicates that refined models in the $n^{th}$ iteration are constructed from models in the $(n-1)^{th}$ iteration, which can be implemented by simply sharing memory among iterations. Using rejection sampling, the procedure ${\tt SampleOnBoundary}$ at Line~\ref{line:main:sampleboundary_2} sample states in $\partial S$ and $\partial S \times [0,1]$ to add to $S_n$ and $\overline{S}_n$ respectively. We also initialize appropriate cost values for these sampled states. 

We conduct $K_{1,n}$ rounds to refine the MDP sequence $\{ \mathcal{M}_n \}_{n=0}^{\infty}$ as done in the original iMDP algorithm using the procedure ${\tt ConstructBoundary}$ (Line~\ref{line:main:constructboundary_2}). Thus, we can compute the cost function $J_n$ and the associated failure probability function $\Upsilon_n$ on $S_n \times \{1\}$. In the same procedure, we compute the min-failure probability function $\gamma_n$ as well as the min-failure cost function $J_n^\gamma$ on $S_n$. In other words, the algorithm effectively constructs approximate boundaries for $D$ and approximate cost-to-go functions $J_n$ on these approximate boundaries over iterations. To compute cost values for the interior $D^{o}$ of $D$, we conduct $K_{2,n}$ rounds of the procedure ${\tt ProcessInterior}$ (Line~\ref{line:main:processinterior_2}) that similarly refines the MDP sequence $\{ \overline{\mathcal{M}}_n \}_{n=0}^{\infty}$ in the augmented state space. We can choose the values of $K_{1,n}$ and $K_{2,n}$ so that we perform a large number of iterations to obtain stable boundary values before processing the interior domain when $n$ is small. In the following discussion, we will present in detail the implementations of these procedures.

In Algorithm~\ref{algorithm:constructboundary_2}, we show the implementation of the procedure ${\tt ConstructBoundary}$. We construct a finer MDP model $\mathcal{M}_{n}$ based on the previous model as follows. A state $z_\mathrm{s}$, is sampled from the interior of the state space $S$ (Line~\ref{line:constructboundary:add_interior_state_start_2}). 
The nearest state $z_\mathrm{near}$ to $z_\mathrm{s}$ (Line~\ref{line:constructboundary:compute_nearest_2}) in the previous model is used to construct an extended state $z_\mathrm{e}$ by using the procedure ${\tt ExtendBackwardsS}$ at Line~\ref{line:constructboundary:steer_back_2}. 
The extended states $z_\mathrm{e}$ and $(z_\mathrm{e},1)$ are added into $S_n$ and $\overline{S}_n$ respectively. The associated cost value $J_{n}(z_\mathrm{e},1)$, failure probability $\Upsilon_n(z_\mathrm{e})$, min-failure probability $\gamma_{n}(z_\mathrm{e})$, min-failure cost value $J_{n}^{\gamma}(z_\mathrm{e})$ and control $\mu_{n}(z_\mathrm{e})$ are initialized at Line~\ref{line:constructboundary:add_interior_state_2}. 

%%%%%%%%%%%%%%%%%%%%%%%%%%%%%%%%%%%%%%%%%%%%%%%%
\begin{algorithm}[t]
\begin{footnotesize}
$\tau \leftarrow {\tt ComputeHoldingTime}(z,|S_{n}|,d_x)$\; \label{line:holdingtime_2}
\vspace{0.03in}
\tcp{Sample or discover $M_n=\Theta(\log(|S_n|))$ controls}
$U_n \leftarrow {\tt ConstructControlsS}(M_n,z,S_n,\tau)$\;
\label{line:Controls_2}
\label{line:improve:begin_2}
\For{$v \in U_n$}
{
        \label{line:discover_2}
	$(Z_\mathrm{near},P_n) \leftarrow {\tt ComputeTranProb}(z,v,\tau,S_n,f,F)$\; \label{line:pn_2}
	\vspace{0.05in}
	\tcp{Update cost}	
	$J \leftarrow \tau g(z,v) + \alpha^{\tau}\sum_{y \in Z_{\mathrm{near}}}P_n(y)J_n(y,1)$\; \label{line:update_J_2}
	\If{$J < J_n(z,1)$}
	{ 	
	  $p \leftarrow \sum_{y \in Z_{\mathrm{near}}}P_n(y)\Upsilon_n(y)$\; \label{line:update_upsilon_2}	
	  $(J_n(z,1), \Upsilon_n(z),\mu_n(z,1),\Delta t_n(z)) \leftarrow (J,p,v,\tau)$\; 
	  \label{line:improve_cost:end_2}
 	}
	\vspace{0.05in}
	\tcp{Update min-failure probability}
	$b \leftarrow \sum_{y \in Z_{\mathrm{near}}}P_n(y)\gamma_n(y)$\; \label{line:update_gamma_2}
	\If{$b < \gamma_n(z)$}
	{ 	
	  $J \leftarrow \tau g(z,v) + \alpha^{\tau}\sum_{y \in Z_{\mathrm{near}}}P_n(y)J^{\gamma}_n(y)$\;
	  $(\gamma_n(z),J_n^{\gamma}(z) ) \leftarrow (b,J)$\; 
	  \label{line:update_gamma_end_2}
 	}
}
\end{footnotesize}
\caption{\footnotesize ${\tt UpdateS}( z, S_n, J _{n},\gamma_n,\Upsilon_n,J_n^\gamma,\mu_{n},\Delta t_{n})$}
\label{algorithm:updateS_2}
\end{algorithm}
%%%%%%%%%%%%%%%%%%%%%%%%%%%%%%%%%%%%%%%%%%%%%%%%

\begin{algorithm}[t]
\begin{footnotesize}
$\overline{\tau} \leftarrow {\tt ComputeHoldingTime}(\overline{z},|\overline{S}_{n}|,d_x+1)$\; \label{line:holdingtime_2}
\vspace{0.03in}
\tcp{Sample or discover $\overline{M}_n=\Theta(\log(|\overline{S}_n|))$ controls}
$\overline{U}_n \leftarrow {\tt ConstructControlsSM}(\overline{M}_n,\overline{z},\overline{S}_n,\overline{\tau})$\;
\label{line:ControlsSM_2}
\label{line:improve:beginSM_2}
\For{$\overline{v}=(v,c) \in \overline{U}_n$}
{
        \label{line:discoverSM}
	$(\overline{Z}_\mathrm{near},\overline{P}_n) \leftarrow {\tt ComputeTranProb}(\overline{z},\overline{v},\overline{\tau},\overline{S}_n,\overline{f},\overline{F})$\; \label{line:pn_2}
	$J \leftarrow \overline{\tau} g(z,v) + \alpha^{\overline{\tau}}\sum_{\overline{y}=(y,s) \in \overline{Z}_{\mathrm{near}}}\overline{P}_n(\overline{y})\big [1_{s=\gamma_n(y)}J_n^{\gamma}(y) +  \text{\ \ \ \ \ \ \ \ \ \ \ \ \ \ \ \ \ \ \ \ \ } 1_{\gamma_n(y) < s < \Upsilon_n(y)}J_n(\overline{y}) + 1_{s \geq \Upsilon_n(y)}J_n(y,1) \big ]$\; \label{line:update_J_SM_2}
	\vspace{0.03in}
	\tcp{Improved cost}
	\If{$J < J_n(\overline{z})$}
	{ 	
	  $(J_n(\overline{z}),\mu_n(\overline{z}), \kappa_n(\overline{z}), \overline{\Delta t}_n(\overline{z})) \leftarrow (J,v,c,\tau)$\; 
	  \label{line:improve:end_2}
 	}
}
\end{footnotesize}
\caption{\footnotesize ${\tt UpdateSM}( \overline{z},\overline{S}_n,J_n,\gamma_n,\Upsilon_n,J_n^\gamma,\mu_n,\kappa_n,\overline{\Delta t}_n)$}
\label{algorithm:updateSM_2}
\end{algorithm}
%%%%%%%%%%%%%%%%%%%%%%%%%%%%%%%%%%%%%%%%%%%%%%%%

We then perform $L_n \ge 1$ updating rounds in each iteration (Lines~\ref{line:constructboundary:performupdate_start_2}-\ref{line:constructboundary:performupdate_end_2}). 
In particular, we construct the update-set $Z_\mathrm{update}$ consisting of $K_n=\Theta(|S_n|^{\dummyasyn})$ states and $z_\mathrm{e}$ where $|K_n| < |S_n|$. 
For each state $z$ in $Z_\mathrm{update}$, the procedure ${\tt UpdateS}$ as shown in Algorithm~\ref{algorithm:updateS_2} implements the following Bellman update:
$$
J_{n} (z,1) = \min_{v \in U_n(z) } \{ G_n(z,v) + \alpha^{\Delta t_n(z)} \EE_{P_n}[ J_{n-1}(y) | z,v]\}.
$$
The details of the implementation are as follows. A set of $U_n$ controls is constructed using the procedure ${\tt ConstructControlsS}$ where $|U_n|=\Theta(\log(|S_n|))$ at Line~\ref{line:Controls_2}. 
For each $v \in U_n$, we construct the support $Z_\mathrm{near}$ and compute the transition probability $P_n(\cdot \given z,v)$ consistently over $Z_\mathrm{near}$ from the procedure ${\tt ComputeTranProb}$ (Line \ref{line:pn_2}). 
The cost values for the state $z$ and controls in $U_n$ are computed at Lines~\ref{line:update_J_2}. 
We finally choose the best control in $U_n$ that yields the smallest updated cost value (Line~\ref{line:improve:end_2}). Correspondingly, we improve the min-failure probability $\gamma_n$ and its induced min-failure cost value $J_n^{\gamma}$ in Lines~\ref{line:update_gamma_2}-\ref{line:update_gamma_end_2}.
%
%As the current control may be still the best control, we can re-evaluate the current control $\mu_n(z)$ over the holding time $\Delta t_n(z)$ by adding it to $U_n$. 
%
%Essentially, we perform asynchronous policy evaluation to compute constraint value function and perform asynchronous value iteration to compute cost value function.

Similarly, in Algorithm~\ref{algorithm:processInterior_2}, we carry out the sampling and extending process in the augmented state space $\overline{S}$ to refine the MDP sequence $\overline{\mathcal{M}}_{n}$ (Lines~\ref{line:processinterior:sample_2}-\ref{line:processinterior:extend_2}). In this procedure, if an extended node has a martingale state that is below the corresponding min-failure probability, we initialize the cost value for extended node with a very large constant $\mathcal{C}$ representing $+\infty$ (see Lines~\ref{line:main:checkminprob_1}-\ref{line:main:checkminprob_2}). Otherwise, we initialize the extended node as seen in Lines~\ref{line:updateextended_1}-\ref{line:updateextended_2}. We then execute $\overline{L}_n$ rounds (Lines~\ref{line:processinterior:updatecost_start_2}-\ref{line:processinterior:updatecost_end_2}) to update the cost-to-go $J_n$ for \textit{states in the interior $D^{o}$ of $D$} using the procedure ${\tt UpdateSM}$ as shown in Algorithm~\ref{algorithm:updateSM_2}. When a state $\overline{z} \in \overline{S}_n$ is updated in ${\tt UpdateSM}$, we perform the following Bellman update:
\begin{align*}
\resizebox{1.0\hsize}{!}{$\displaystyle J_{n} (\overline{z}) = \min_{(v,c) \in \overline{U}_n(z) } \{ \overline{G}_n(z,v) + \alpha^{\overline{\Delta t}_n(z)} \EE_{\overline{P}_n}[ J_{n-1}(\overline{y}) | \overline{z},(v,c)]\},$}
\end{align*}
where the control set $\overline{U}_n$ is constructed by the procedure $\tt ConstructControlsSM$, and the transition probability $\overline{P}_n(\cdot|\overline{z},(v,c))$ consistently approximates the augmented dynamics in Eq.~\eqref{eqn:extendedsystem}. To implement the above Bellman update at Line~\ref{line:update_J_SM_2} in Algorithm~\ref{algorithm:updateSM_2}, we make use of the characteristics presented in Section~\ref{subsection:characterization} where the notation $1_{A}$ is $1$ if the event $A$ occurs and $0$ otherwise. That is, when the martingale state $s$ of a state $\overline{y}=(y,s)$ in the support $\overline{Z}_{near}$ is at least $\Upsilon_n(y)$, we substitute $J_n(\overline{y})$ with $J_n(y,1)$. Similarly, when the martingale state $s$ is equal to $\gamma_n(y)$, we substitute $J_n(\overline{y})$  with $J_n^{\gamma}(y)$.

\subsection{Feedback Control} \label{section:algorithm:control}

\begin{figure*}[t]
\begin{center}
  \subfigure{
  \label{figFeedback}
  \includegraphics[width=85mm]{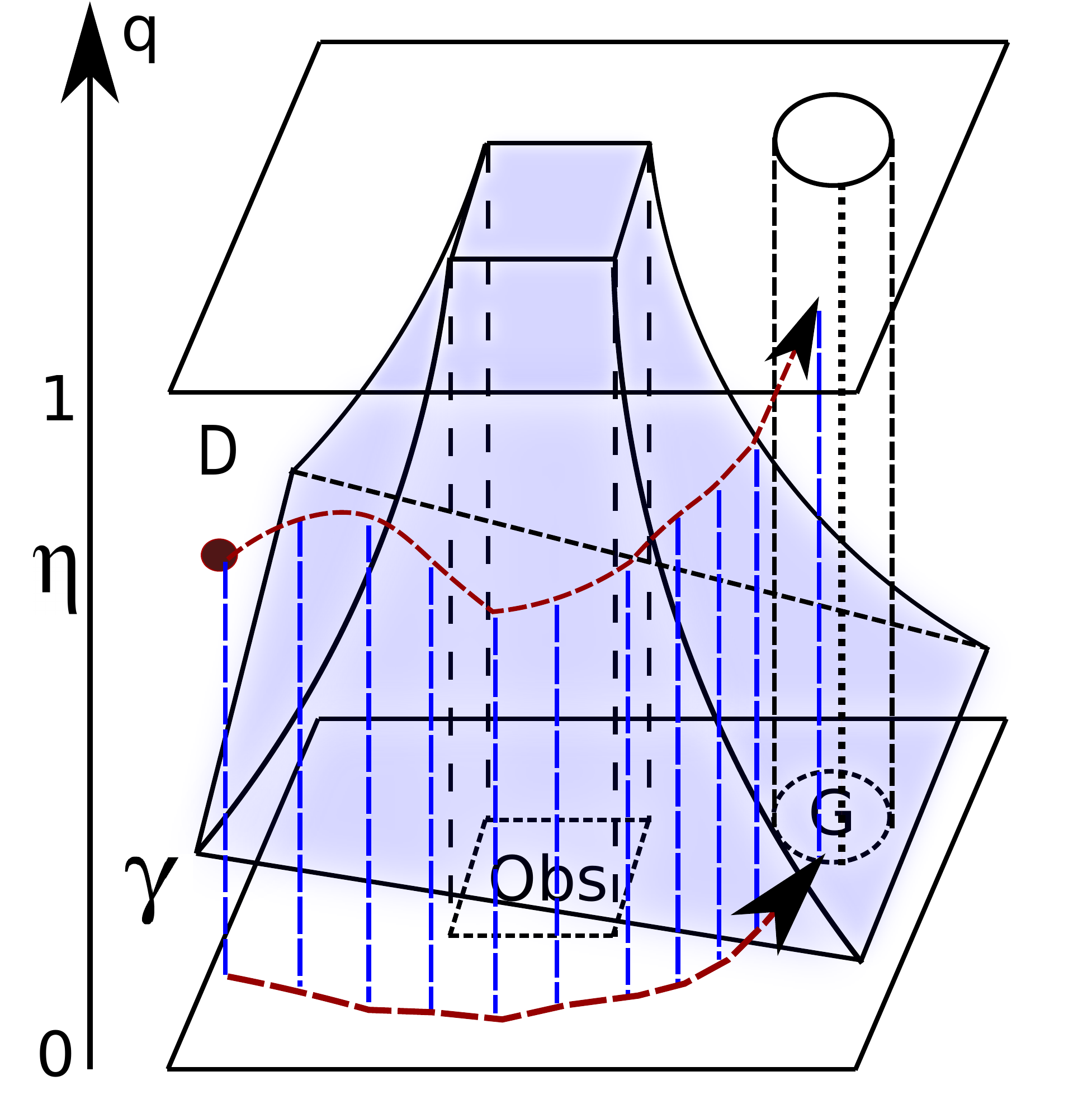}  
  }
  \hfill
  \subfigure{
  \label{figFeedback_Extra}
  \includegraphics[width=85mm]{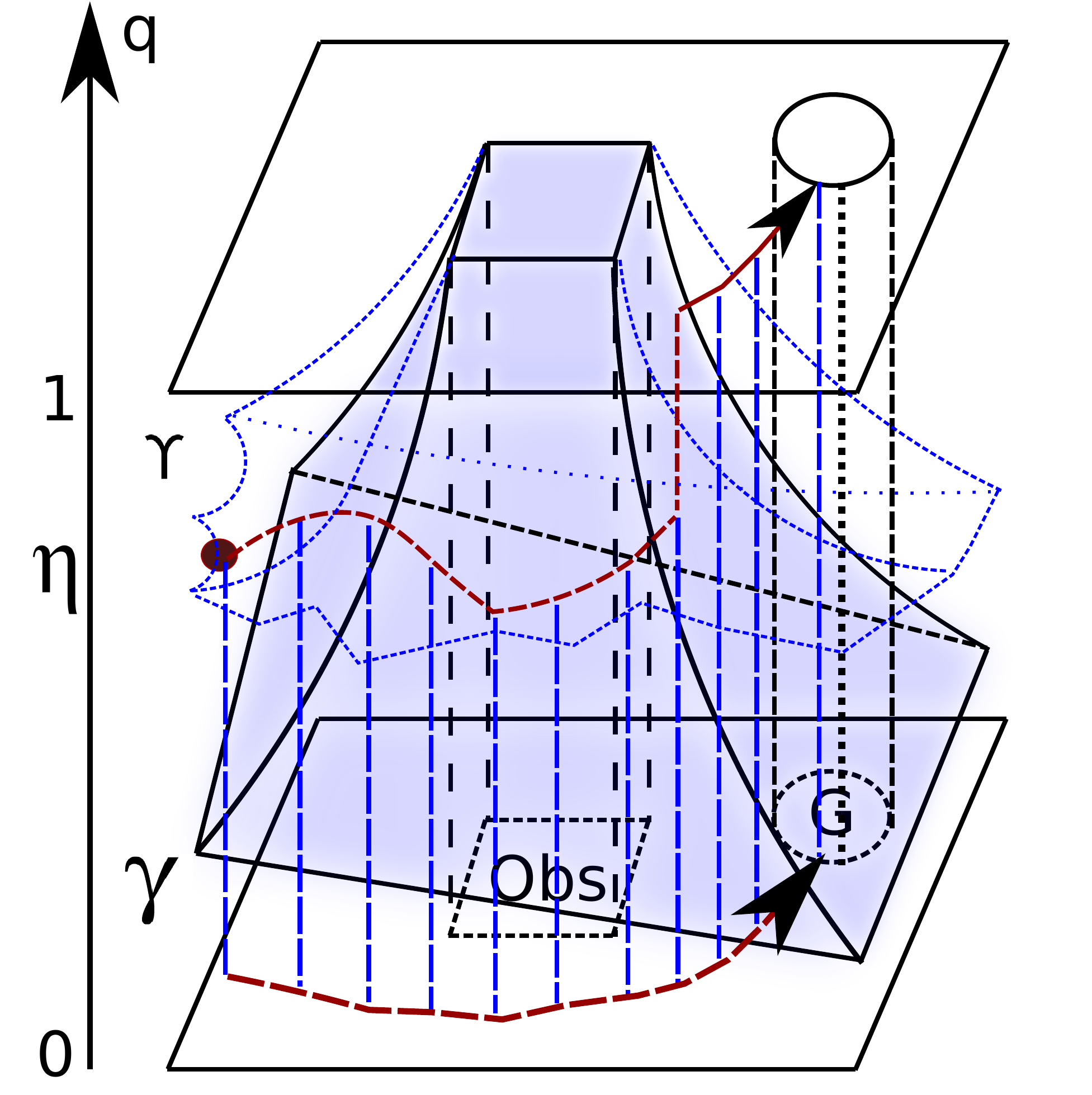}  
  }
  \caption{In Fig.~\ref{figFeedback}, we show a feedback-controlled trajectory of $\mathcal{OPT}1$ and $\mathcal{OPT}2$. In the augmented state space $\overline{S}$, a feedback control policy is a deterministic Markov policy as a function of an augmented state $(x,q)$. As the system actually evolves  in  the original state space $S$, and the martingale state $q$ can be seen as a random parameter at each state $x$, the feedback control policy is a randomized policy. In Fig.~\ref{figFeedback_Extra}, we show a modified feedback-controlled trajectory. We continue the illustration in Fig.~\ref{figFeedback}. When the martingale state along the trajectory is at least the corresponding value provided by $\Upsilon$, the system starts following a deterministic control policy $\mu_n(\cdot,1)$ of the unconstrained problem.} 
\end{center}
\end{figure*}

\begin{algorithm}[t]
\begin{footnotesize}
$z_\mathrm{nearest} \leftarrow {\tt Nearest}(z, S_n,1)$\;
\eIf{$q \geq \gamma_n(z_\mathrm{nearest})$}
{
	\vspace{0.05in}
	\tcp{Switch to a deterministic control policy}
	\Return{$ \big (\varphi(\overline{z})=( \mu_n(z_\mathrm{nearest}), 0), \Delta t_n(z_\mathrm{nearest}) \big)$} \;
	\label{line:deterpolicy}
}
{
	\vspace{0.05in}
	\tcp{Perform a Bellman update}
	$(J_{min},v_{min},c_{min}) \leftarrow (+\infty,\emptyset,\emptyset)$ \;	 
	\label{line:randompolicy_1}
	$\overline{\tau} \leftarrow {\tt ComputeHoldingTime}(\overline{z},|\overline{S}_{n}|,d_x+1)$\;
	\vspace{0.03in}
	\tcp{Construct $\overline{M}_n=\Theta(\log(|\overline{S}_n|))$ controls}
	$\overline{U}_n \leftarrow {\tt ConstructControlsSM}(\overline{M}_n,\overline{z},\overline{S}_n,\overline{\tau})$\;
	\For{$\overline{v}=(v,c) \in \overline{U}_n$}
	{
			$(\overline{Z}_\mathrm{near},\overline{P}_n) \leftarrow {\tt ComputeTranProb}(\overline{z},\overline{v},\overline{\tau},\overline{S}_n,\overline{f},\overline{F})$\;
		$J \leftarrow \overline{\tau} g(z,v) + \alpha^{\overline{\tau}}\sum_{\overline{y}=(y,s) \in \overline{Z}_{\mathrm{near}}}\overline{P}_n(\overline{y})\big [1_{s=\gamma_n(y)}J_n^{\gamma}(y) + 1_{\gamma_n(y) < s < \Upsilon_n(y)}J_n(\overline{y}) +  1_{s \geq \Upsilon_n(y)}J_n(y,1) \big ]$\; 
		\vspace{0.03in}
		\tcp{Improved cost}
		\If{$J < J_{min}$}
		{ 	
		     $(J_{min},v_{min},c_{min}) \leftarrow (J,v,c)$ \;	 
	 	}
	}
	\Return{$\big( \varphi(\overline{z})=( v_{min}, c_{min}), \overline{\tau} \big )$} \; 	\label{line:randompolicy_2}
}
\end{footnotesize}
\caption{\footnotesize ${\tt \text{ Risk Constrained Policy} } ( \overline{z}=(z,q) \in \overline{S},n)$}
\label{algorithm:Policy_2}
\end{algorithm}

At the $n^{th}$ iteration, given a state $x \in S$ and a martingale component $q$, to find a policy control $(v,c)$, we perform a Bellman update based on the approximated cost-to-go $J_n$ for the augmented state $(x,q)$. During the holding time $\overline{\Delta t}_n$, the original system takes the control $v$ and evolves in the original state space $S$ while we simulate the dynamics of the martingale component under the martingale control $c$. After this holding time period, the augmented system has a new state $(x',q')$, and we repeat the above process.

Figure~\ref{figFeedback} visualizes how feedback policies look in the original and augmented state spaces. In the augmented state space $\overline{S}$, a feedback control policy is a deterministic Markov policy as a function of an augmented state $(x,q)$. As the system actually evolves  in  the original state space $S$, and the martingale state $q$ can be seen as a random parameter at each state $x$, the feedback control policy is a randomized policy.

Using the characteristics presented in Section~\ref{subsection:characterization}, we infer that when a certain condition meets, the system can start following a deterministic control policy. More precisely, we recall that for all $\eta \in [\Upsilon(z), 1]$, we have $J^*(z,\eta)=J^*(z,1)$. Thus, starting from any augmented state $(z,\eta)$ where $\eta > \Upsilon(z)$, we can solve the problem as if the failure probability were $1.0$ and use optimal control policies of the unconstrained problem from the state $z$. We illustrate this idea in Fig.~\ref{figFeedback_Extra}. As we can see, when the martingale state along the trajectory is at least the corresponding value provided by $\Upsilon$, the system starts following a deterministic control policy $\mu_n(\cdot,1)$ of the unconstrained problem.

Algorithm~\ref{algorithm:Policy_2} implements the above feedback policy. As shown in this algorithm, Line~\ref{line:deterpolicy} returns a deterministic policy of the unconstrained problem if the martingale state is large enough, and Lines~\ref{line:randompolicy_1}-\ref{line:randompolicy_2} perform a Bellman update to find the best augmented control if otherwise. When the system starts using deterministic policies of the unconstrained problem, we can set the martingale state to $1.0$ and set the optimal martingale control to $0$ in the following control period.

\subsection{Complexity} 
The time complexity per iteration of Algorithms \ref{algorithm:main_2}-\ref{algorithm:updateSM_2} is $O \big ( |\overline{S}_n|^{\dummyasyn}(\log{|\overline{S}_n|})^2 \big)$. The space complexity of the iMDP algorithm is $O(|\overline{S}_n|)$ where $|\overline{S}_n|=\Theta(n)$ due to our sampling strategy. 

\section{Analysis} \label{section:analysis}
In this section, we present main results on the performance of the extended iMDP algorithm with brief explanation. More detailed proofs can be found in~\cite{Huynh2012.ArXiV}. 

We first review the following key results of the approximating Markov chain method when no additional risk constraints are considered~\cite{Kushner2000}. Local consistency implies the convergence of continuous-time interpolations of the trajectories of the controlled Markov chain to the trajectories of the stochastic dynamical system described by Eq.~\eqref{eqn:system}. In particular, previous results in~\cite{huynh.karaman.ea:icra12} show that $J_n(\cdot,1)$ returned from the iMDP algorithm converges uniformly to $J^*(\cdot,1)$ in probability. That is, we are able to compute $J^*(\cdot,1)$ in an incremental manner without directly computing $J^*_n(\cdot,1)$. As a consequence, it follows that $\Upsilon_n$ converges to $\Upsilon$ uniformly in probability. Using the same proof, we conclude that $\gamma_n(\cdot)$ and $J_n^\gamma(\cdot)$ converges uniformly to $\gamma(\cdot)$ and $J^*(\cdot,\gamma)$ in probability respectively. Therefore, we have incrementally constructed the boundary values on $\partial D$ of the equivalent stochastic target problem presented in Eqs.~\eqref{eqn:stpcost}-\eqref{eqn:asconstraint}. These results are established based on the approximation of the dynamics in Eq.~\eqref{eqn:system} using the MDP sequence $\{ \mathcal{M}_n \}_{n=0}^{\infty}$. 

Similarly, the uniform convergence of $J_n(\cdot,\cdot)$ to $J^*(\cdot,\cdot)$ in probability on the interior of $D$ is followed from the approximation of the dynamics in Eq.~\eqref{eqn:extendedsystem} using the MDP sequence $\{ \overline{\mathcal{M}}_n \}_{n=0}^{\infty}$. In the following theorem, we formally summarize the key convergence results of the extended iMDP algorithm.
\begin{theorem}
\label{theoremProbSoundnessUniformConvergence}
Let $\mathcal{M}_n$ and $\overline{\mathcal{M}}_n$ be two MDPs with discrete states constructed in $S$ and $\overline{S}$ respectively, and let $J_{n}:\overline{S}_n \rightarrow \overline{\reals}$ be the cost-to-go function returned by the extended iMDP algorithm at the $n^{th}$ iteration. Let us define $||b||_{X}=\sup_{z\in X}b(z)$ as the sup-norm over a set $X$ of a function $b$ with a domain containing $X$. We have the following random variables converge in probability:
\begin{enumerate}
\item $\plim_{n \rightarrow \infty} ||J_{n}(\cdot,1)-J^*(\cdot,1)||_{S_n}=0,$
\item $\plim_{n \rightarrow \infty} ||\Upsilon_{n}-\Upsilon||_{S_n}=0,$
\item $\plim_{n \rightarrow \infty} ||\gamma_{n}-\gamma||_{S_n}=0,$
\item $\plim_{n \rightarrow \infty} ||J^{\gamma}_{n}-J^{\gamma}||_{S_n}=0,$
\item $\plim_{n \rightarrow \infty} ||J_n-J^*||_{\overline{S}_n}=0.$
\end{enumerate}
The first four events construct the boundary values on $\partial D$ in probability, which leads to the probabilistically sound property of the extended iMDP algorithm. The last event asserts the asymptotically optimal property through the convergence of the approximating cost-to-go function $J_n$ to the optimal cost-to-go function $J^*$ on the augmented state space $\overline{S}$.
\end{theorem}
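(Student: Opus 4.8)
The plan is to decompose the statement into two groups handled by the two MDP sequences: claims (1)--(4), which live on the original state space $S$ and establish the boundary data on $\partial D$, and claim (5), which lives on the augmented space $\overline{S}$ and delivers the interior convergence. Throughout I would lean on the Markov chain approximation method~\cite{Kushner2000}: the whole argument reduces to verifying local consistency of the sampled chains and then invoking weak convergence of the continuous-time interpolations of the controlled Markov chains to the controlled diffusions, together with uniform integrability of the (bounded) cost functionals. The two MDP families are coupled only through the boundary, so I would prove (1)--(4) first and then use them as input to (5).

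For claim (1) I would first check that the primitives \texttt{ComputeHoldingTime}, \texttt{ComputeTranProb} and \texttt{ConstructControlsS} produce a sequence $\{\mathcal{M}_n\}$ that is locally consistent with Eq.~\eqref{eqn:system} and whose mesh $\sup_i\|\Delta\xi_i^n\|_2\to 0$ almost surely; this is exactly the hypothesis under which the original iMDP convergence result applies, so $\plim_{n\to\infty}\|J_n(\cdot,1)-J^*(\cdot,1)\|_{S_n}=0$ follows by citing~\cite{huynh.karaman.ea:icra12,Huynh2012.ArXiV}. Claim (2) I would then obtain as a corollary: $\Upsilon$ in Eq.~\eqref{eqn:Upsilon} is the expectation of the bounded terminal functional $1_\Gamma(x(\FirstExit{u}{z}))$ under the (near-)optimal unconstrained policy, so once the approximating optimal policies and their interpolated trajectories converge weakly, the indicator functional (after the usual treatment of the exit boundary $\partial S$) converges in expectation, giving $\Upsilon_n\to\Upsilon$. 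Claims (3) and (4) are the same theorem re-run with a different cost pair: $\gamma$ in Eq.~\eqref{eqn:minCollisionProb} is the value function of the auxiliary control problem with zero running cost and terminal cost $1_\Gamma$, while $J^\gamma$ is the cost-to-go of its minimizing policy, so the identical local-consistency and weak-convergence argument yields both limits.

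The real work is claim (5), and I expect it to be the main obstacle for three intertwined reasons. First, the augmented dynamics Eq.~\eqref{eqn:extendedsystem} carry the martingale control $c(\cdot)$, which the lemma in Section~\ref{section:martingale} warns may be unbounded; the admissible set for $\kappa_n$ grows like $\min(\eta,1-\eta)/(\overline{\Delta t}_n\, d_w)$ as $\overline{\Delta t}_n\to 0$, so I would have to show the relaxed-control and weak-convergence machinery still applies despite a diffusion coefficient in the $q$-direction that is not uniformly bounded, presumably by exploiting that $q(\cdot)$ is confined to $[0,1]$ and that the truncation remains locally consistent in the Kushner sense. Second, the stochastic target problem has boundary conditions that are not specified \emph{a priori}; I would have to prove that the approximate data assembled in \texttt{UpdateSM} via the indicator substitutions --- $J_n(y,1)$ when $s\ge\Upsilon_n(y)$, $J_n^\gamma(y)$ when $s=\gamma_n(y)$, and the large constant $\mathcal{C}$ standing for $+\infty$ below $\gamma_n$ --- converge to the true boundary values of $\mathcal{OPT}2$ on $\partial D$. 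This is precisely where claims (1)--(4) and the characterization in Section~\ref{subsection:characterization} (Eqs.~\eqref{eqn:followingMustar2} and~\eqref{eqn:optimalc}) feed in. Third, as in the base iMDP analysis, I would need a probabilistic coverage argument showing that, almost surely, the sampling fills $D$ and every state is Bellman-updated sufficiently often, so that the asynchronous value iteration on $\{\overline{\mathcal{M}}_n\}$ contracts to $J^*_n$ while $J^*_n\to J^*$.

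My overall strategy, therefore, is to treat (1)--(4) as essentially inherited from the original iMDP theorem plus an elementary weak-convergence corollary for the bounded indicator functional, and to concentrate the genuine effort on (5): reduce it to the generic iMDP convergence theorem for the augmented system once (a) local consistency of $\{\overline{\mathcal{M}}_n\}$ with the truncated-but-consistent martingale control is verified, and (b) the approximate boundary data are shown to converge to the target boundary conditions of $\mathcal{OPT}2$ via claims (1)--(4). The unboundedness of $c(\cdot)$ and the correct passage to the target boundary on $\partial D$ are the two steps I would expect to consume most of the argument.
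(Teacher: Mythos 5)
Your proposal is correct and follows essentially the same route as the paper: claims (1)--(4) are inherited from the original iMDP convergence results via local consistency of $\{\mathcal{M}_n\}$ (with $\Upsilon_n$, $\gamma_n$, $J^\gamma_n$ handled as the same argument re-run on bounded terminal functionals), and claim (5) follows by applying the identical Markov chain approximation machinery to the augmented dynamics with the boundary data on $\partial D$ supplied by (1)--(4). If anything, your plan is more explicit than the paper's two-paragraph sketch --- which defers all details to the extended ArXiV version --- about the genuine technical points in (5), namely the truncated-but-locally-consistent martingale control and the passage of the approximate boundary values to those of $\mathcal{OPT}2$.
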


\section{Experiments} \label{section:experiments}
%\begin{figure*}[t]
%\begin{center}
%  \includegraphics[width=90mm,bb= 100 250 500 520]{environment.pdf}
%  \caption[An operating environment for the single integrator system.]{An operating environment for the single integrator system. The system starts at $(6.5,-3)$ to reach a goal at $(8,8)$. There are three obstacles in the environment which creates narrow corridors.}
%  \label{figEnvir}
%\end{center}
%\end{figure*}

\begin{figure*}[t]
\begin{center}
  %\vspace{-0.08in}
  \subfigure[Policy on $\mathcal{M}_{500}$.]{
  \label{figsubConstruct1}
  \includegraphics[width=47mm,bb= 78 210 538 582]{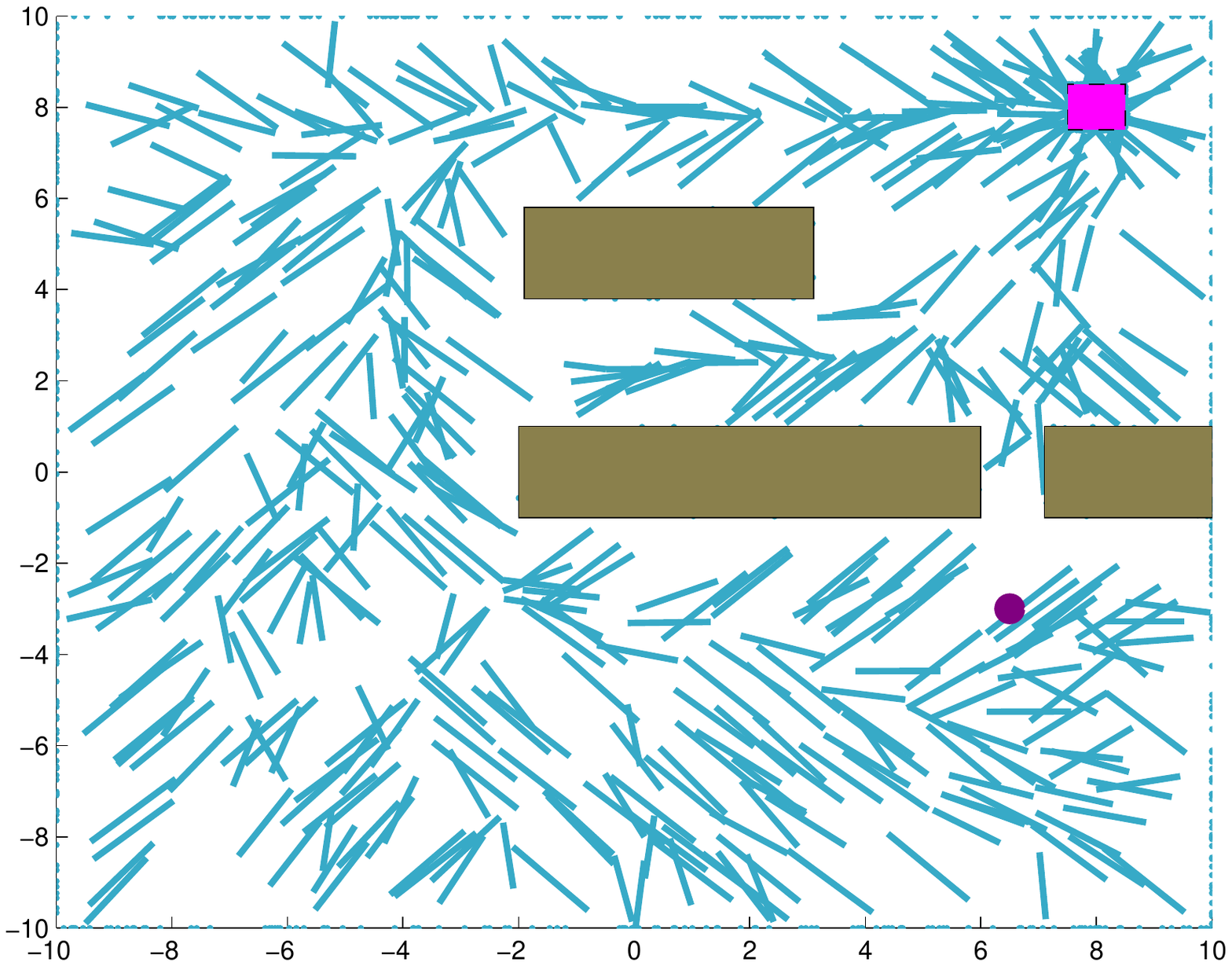}
  }
  \hfill
  \subfigure[Policy on $\mathcal{M}_{1000}$.]{
  \label{figsubConstruct2}
  \includegraphics[width=47mm,bb= 78 210 538 582]{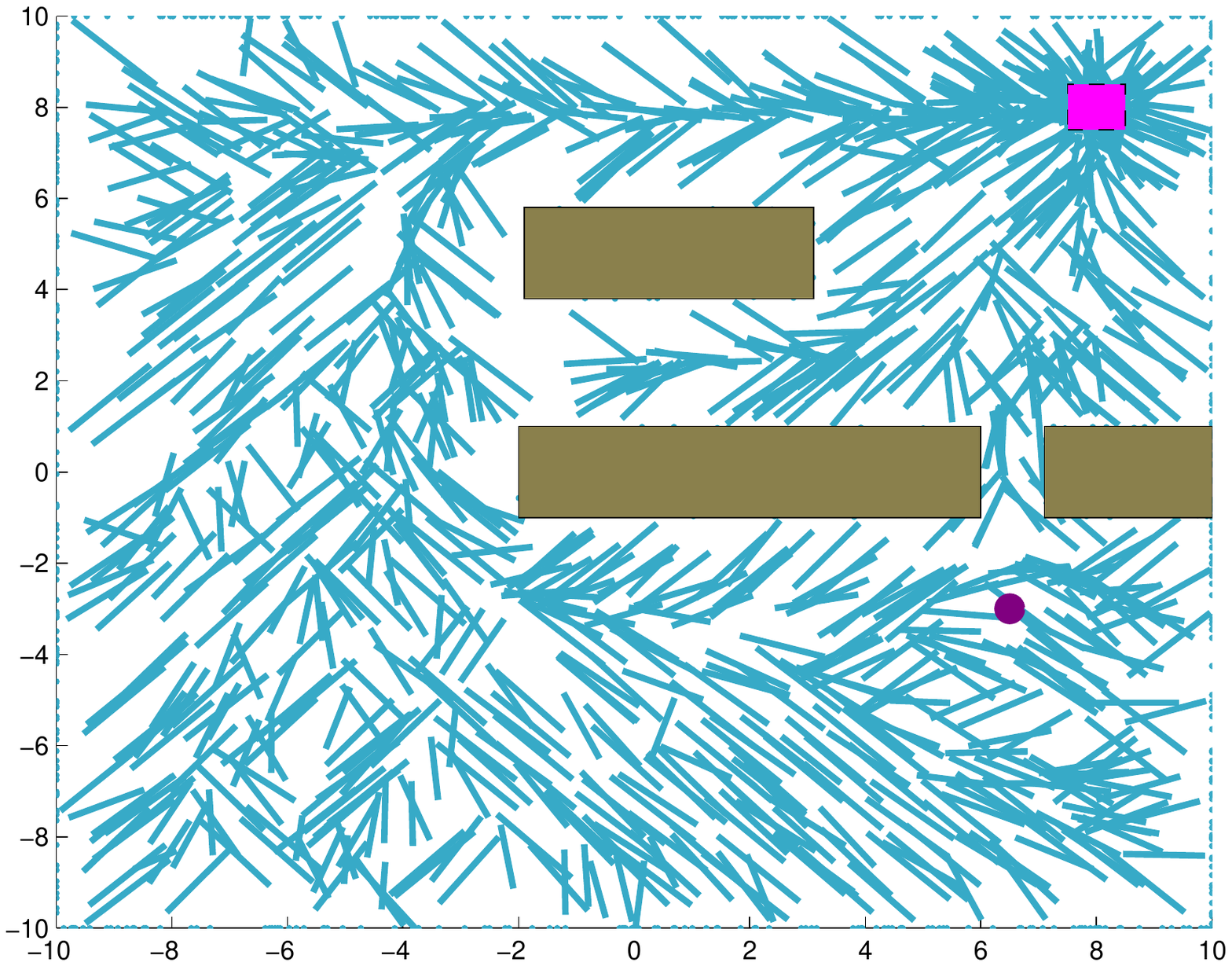}
  }
  \hfill
  \subfigure[Policy on $\mathcal{M}_{3000}$.]{
  \label{figsubConstruct3}
  \includegraphics[width=47mm,bb= 78 210 538 582]{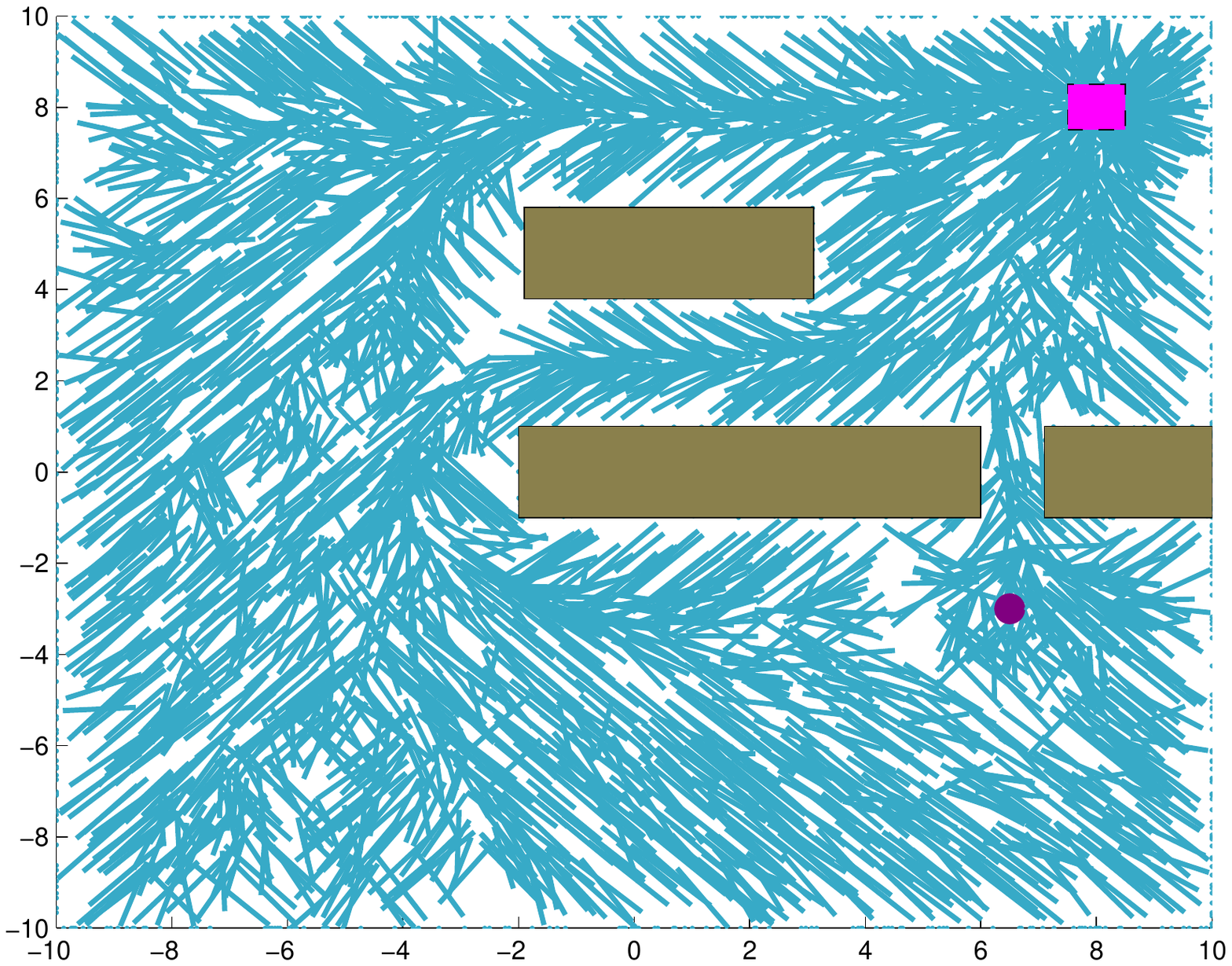}
  }
  \subfigure[Markov chain implied by $\mathcal{M}_{200}$.]{
  \label{figsubConstruct4}
  \includegraphics[width=47mm,bb= 78 210 538 582]{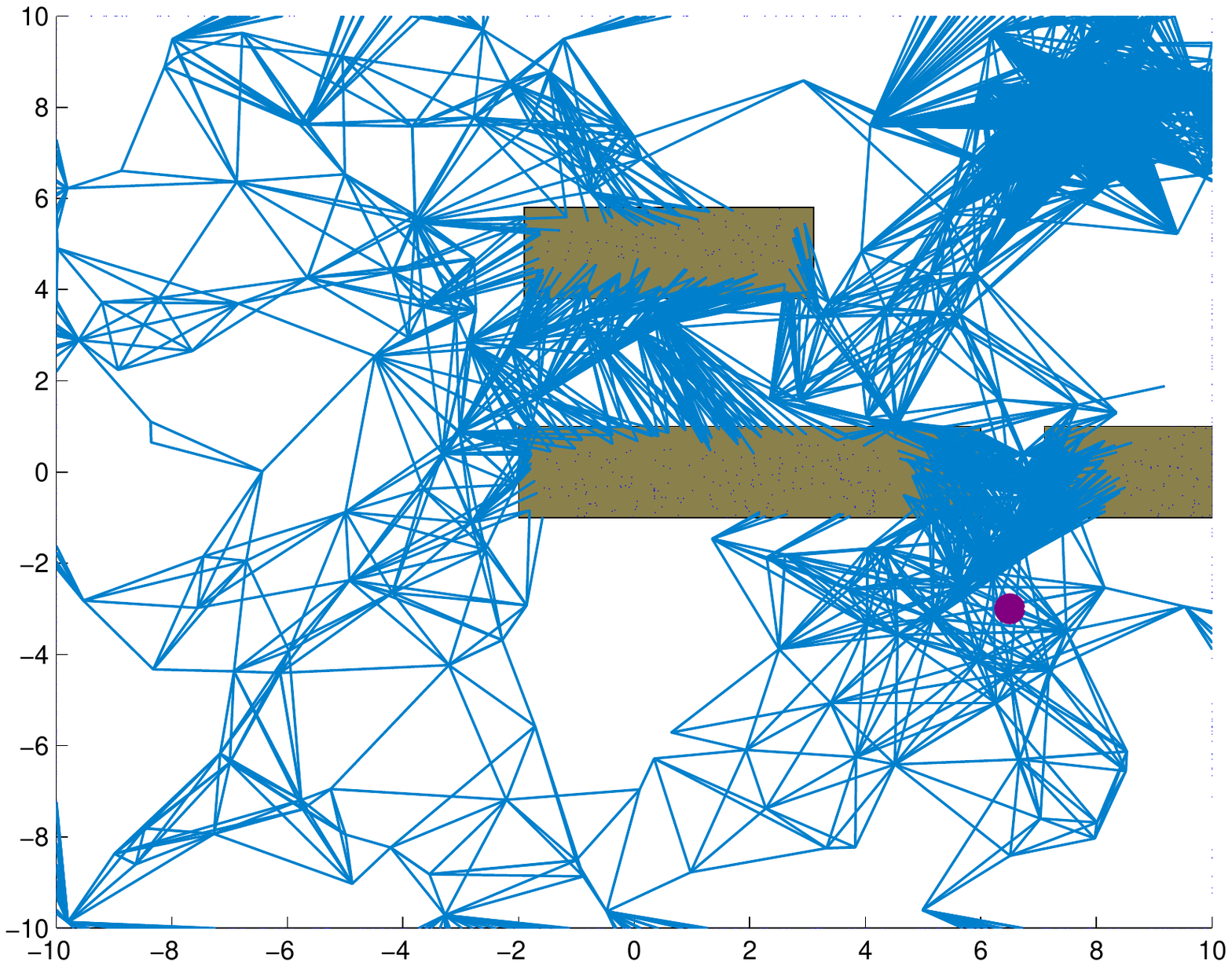}
  }
  \hfill
  \subfigure[Markov chain implied by $\mathcal{M}_{500}$.]{
  \label{figsubConstruct5}
  \includegraphics[width=47mm,bb= 78 210 538 582]{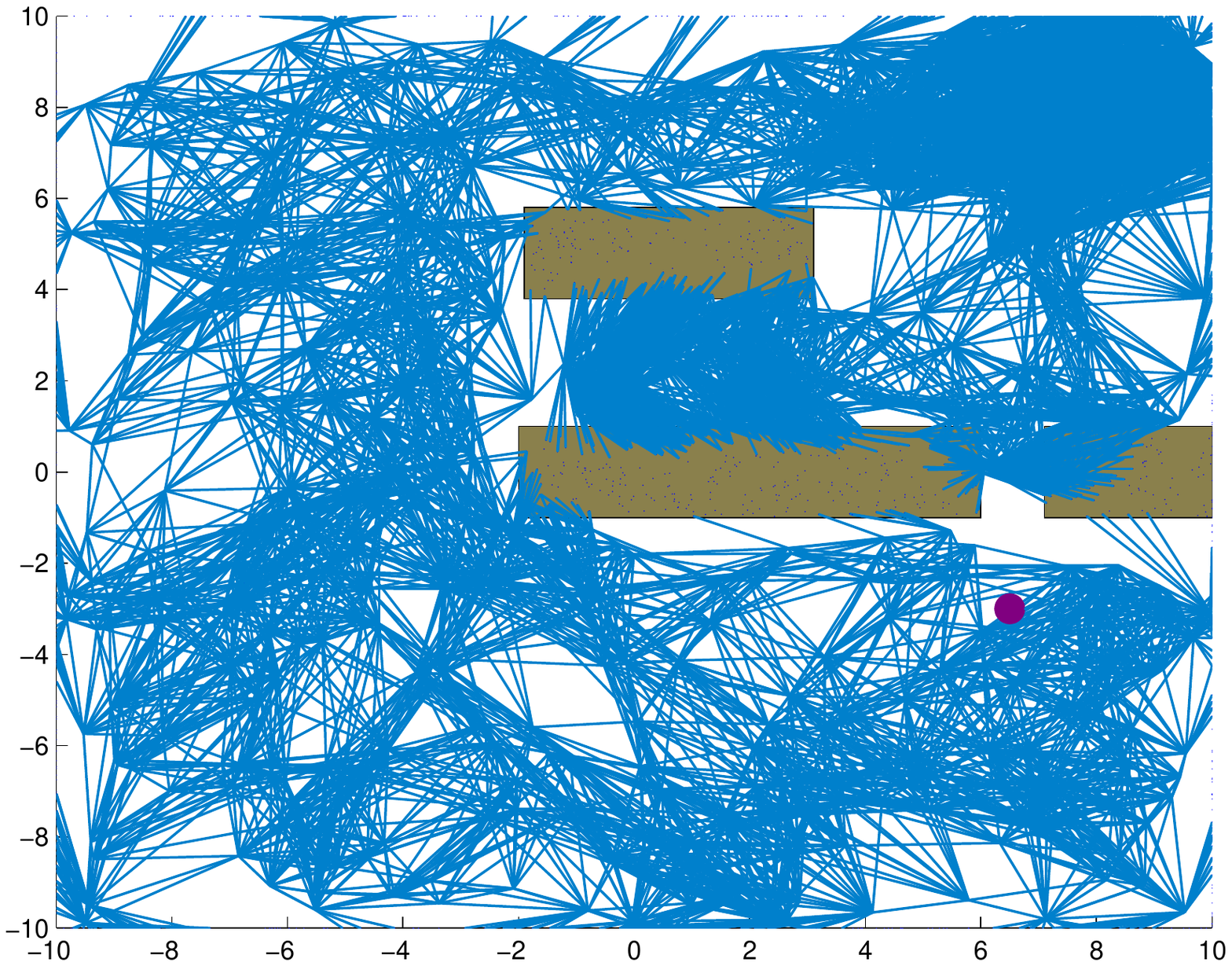}
  }
  \hfill
  \subfigure[Markov chain implied by $\mathcal{M}_{1000}$.]{
  \label{figsubConstruct6}
  \includegraphics[width=47mm,bb=  140 60 1100 860]{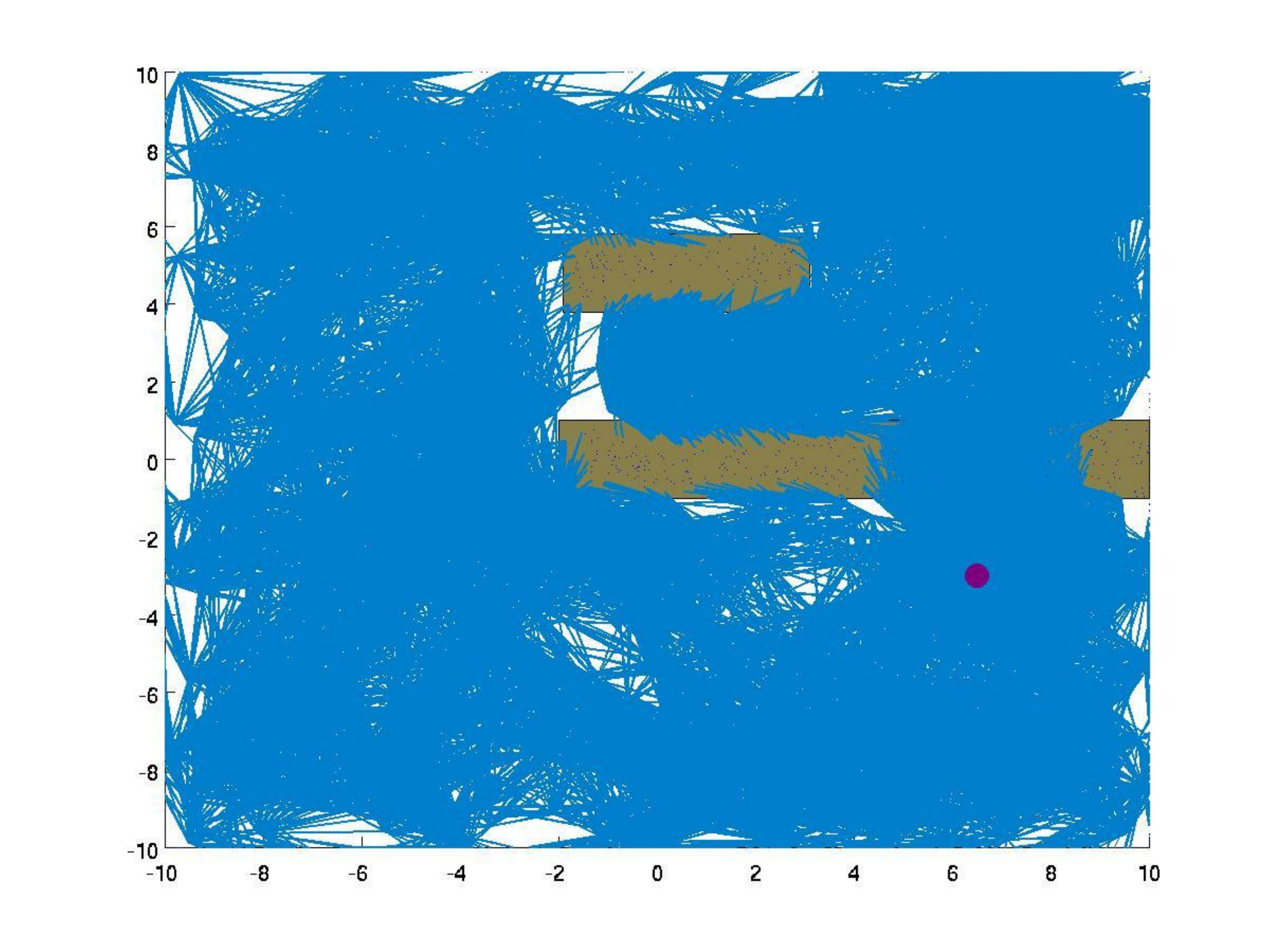}
  }
  \caption{A system with stochastic single integrator dynamics in a cluttered environment. The standard deviation of noise in $x$ and $y$ directions is $0.5$. The cost function is the sum of total energy spent to reach the goal, which is measured as the integral of square of control magnitude, and a terminal cost, which is $-1000$ for the goal region ($G$) and $10$ for the obstacle region ($\Gamma$), with a discount factor $\alpha=0.9$. Figures~\ref{figsubConstruct1}-\ref{figsubConstruct3} depict anytime policies on the boundary $S \times 1.0$ over iterations. Figures~\ref{figsubConstruct4}-\ref{figsubConstruct6} show the Markov chains created by anytime policies on $\mathcal{M}_n$ over iterations.}
  \label{figConstruct}
\end{center}
\end{figure*}

\begin{figure*}[t]
\begin{center}
  \subfigure[Policy on $\mathcal{M}_{4000}$.]{
  \label{figBinfo1}
  \includegraphics[width=47mm,bb= 78 210 538 582]{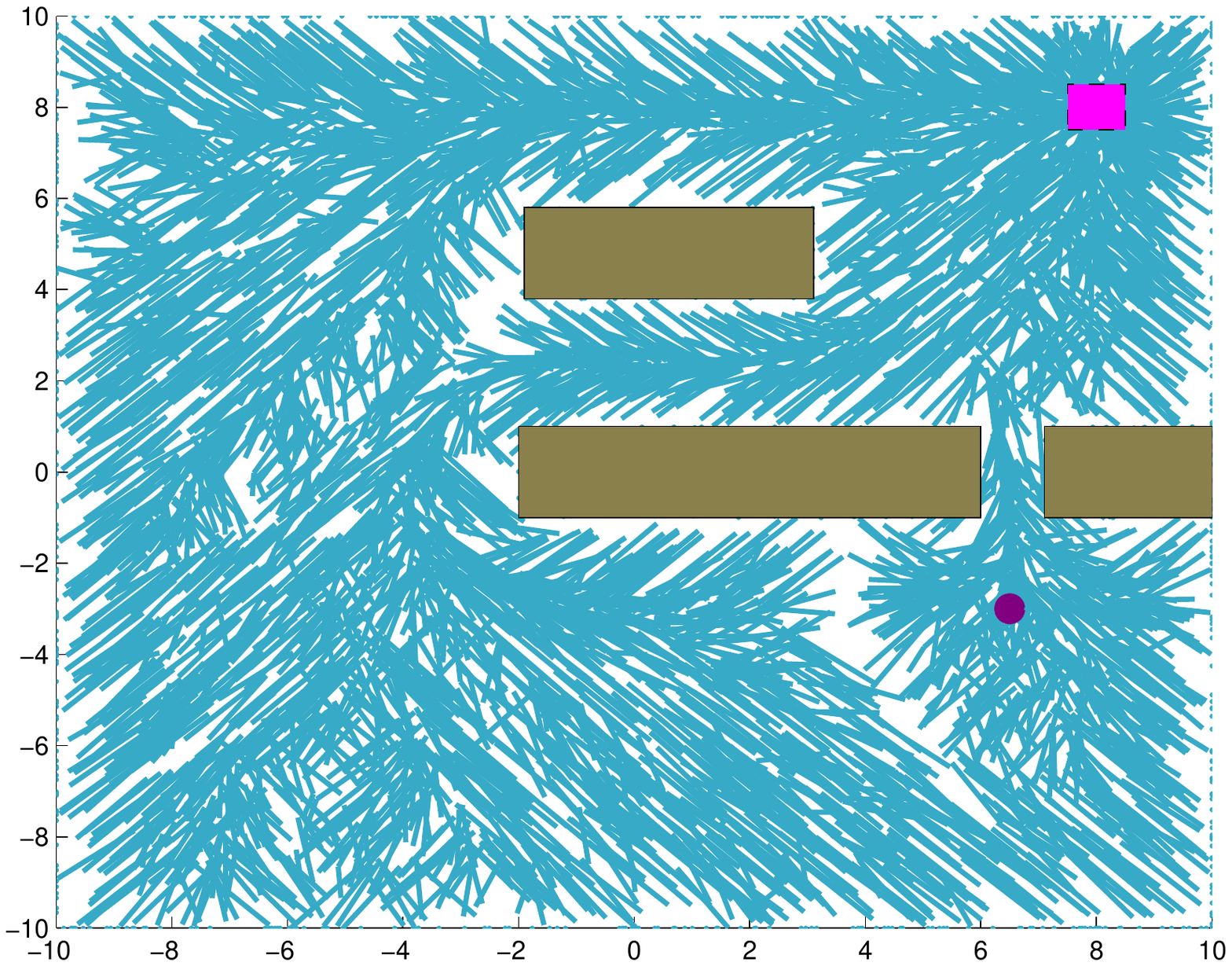}
  }
  \hfill
  \subfigure[Value function $J_{4000,1.0}$.]{
  \label{figBinfo2}
  \includegraphics[width=47mm,bb= 78 210 538 582]{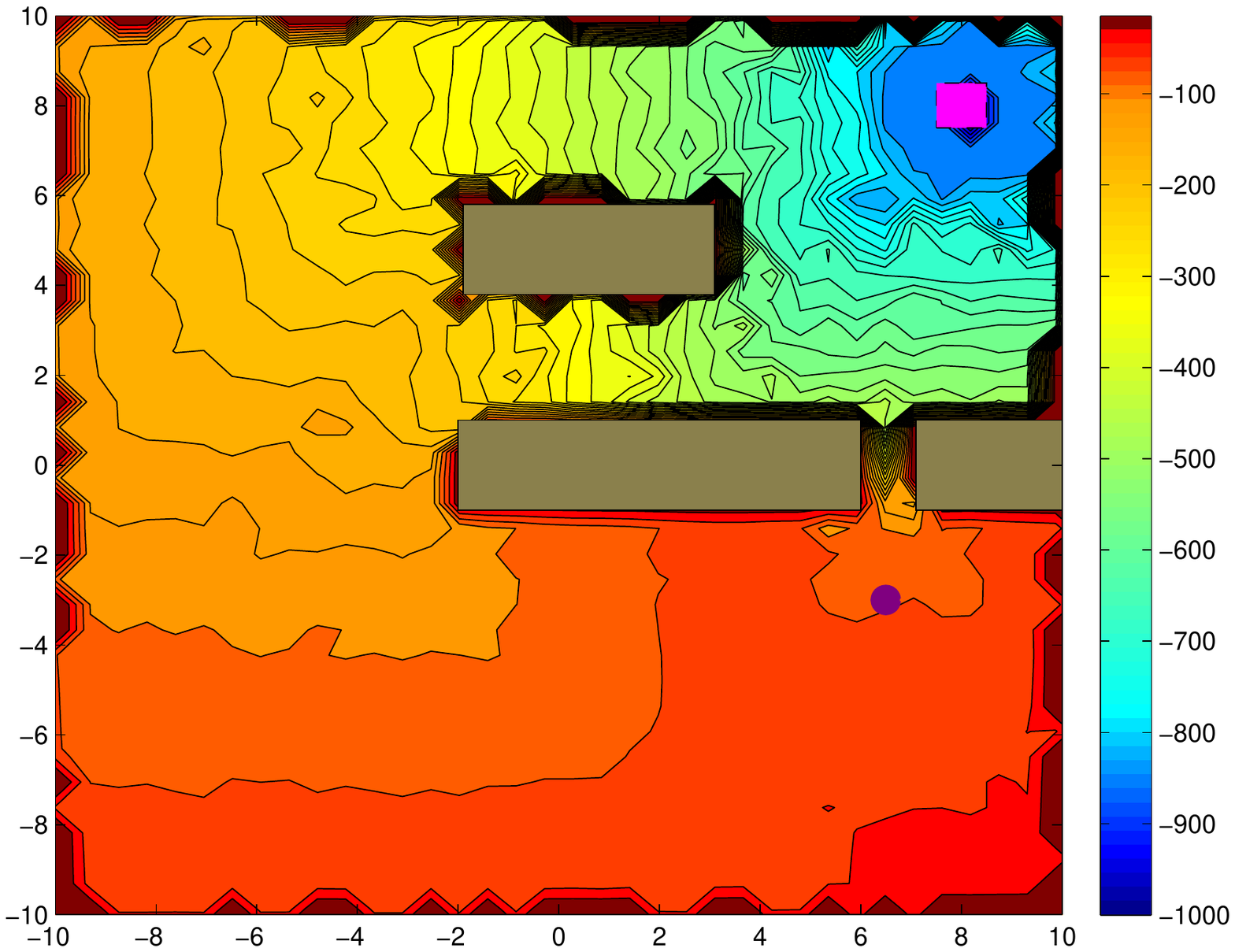}
  }
  \hfill
  \subfigure[Collision probability $\Upsilon_{4000}$.]{
  \label{figBinfo3}
  \includegraphics[width=47mm,bb= 78 210 538 582]{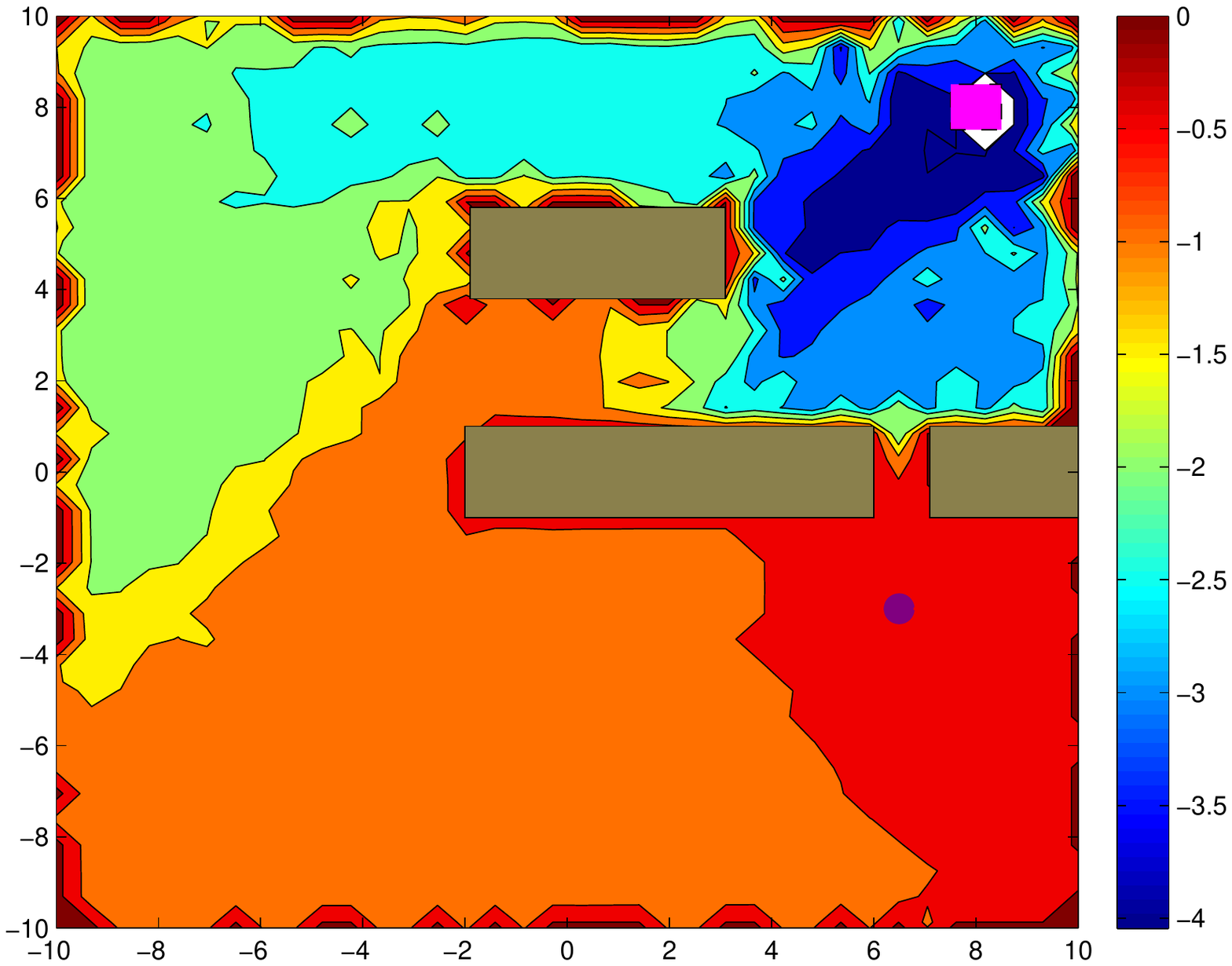}
  }
  \hfill
  \subfigure[Policy map induced by $\gamma_{4000}$.]{
  \label{figBinfo4}
  \includegraphics[width=47mm,bb= 78 210 538 582]{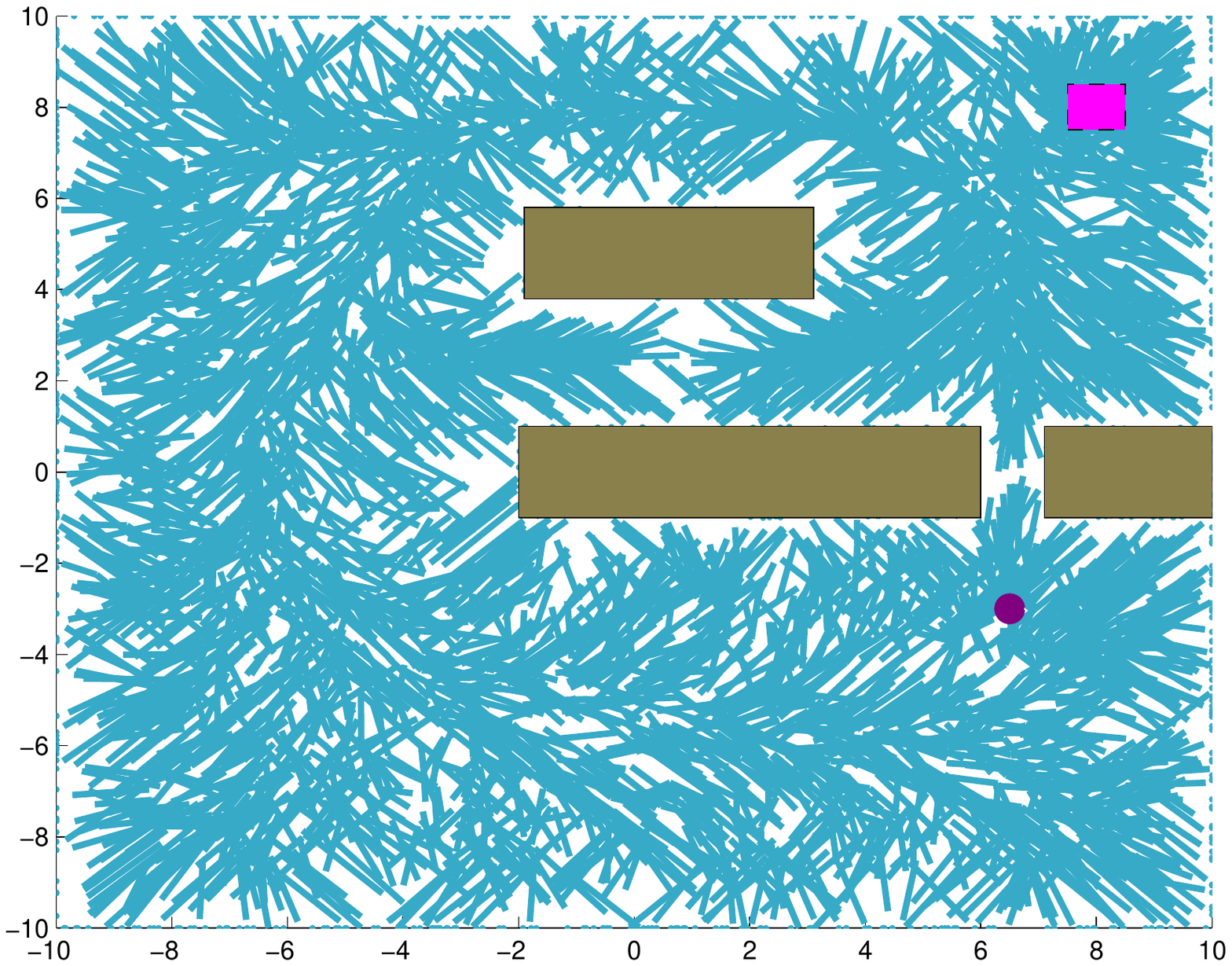}
  }
  \hfill
  \subfigure[Value function $J^{\gamma}_{4000}$.]{
  \label{figBinfo5}
  \includegraphics[width=47mm,bb= 78 210 538 582]{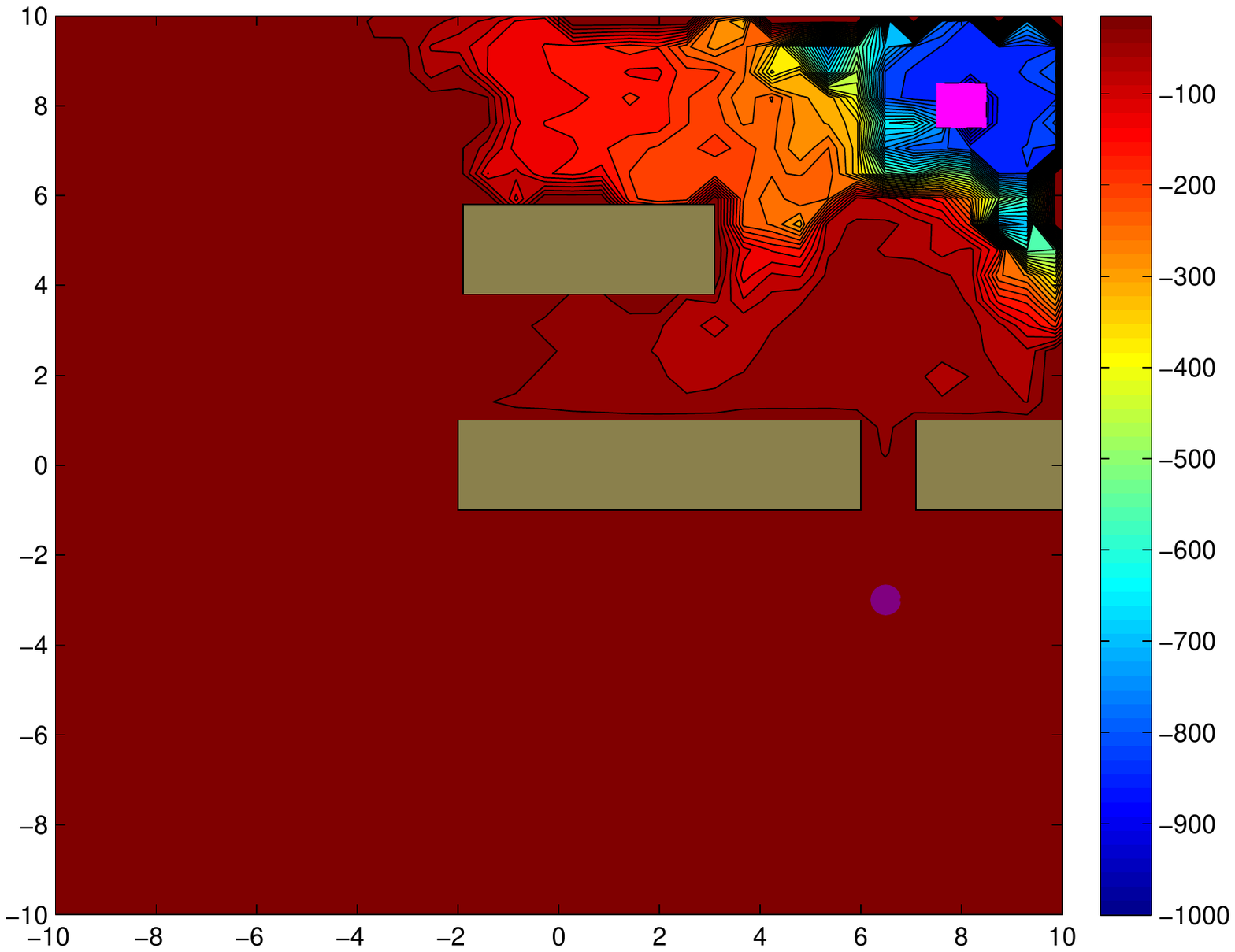}
  }
  \hfill
  \subfigure[Min-collision prob. $\gamma_{4000}$.]{
  \label{figBinfo6}
  \includegraphics[width=47mm,bb= 78 210 538 582]{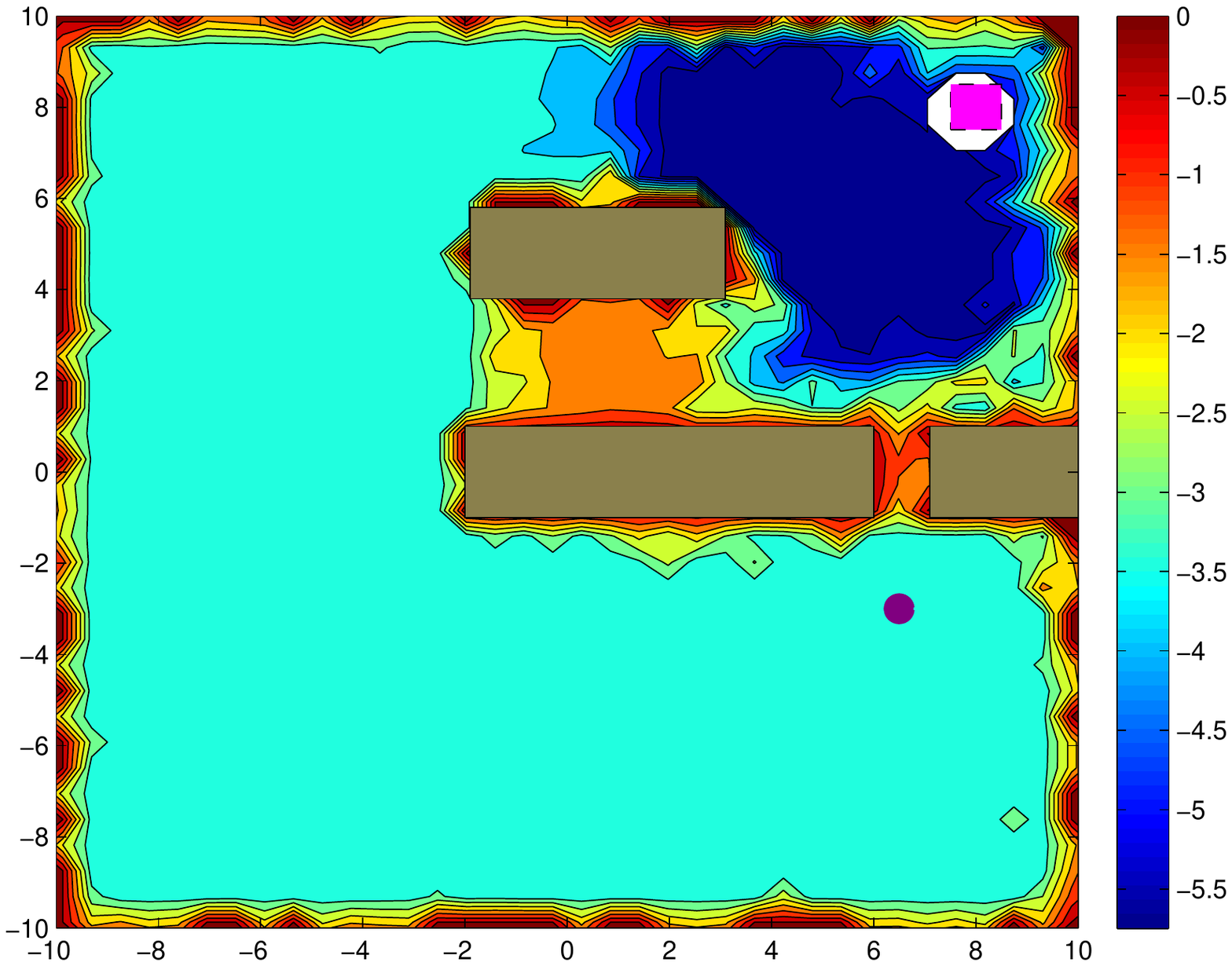}
  }
  \caption{ Figures~\ref{figBinfo1}-\ref{figBinfo3} shows a policy map, cost value  function and the associated collision probability function for the unconstrained problem after 4000 iterations. Similar, Figures~\ref{figBinfo4}-\ref{figBinfo6} show a policy map, the associated value function, and the min-collision probability function after 4000 iterations. These values provide the boundary values for the stochastic target problem. For the unconstrained problem, the policy map encourages the system to go through the narrow corridors with low cost-to-go values and high probabilities of collision. In contrast, the policy map from the min-collision probability problem encourages the system to detour around the obstacles with high cost-to-go values and low probabilities of collision.}
  \label{figBoundaryInfo}
\end{center}
\end{figure*}

\begin{figure*}[t]
\begin{center}
  \subfigure[Policy on $\overline{\mathcal{M}}_{200}$]{
  \label{figsubConstruct7}
  \includegraphics[width=47mm,bb= 78 210 538 582]{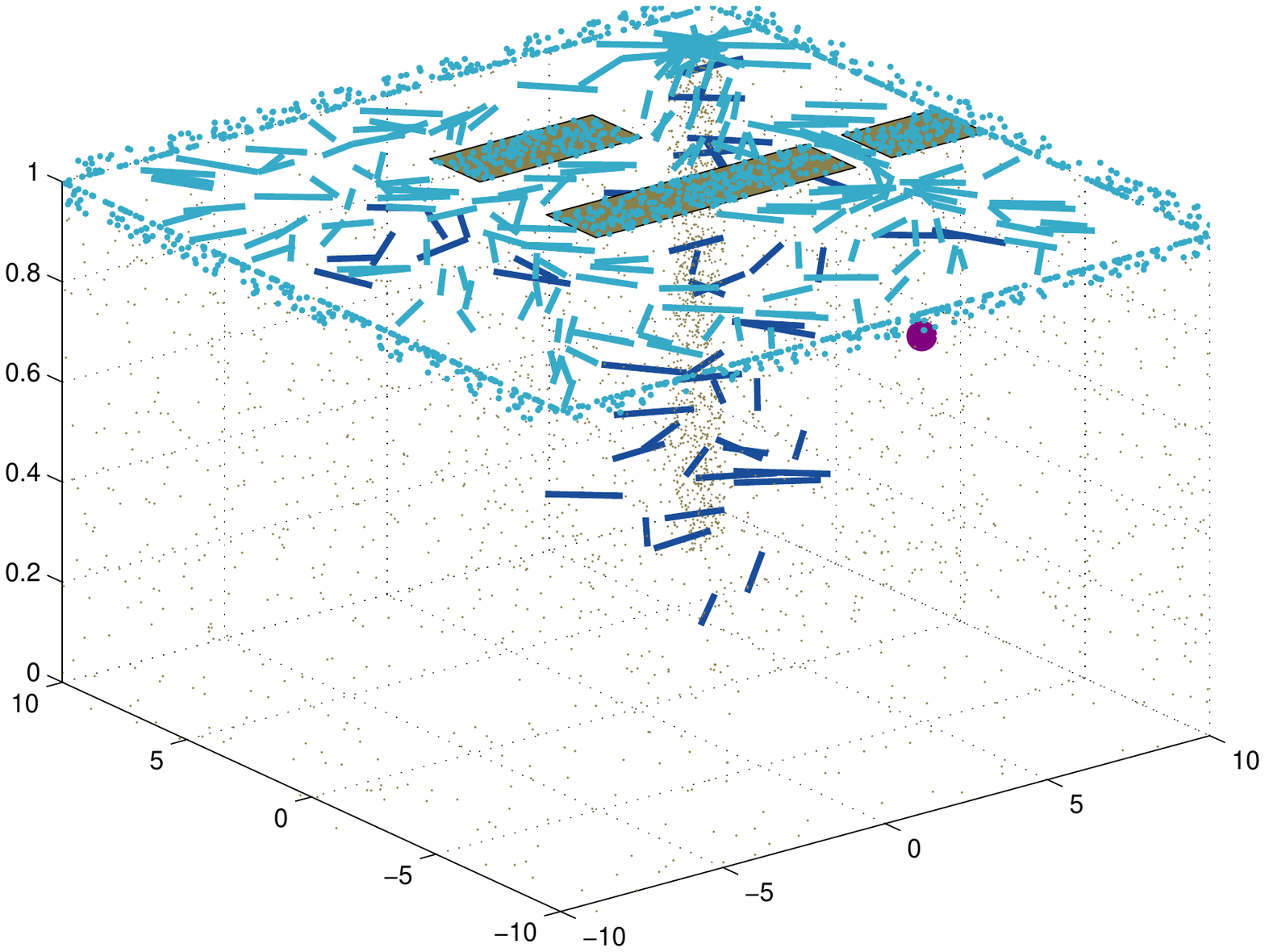}
  }
  \hfill
  \subfigure[Policy on $\overline{\mathcal{M}}_{3000}$]{
  \label{figsubConstruct8}
  \includegraphics[width=47mm,bb= 78 210 538 582]{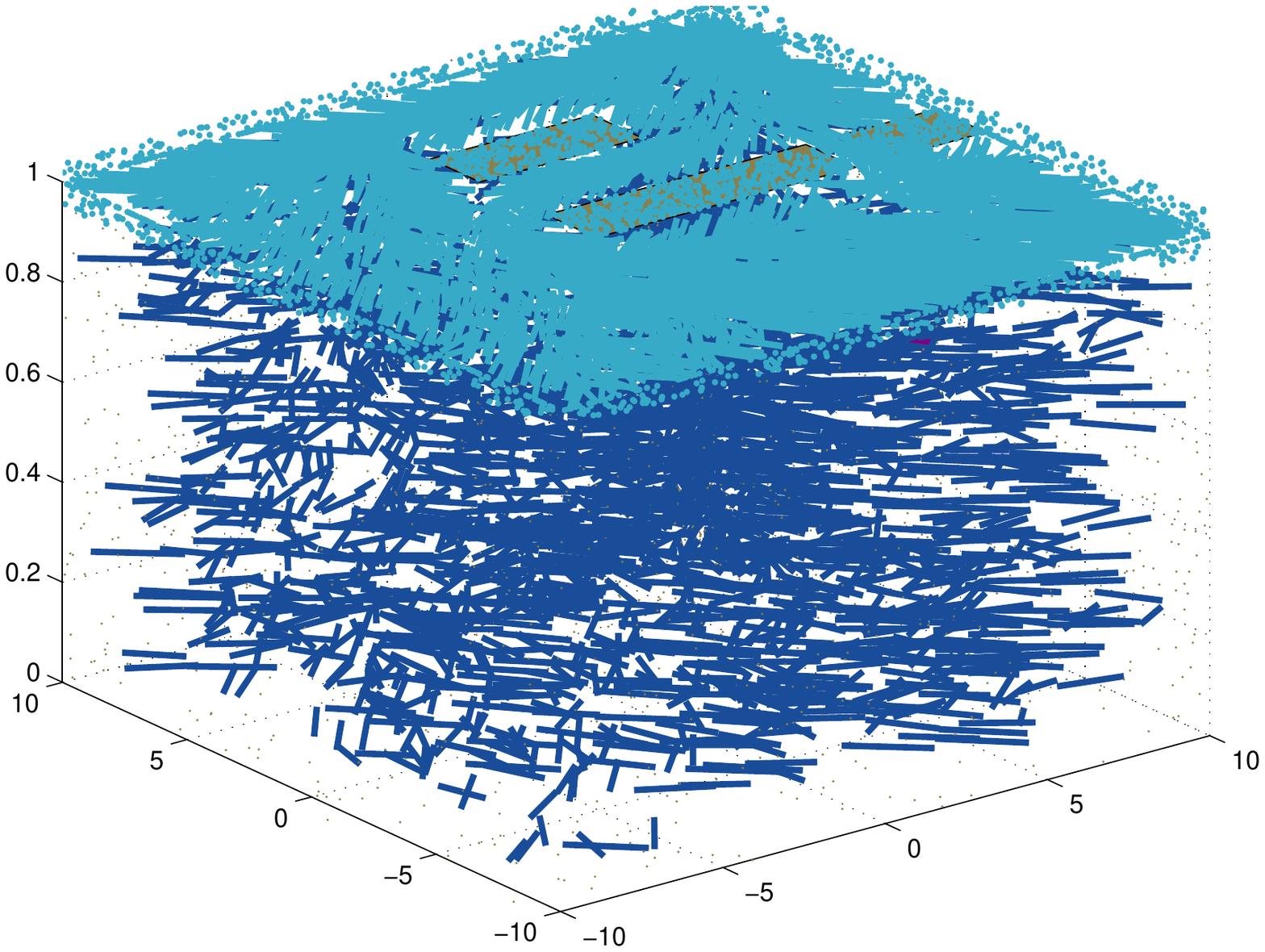}
  }
  \hfill
  \subfigure[Policy on $\overline{\mathcal{M}}_{3000} \backslash \mathcal{M}_{3000}$: Top-down view]{
  \label{figsubConstruct9}
  \includegraphics[width=47mm,bb= 78 210 538 582]{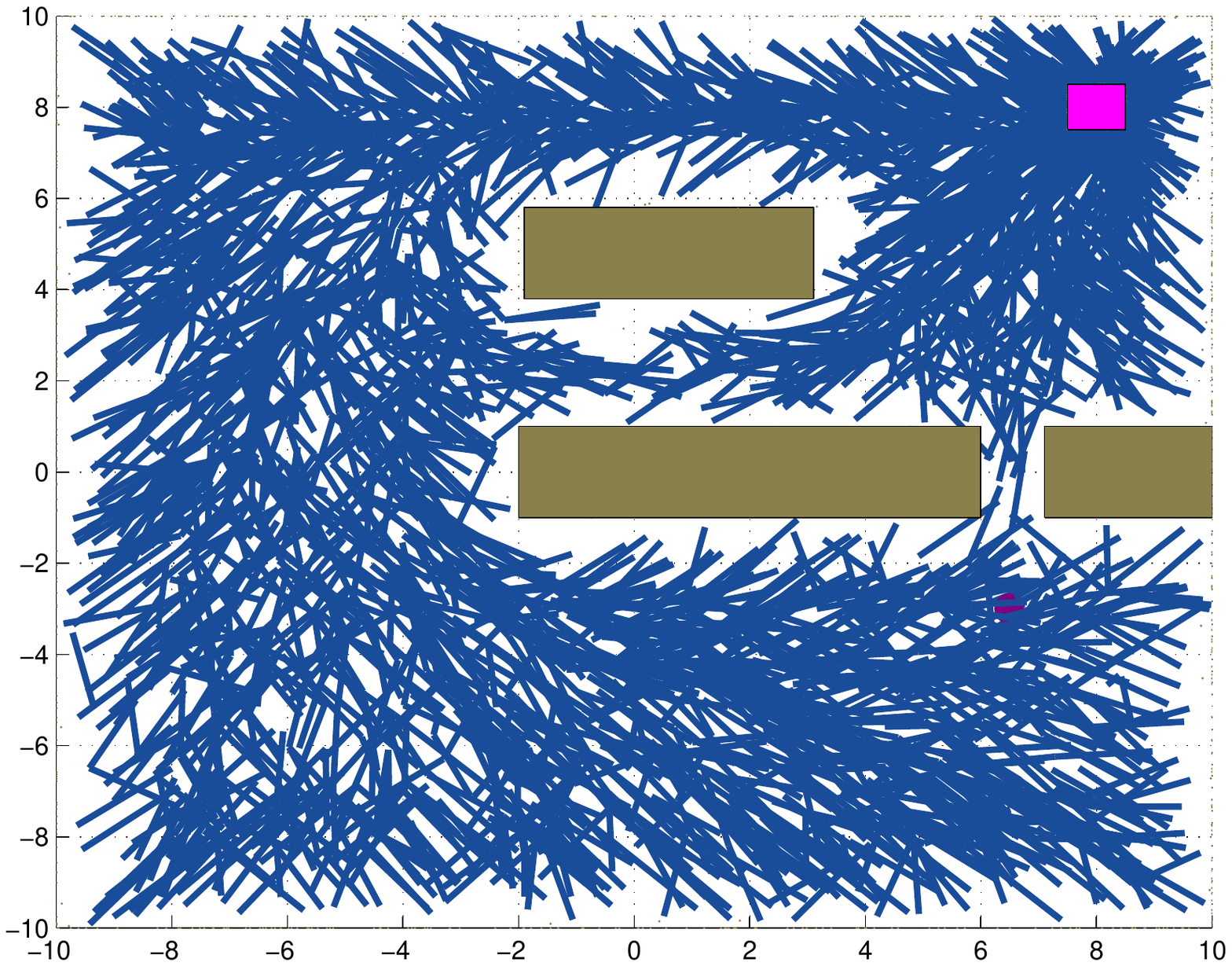}
  }
  \hfill
  \subfigure[Markov chain implied by $\overline{\mathcal{M}}_{200}$.]{
  \label{figsubConstruct10}
  \includegraphics[width=47mm, bb= 140 60 1100 860]{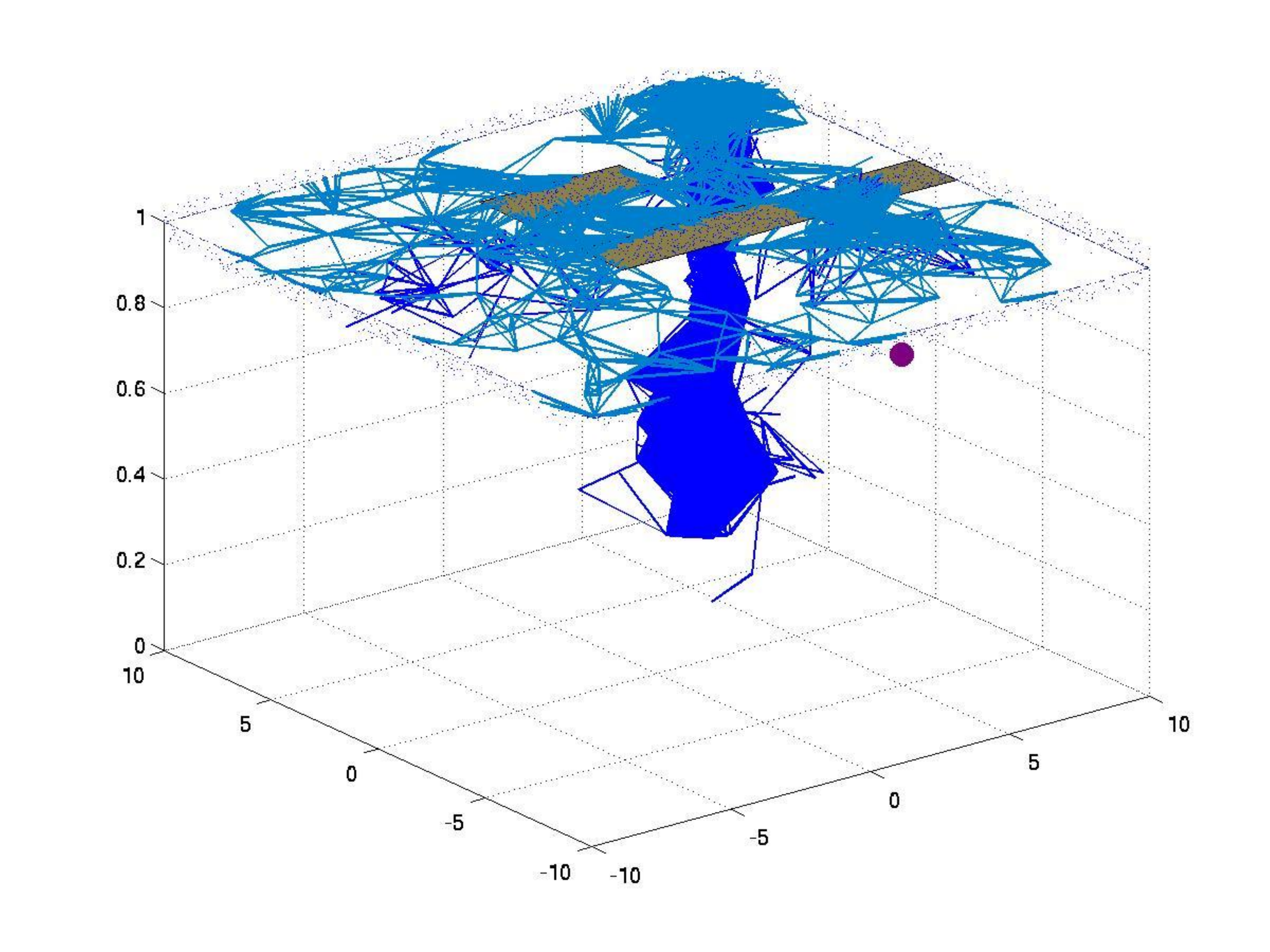}
  }
  \hfill
  \subfigure[Markov chain implied by $\overline{\mathcal{M}}_{500}$.]{
  \label{figsubConstruct11}
  \includegraphics[width=47mm,bb= 140 60 1100 860]{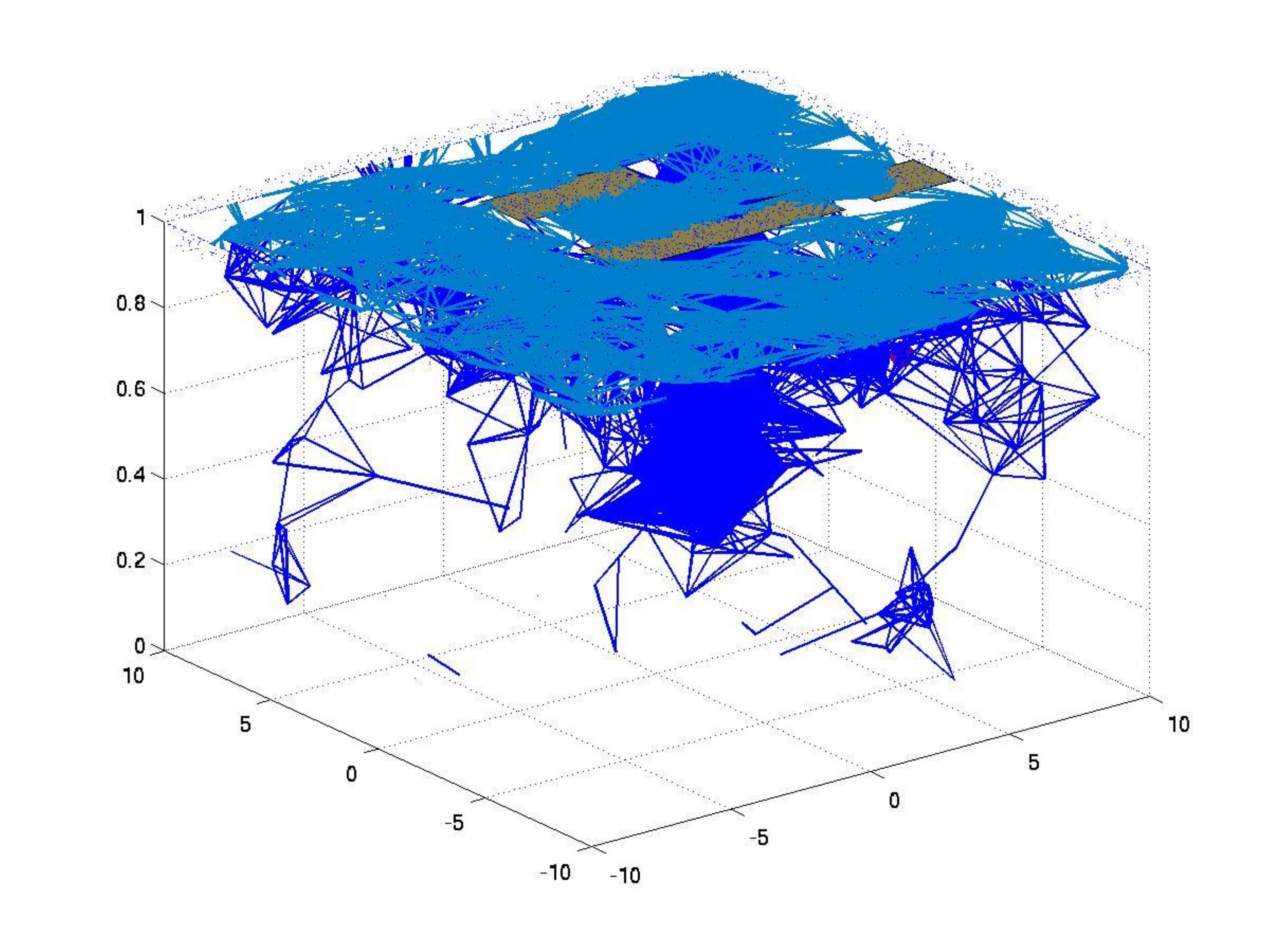}
  }
  \hfill
  \subfigure[Markov chain implied by $\overline{\mathcal{M}}_{1000}$.]{
  \label{figsubConstruct12}
  \includegraphics[width=47mm,bb= 140 60 1100 860]{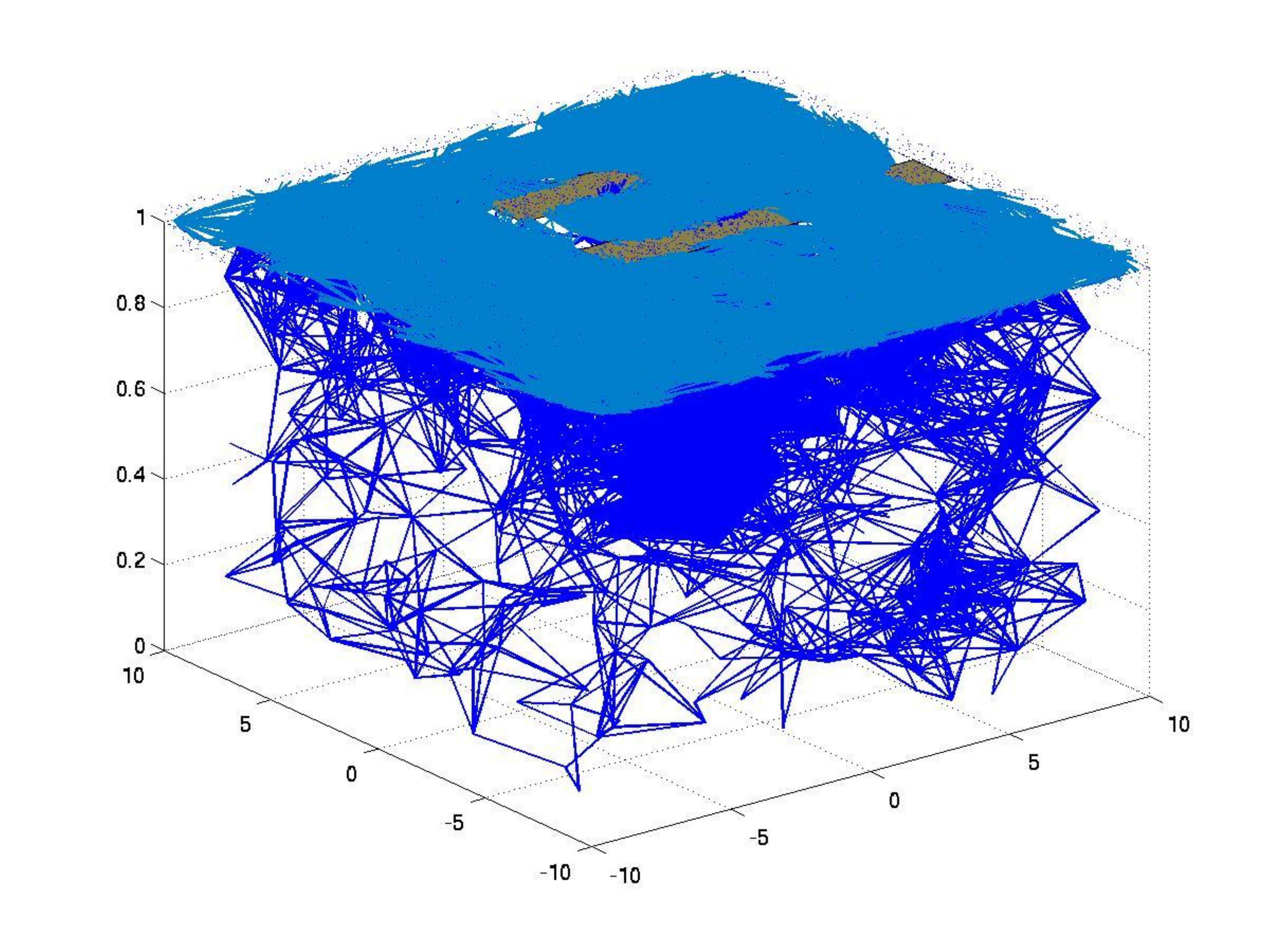}
  }
  \caption{Figures~\ref{figsubConstruct7}-\ref{figsubConstruct9} and Figures~\ref{figsubConstruct10}-\ref{figsubConstruct12} show the corresponding anytime policies and the associated Markov chains on $\overline{\mathcal{M}}_{n}$ respectively. In Fig.~\ref{figsubConstruct9}, we show the top-down view of a policy for states in $\overline{\mathcal{M}}_{3000} \backslash \mathcal{M}_{3000}$. We observe that the system will try to avoid the narrow corridors when the risk tolerance is low. We can also observe that the structures of the Markov chains quickly cover the state spaces $S$ and $\overline{S}$ with connected random graphs.}
  \label{figConstruct1}
\end{center}
\end{figure*}

\begin{figure*}[!]
\begin{center}
  %\vspace{-0.08in}
  \subfigure[Value function $J_{200,1.0}$.]{
  \label{figsubValue1}
  \includegraphics[width=47mm,bb= 78 210 538 582]{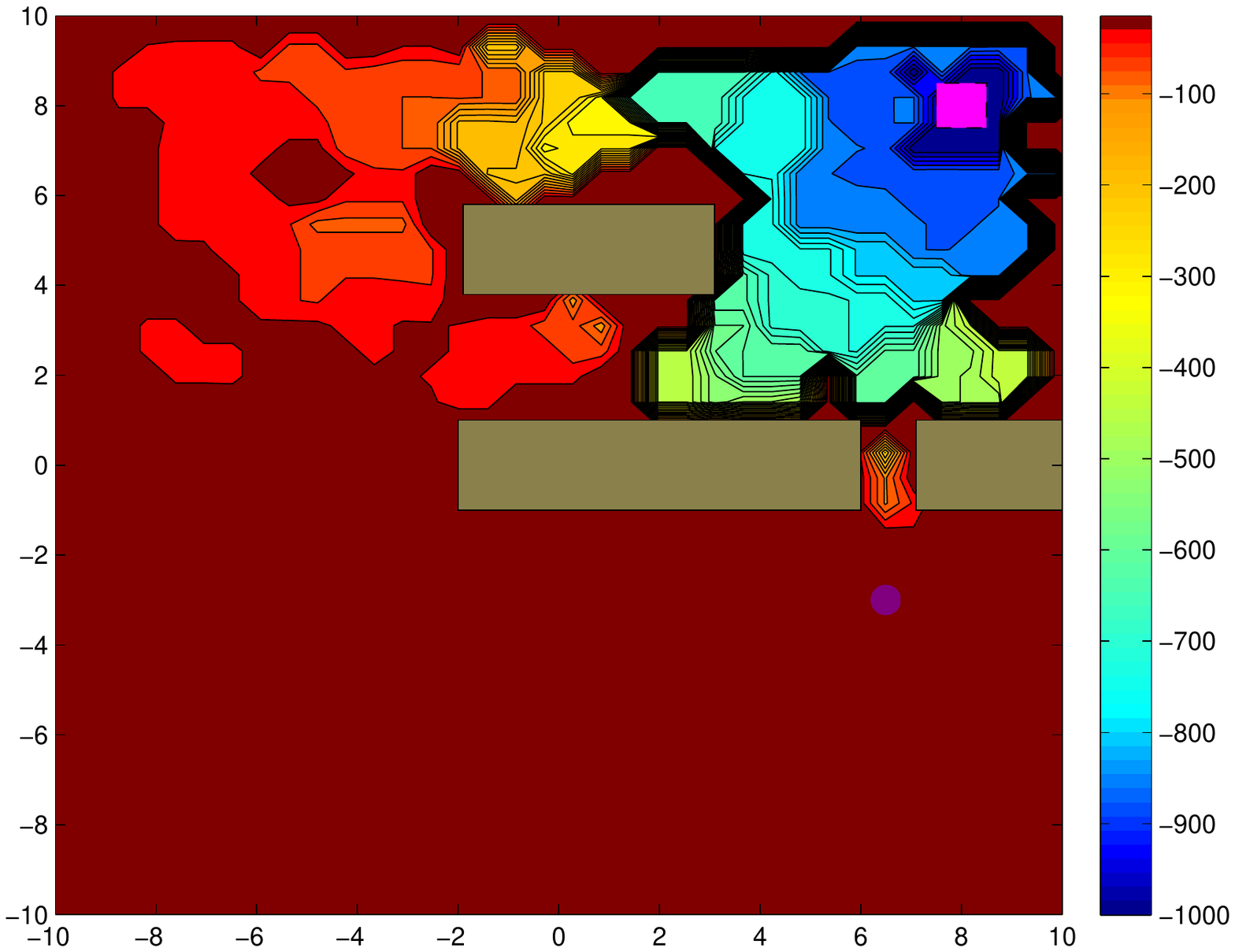}
  }
  \hfill
  \subfigure[Value function $J_{2000,1.0}$.]{
  \label{figsubValue2}
  \includegraphics[width=47mm,bb= 78 210 538 582]{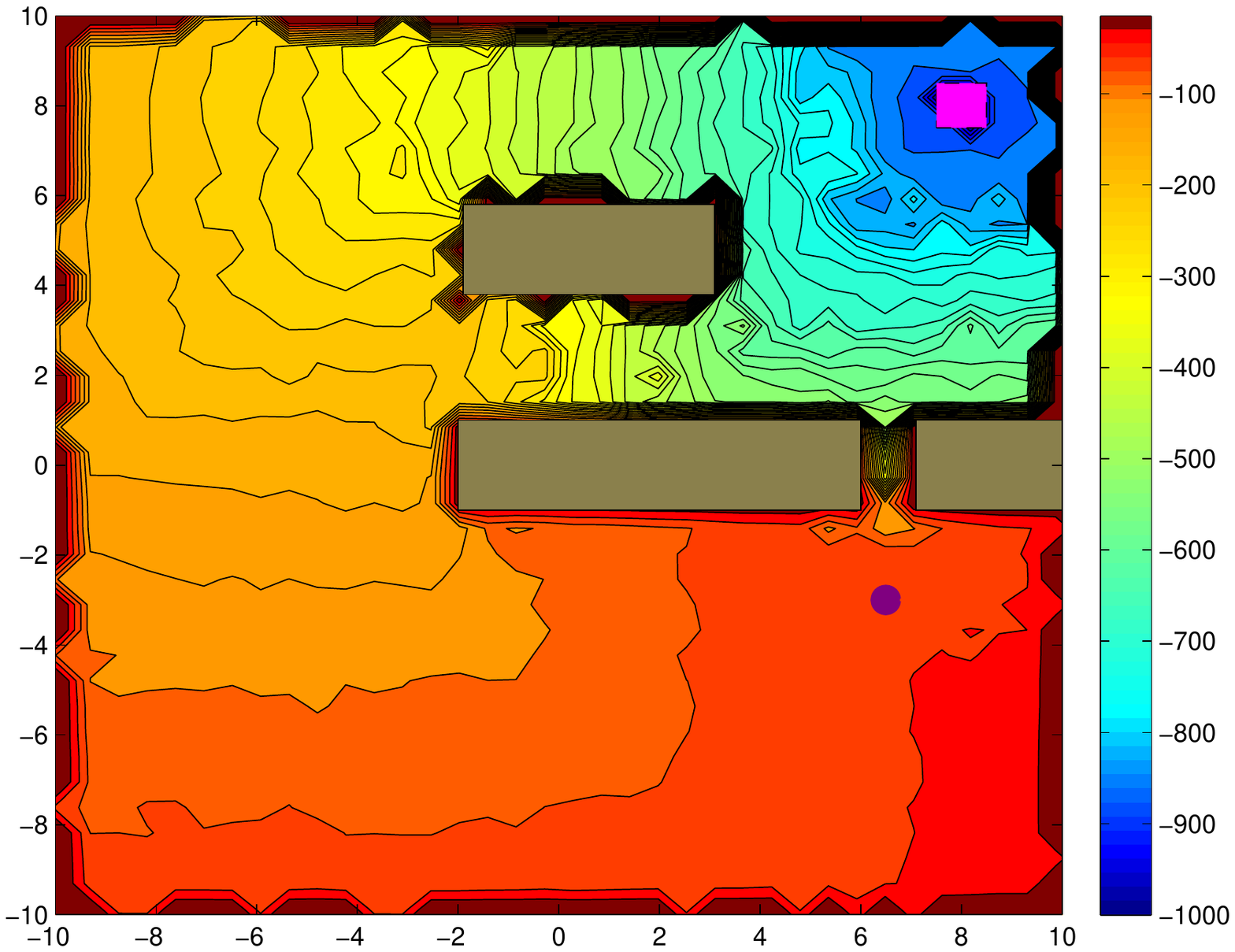}
  }
  \hfill
  \subfigure[Value function $J_{4000,1.0}$.]{
  \label{figsubValue3}
  \includegraphics[width=47mm,bb= 78 210 538 582]{value_0003999.pdf}
  }
%  \subfigure[Min-collision probability $\gamma_{4000}$.]{
%  \label{figsubValue4}
%  \includegraphics[width=47mm,bb= 78 210 538 582]{}
%  }
%  \hfill
%  \subfigure[Collision probability induced by $J_{4000,1.0}$]{
%  \label{figsubValue5}
%  \includegraphics[width=47mm,bb= 78 210 538 582]{policyProb_0003999.pdf}
%  }
%  \hfill
%  \subfigure[Min-collision value $J^{\gamma}_{4000}$ induced by $\gamma_{4000}$]{
%  \label{figsubValue6}
%  \includegraphics[width=47mm,bb= 78 210 538 582]{minProbValue_0003999.pdf}
%  }
  \hfill
  \subfigure[Value function $J_{4000,0.1}$]{
  \label{figsubValue7}
  \includegraphics[width=47mm,bb= 78 210 538 582]{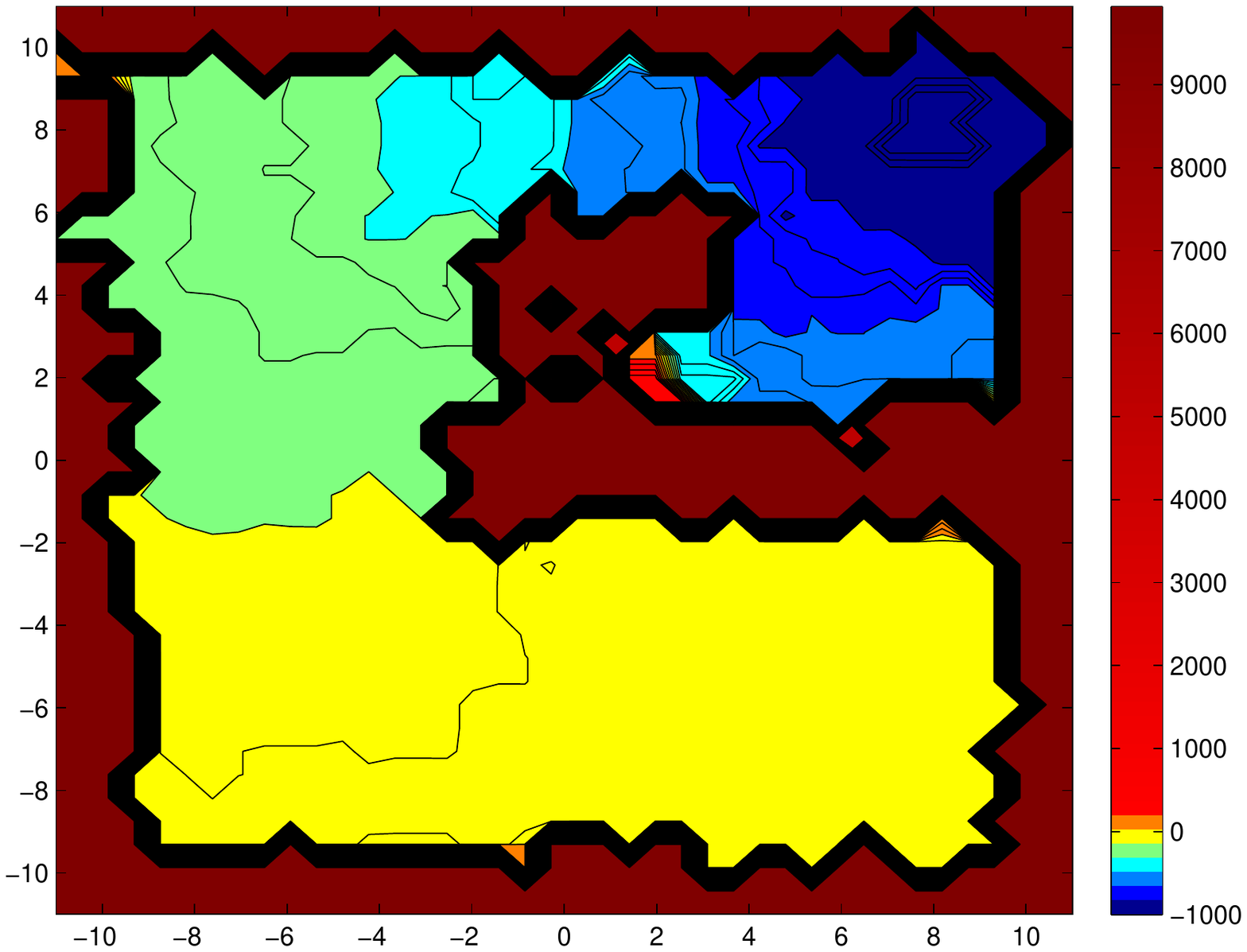}
  }
  \hfill
  \subfigure[Value function $J_{4000,0.5}$]{
  \label{figsubValue8}
  \includegraphics[width=47mm,bb= 78 210 538 582]{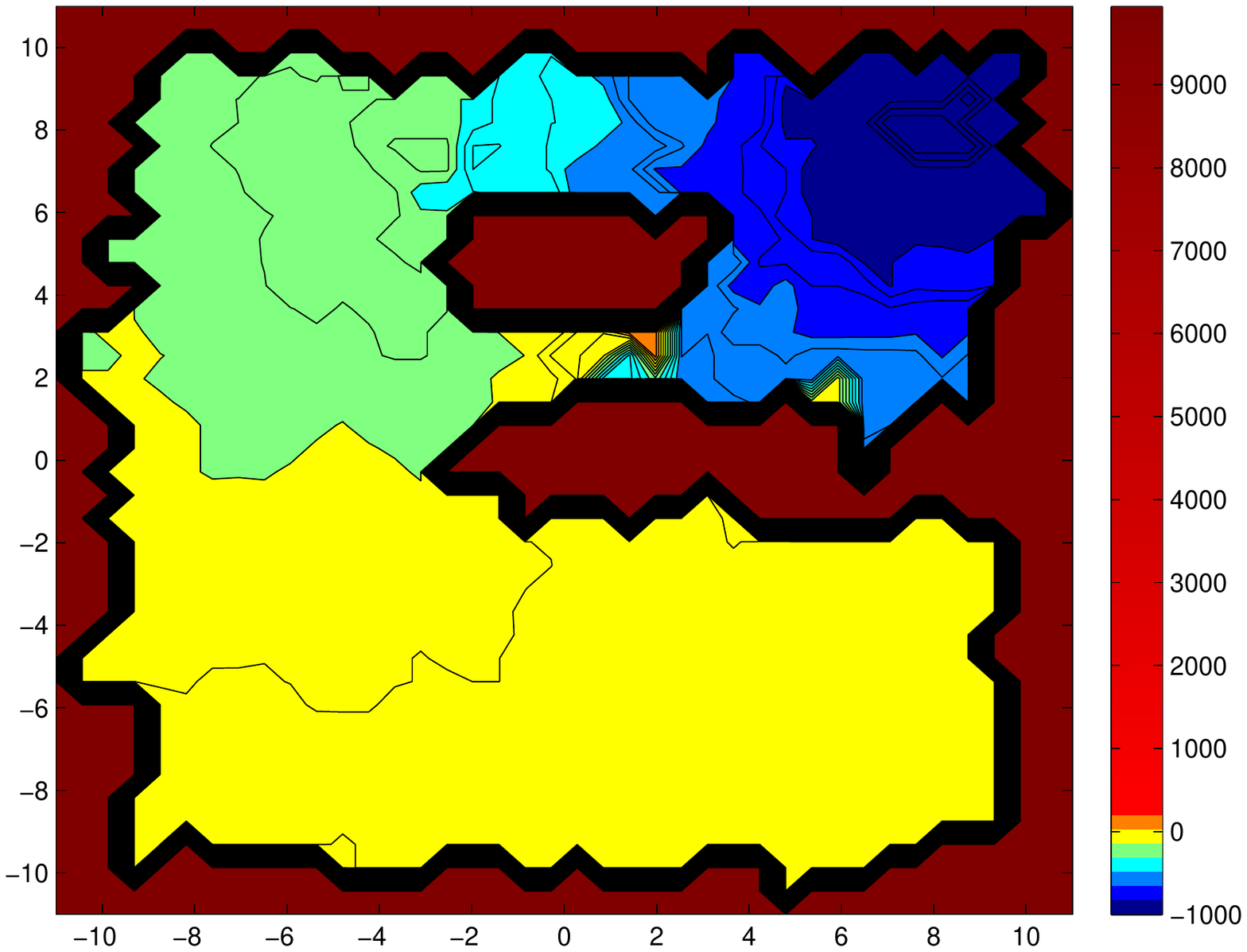}
  }
  \hfill
  \subfigure[Value function $J_{4000,0.9}$]{
  \label{figsubValue9}
  \includegraphics[width=47mm,bb= 78 210 538 582]{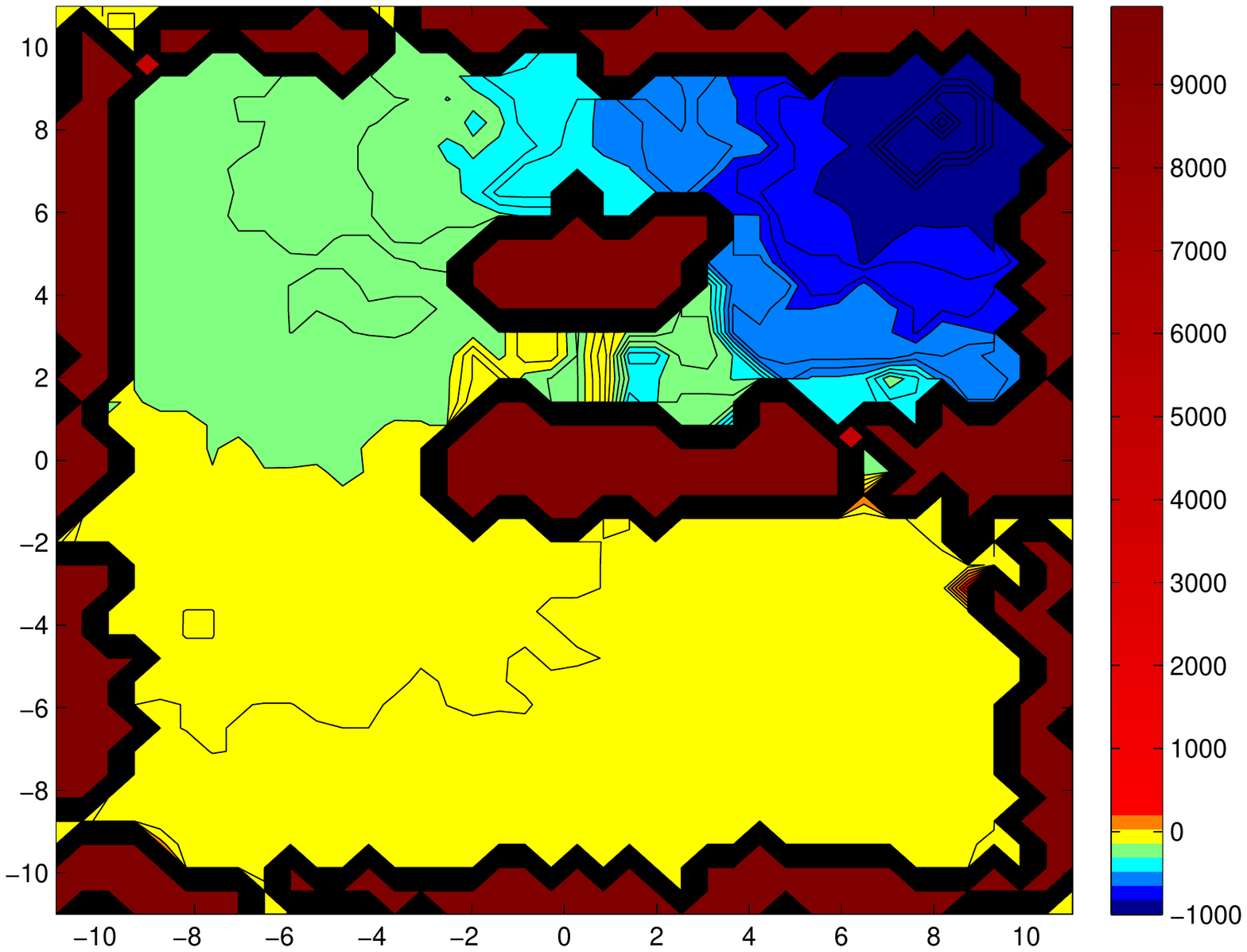}
  }
  \caption{Examples of incremental value functions over iterations. Figure~\ref{figsubValue1}-\ref{figsubValue3} show the approximate cost-to-go functions $J_n$ when the probability threshold $\eta_0$ is 1.0 for $n=200$, $2000$ and $4000$. Figures~\ref{figsubValue7}-\ref{figsubValue9} present the approximate cost-to-go function $J_{4000}$ in $\overline{\mathcal{M}}_{4000}$ for augmented states where their martingale components are $0.1$, $0.5$ and $0.9$ respectively. The plot shows that the lower the martingale state is, the higher the cost value is -- which is consistent with the characteristics in Section~\ref{subsection:characterization}.
} 
  \label{figValue}  
\end{center}
\end{figure*}

\begin{figure*}[t]
\begin{center}
  \subfigure[Unconstrained problem trajectories: simulated collision probability $49.27\%$, average cost $-125.20$.]{
  \label{figsubTrajNormal}
  \includegraphics[width=70mm,bb= 78 210 538 582]{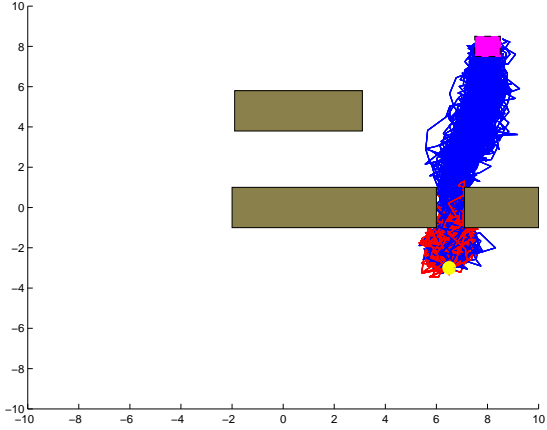}
  }
  \ \ \ \ \ \ \ \ \ \ \ \ \ \ \ \ \ \ \	
  \subfigure[Min-collision trajectories: simulated collision probability $0\%$, average cost $-17.85$.]{
  \label{figsubTrajMinProb}
  \includegraphics[width=70mm,bb= 78 210 538 582]{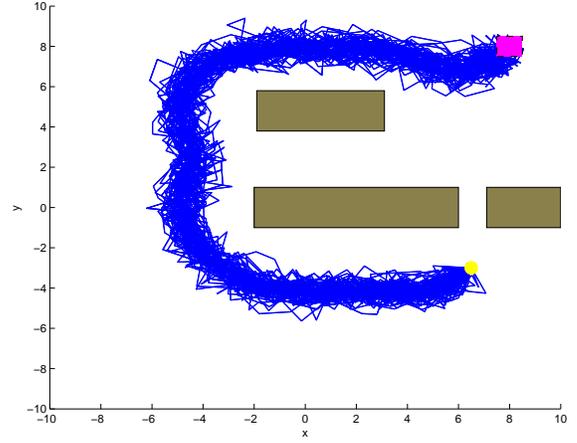}
  }
  \caption{Examples of trajectories from policies of the unconstrained problem (Fig.~\ref{figsubTrajNormal}) and the min-collision probability problem (Fig.~\ref{figsubTrajMinProb}). In the unconstrained problem, the system takes risk to go through one of the narrow corridors to reach the goal. In contrast, in the min-collision probability problem, the system detours around the obstacles to reach the goal. While there are about $49.27\%$ of $2000$ trajectories (plotted in red) that collide with the obstacles for the former, we observe no collision out of $2000$ trajectories for the latter.}
  \label{figUnMinTrajs}
\end{center}
\end{figure*}

\begin{figure*}[t]
  \begin{center}
  \subfigure[An example of controlled trajectories using boundary values.]{
  \label{figImerse}
  \includegraphics[width=85mm,bb= 115 240 500 530]{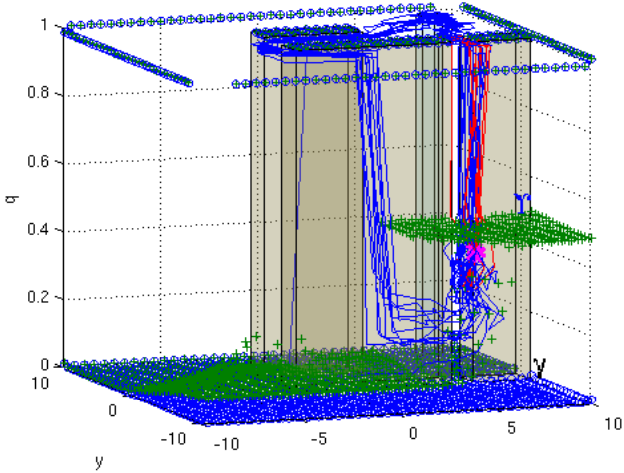}
  }
  \hfill
  \subfigure[Failure ratios for the first $N$ trajectories ($N \leq 5000$) with different $\eta$.]{
  \label{figFailureRatio}
  \includegraphics[width=85mm,bb= 105 248 500 492]{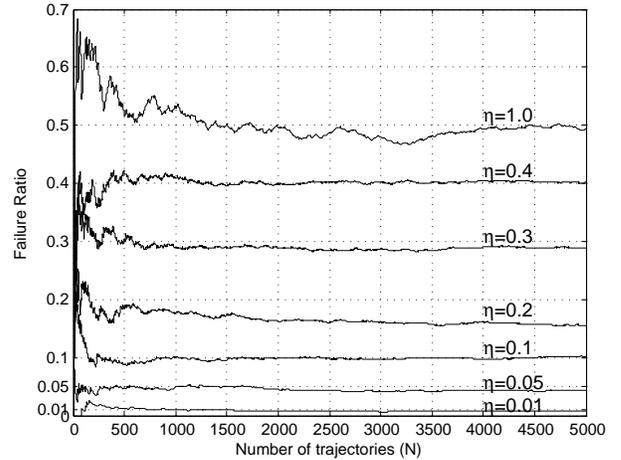}
  }
  \caption{In Fig.~\ref{figImerse}, we show an example of controlled trajectories using boundary values. The system starts from $(6.5,-3)$ with the failure-probability threshold $\eta=0.4$. The martingale state varies along controlled trajectories as a random parameter in a randomized control policy. When the martingale state is above $\Upsilon$, the system follows a deterministic control policy obtained from the unconstrained problem. In Fig.~\ref{figFailureRatio},we show failure ratios for the first $N$ trajectories ($1\leq N \leq 5000$) starting from $(6.5,-3)$ with different values of $\eta$. As seen in Fig.~\ref{figFailureRatio}, the algorithm is able to keep the failure ratio in 5000 executions around $0.40$ as dictated by the choice of $\eta=0.40$ at time $0$. Other failure ratios follow very closely the values of $\eta$, which indicates that the iMDP algorithm is able to provide solutions that are probabilistically sound.}
  \end{center}
\end{figure*}

\begin{figure*}[t]
\begin{center}
  \subfigure[Threshold $\eta=0.01$.]{
  \label{figsubTraj1}
  \includegraphics[width=47mm,bb= 140 42 1150 860]{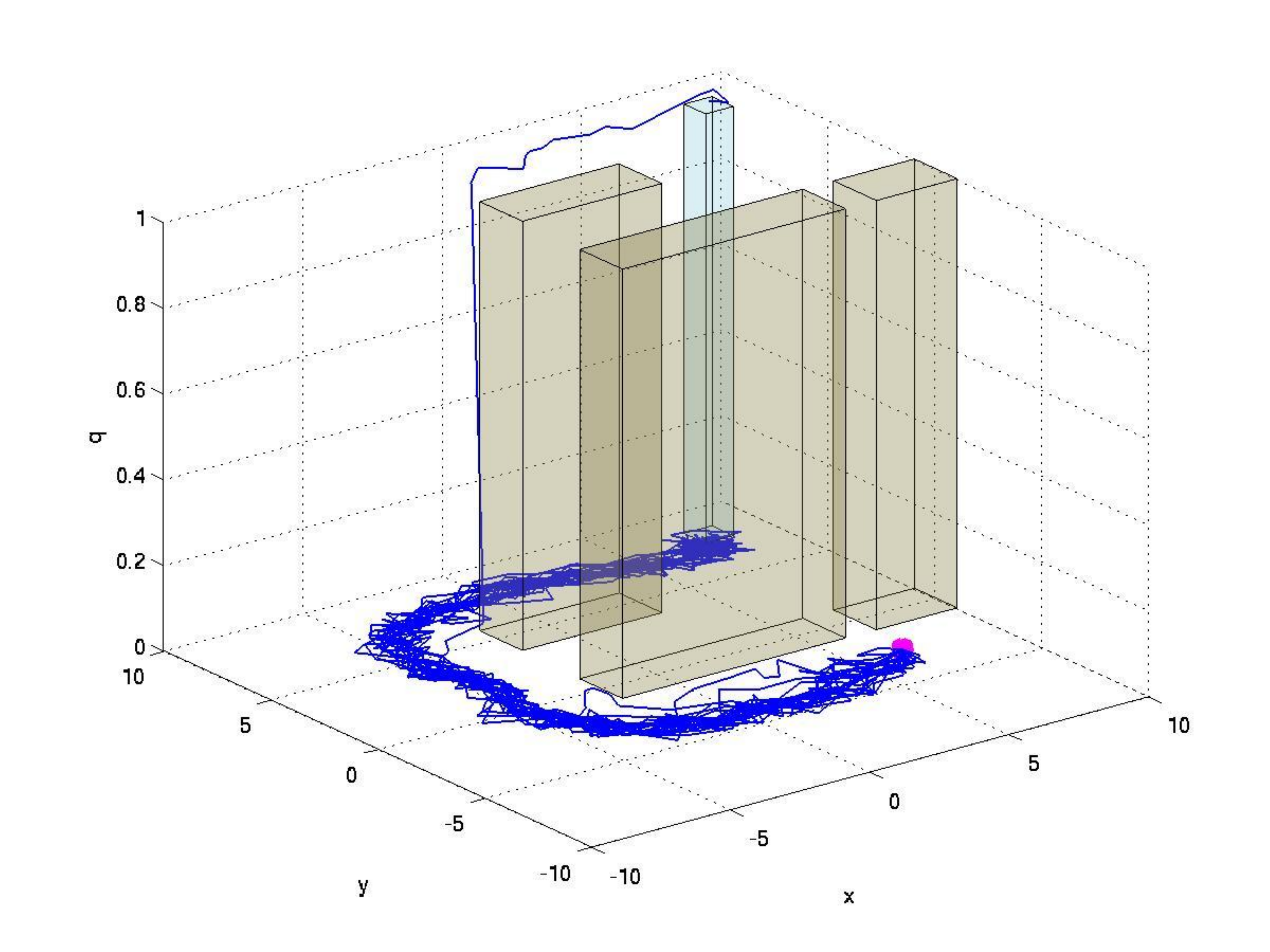}
  }
  \hfill
  \subfigure[Threshold $\eta=0.05$.]{
  \label{figsubTraj2}
  \includegraphics[width=47mm,bb= 140 42 1150 860]{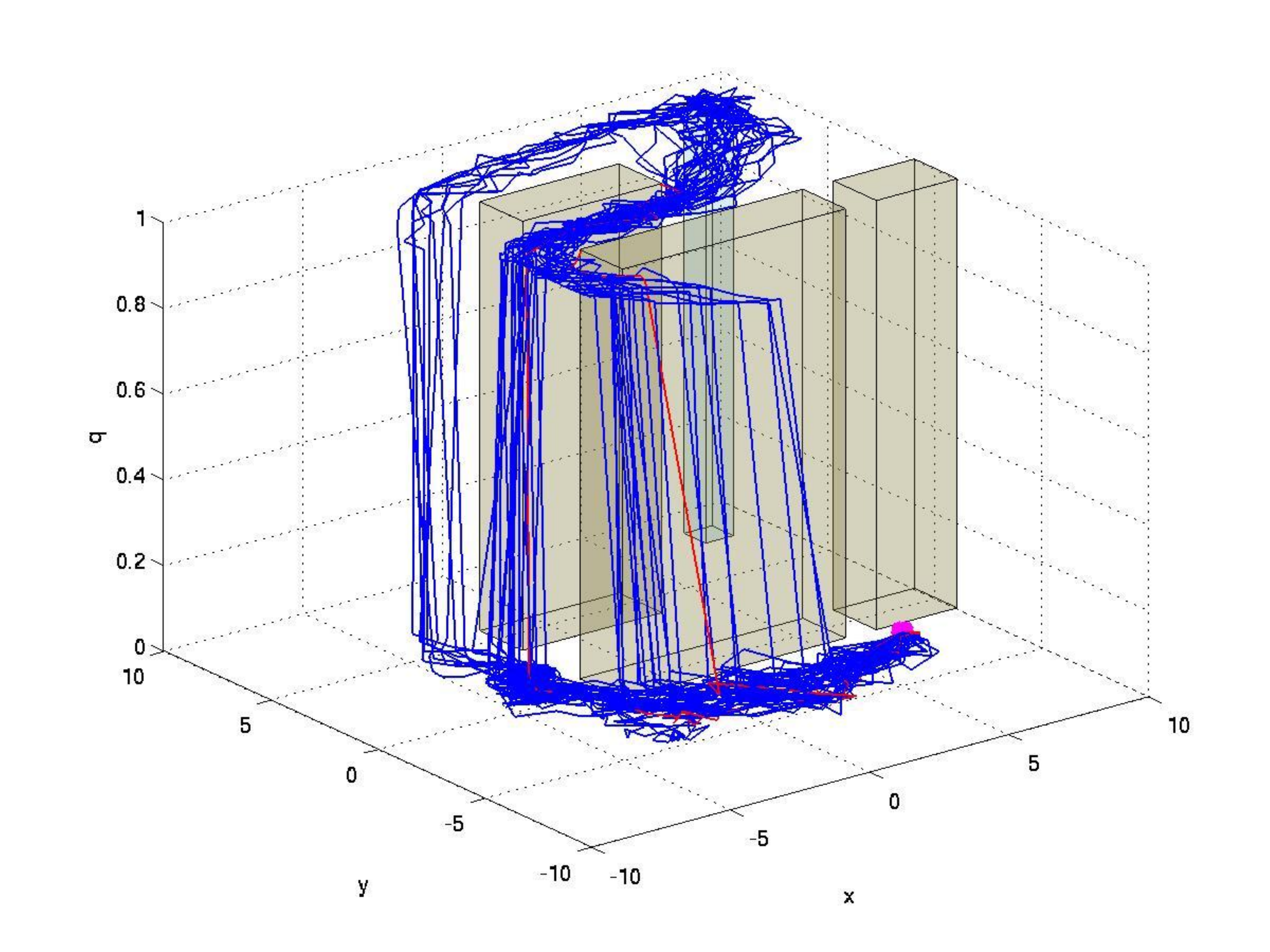}
  }
  \hfill
  \subfigure[Threshold $\eta=0.10$.]{
  \label{figsubTraj3}
  \includegraphics[width=47mm,bb= 140 42 1150 860]{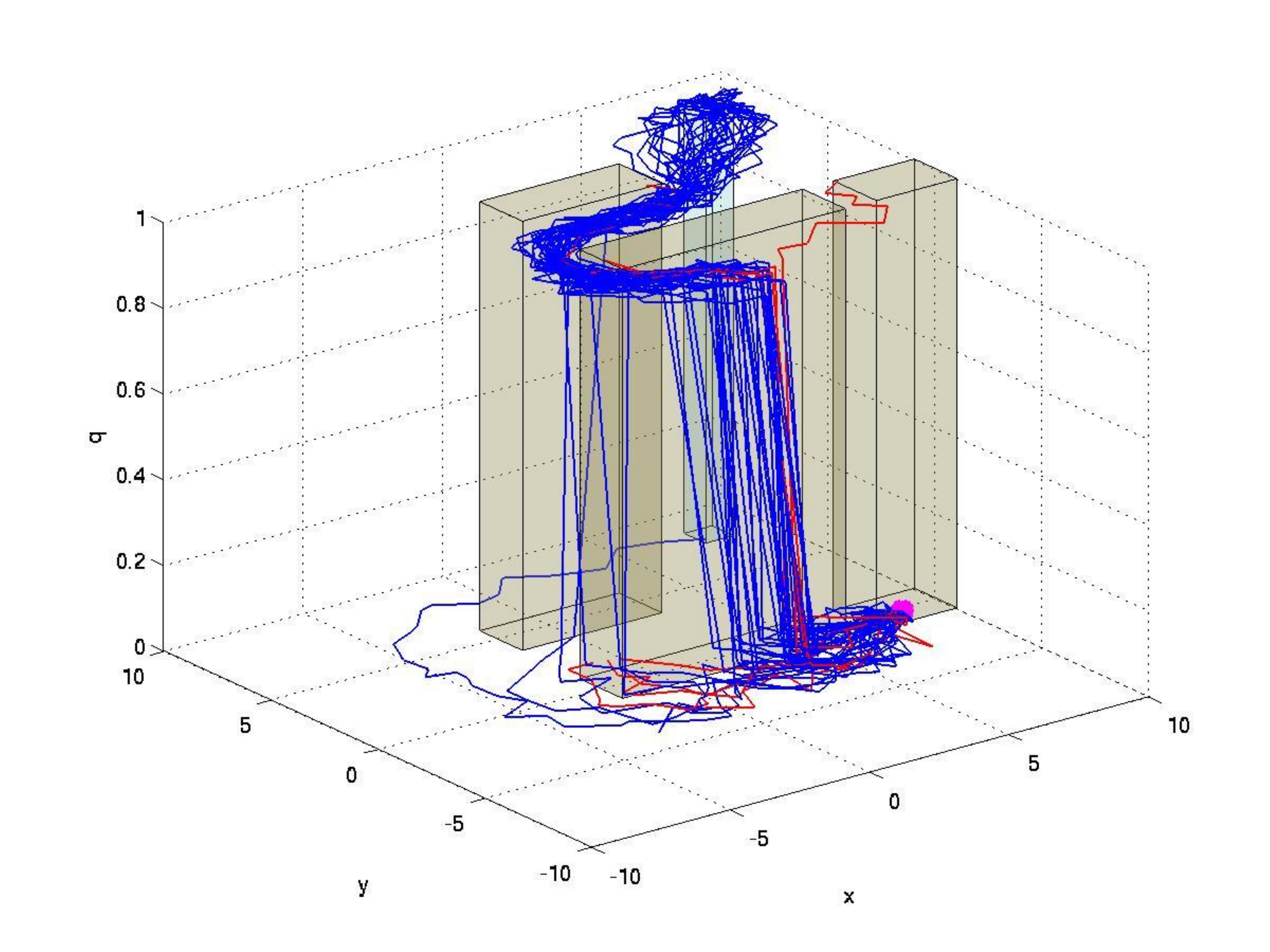}
  }
  \subfigure[$\eta=0.01$: $0.8\%$, $-19.42$.]{
  \label{figsubTraj4}
  \includegraphics[width=47mm,bb=  140 40 1150 860]{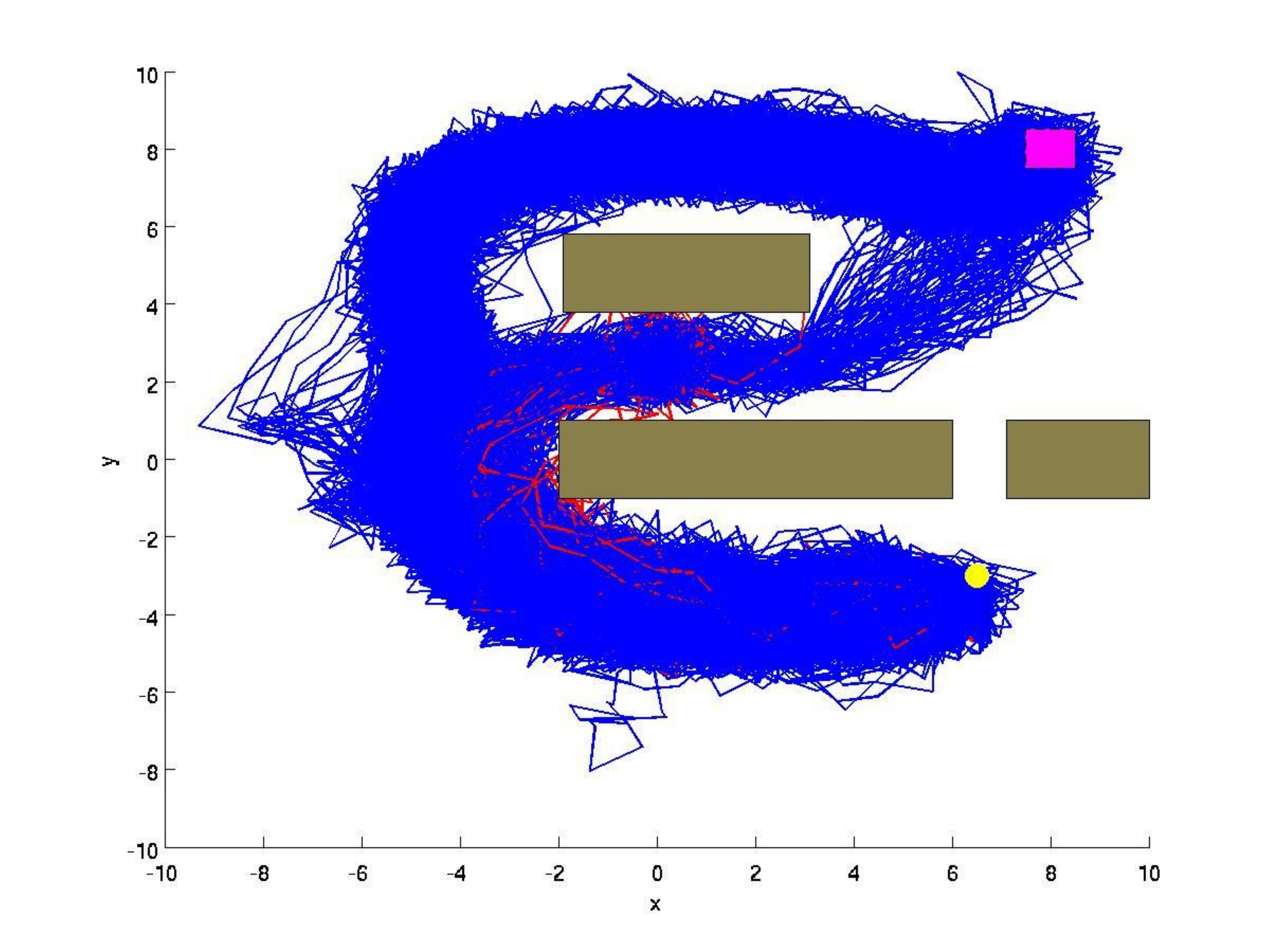}
  }
  \hfill
  \subfigure[$\eta=0.05$: $4.2\%$, $-42.53$.]{
  \label{figsubTraj5}
  \includegraphics[width=47mm,bb=  140 40 1150 860]{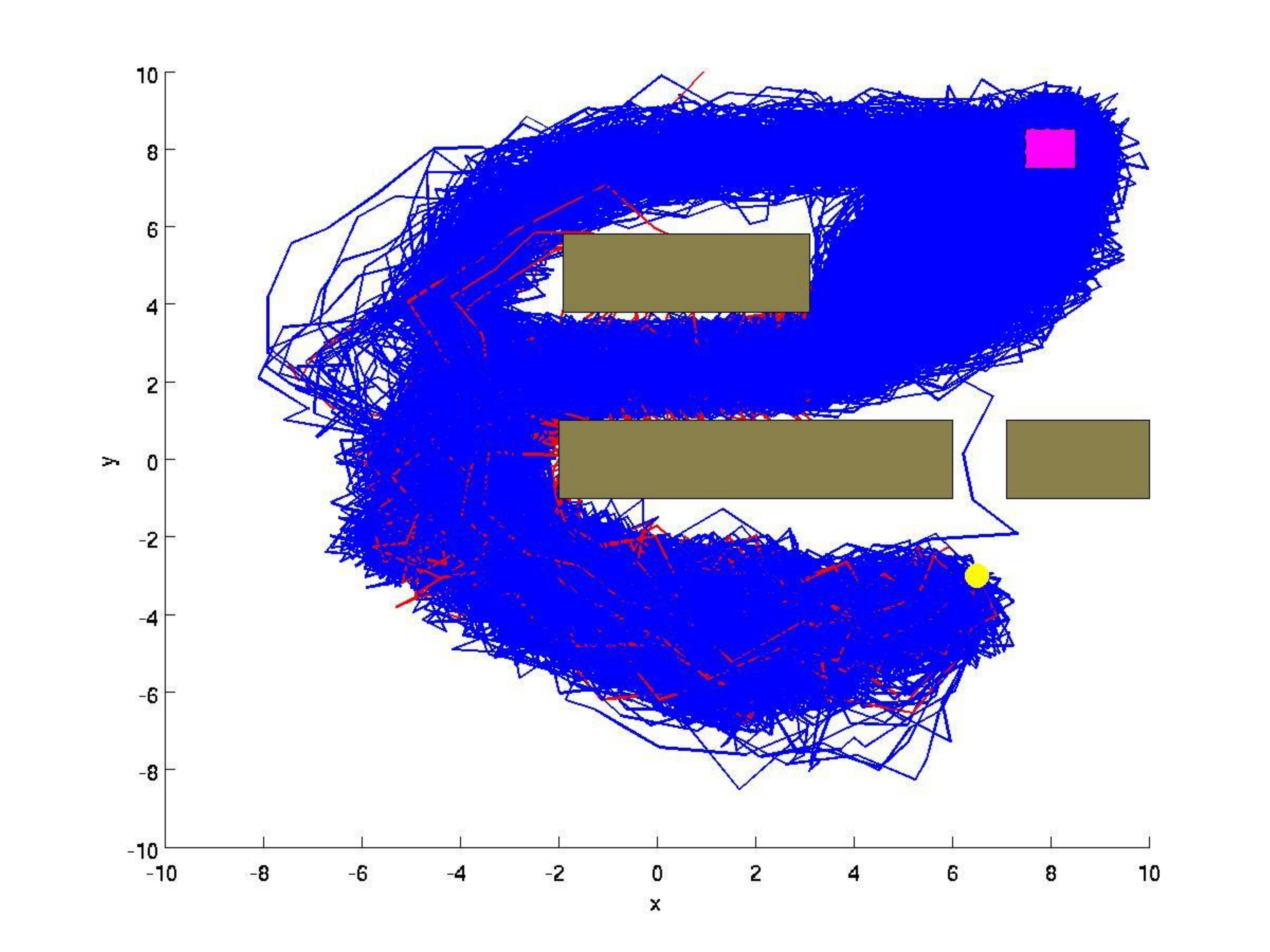}
  }
  \hfill
  \subfigure[$\eta=0.10$: $10\%$, $-58.00$]{
  \label{figsubTraj6}
  \includegraphics[width=47mm,bb=  140 40 1150 860]{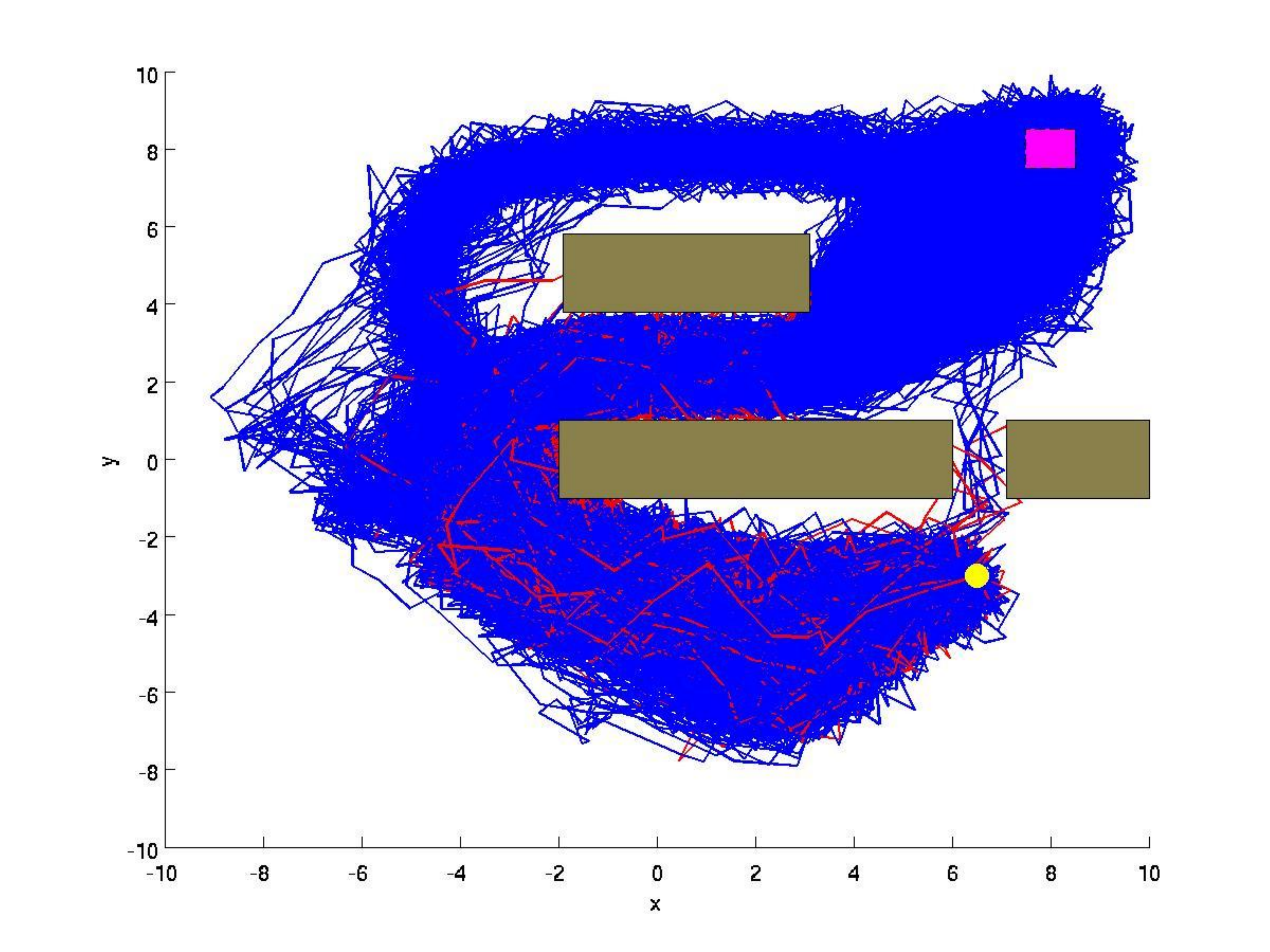}
  }
  \hfill
    \subfigure[Threshold $\eta=0.2$.]{
  \label{figsubTraj7}
  \includegraphics[width=47mm,bb= 140 42 1150 860]{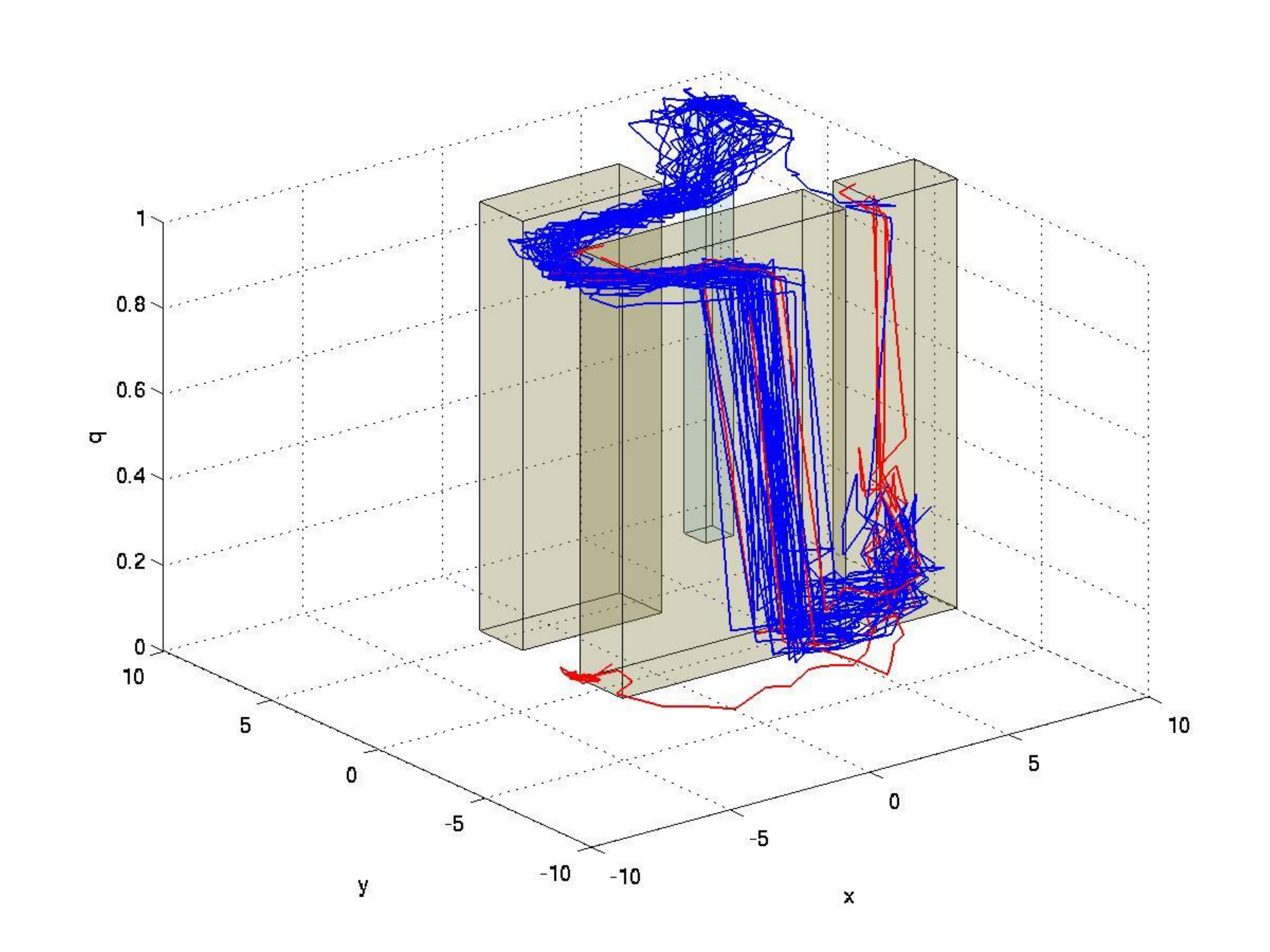}
  }
  \hfill
  \subfigure[Threshold $\eta=0.3$.]{
  \label{figsubTraj8}
  \includegraphics[width=47mm,bb= 140 42 1150 860]{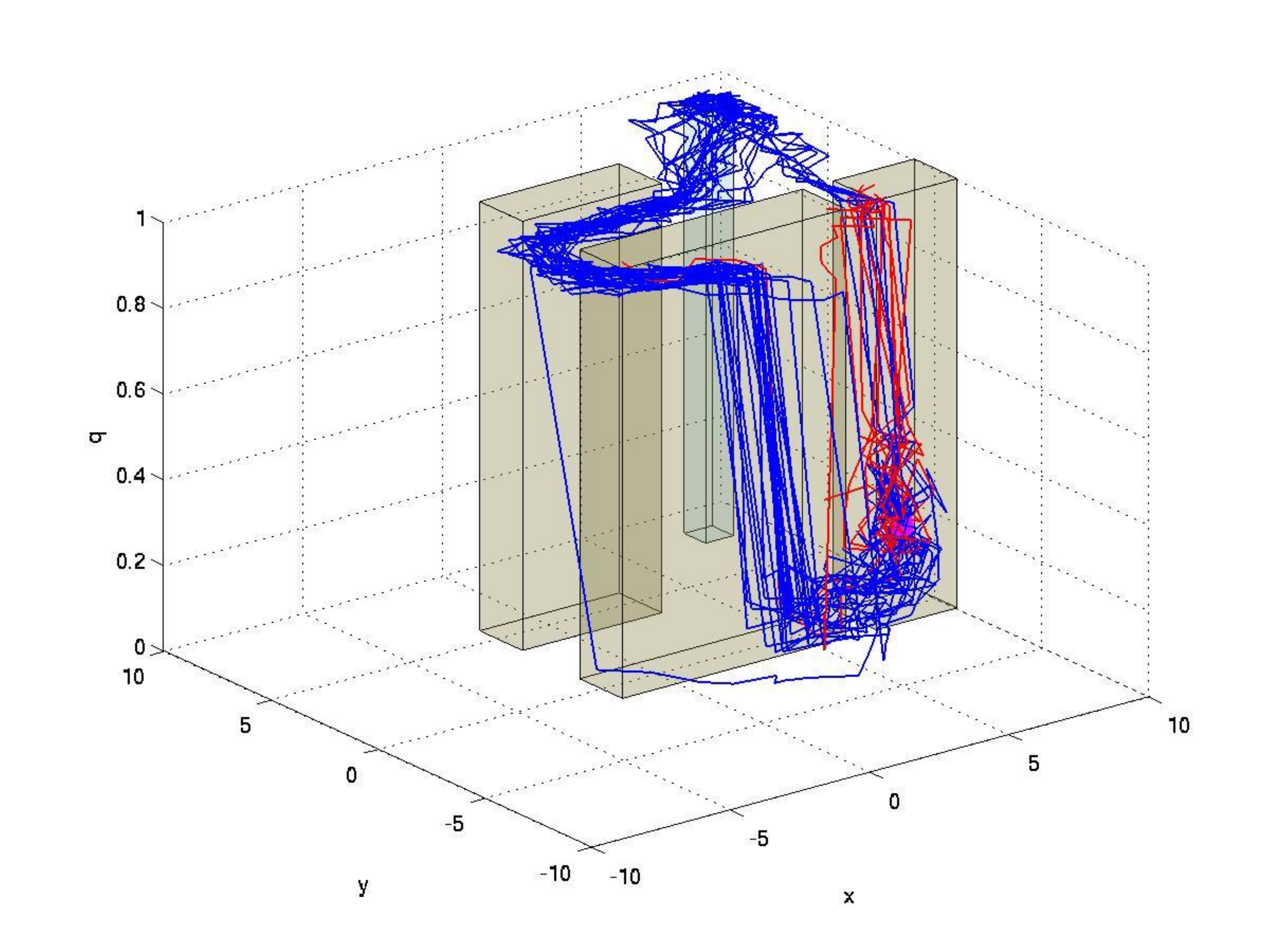}
  }
  \hfill
  \subfigure[Threshold $\eta=0.4$.]{
  \label{figsubTraj9}
  \includegraphics[width=47mm,bb= 140 42 1150 860]{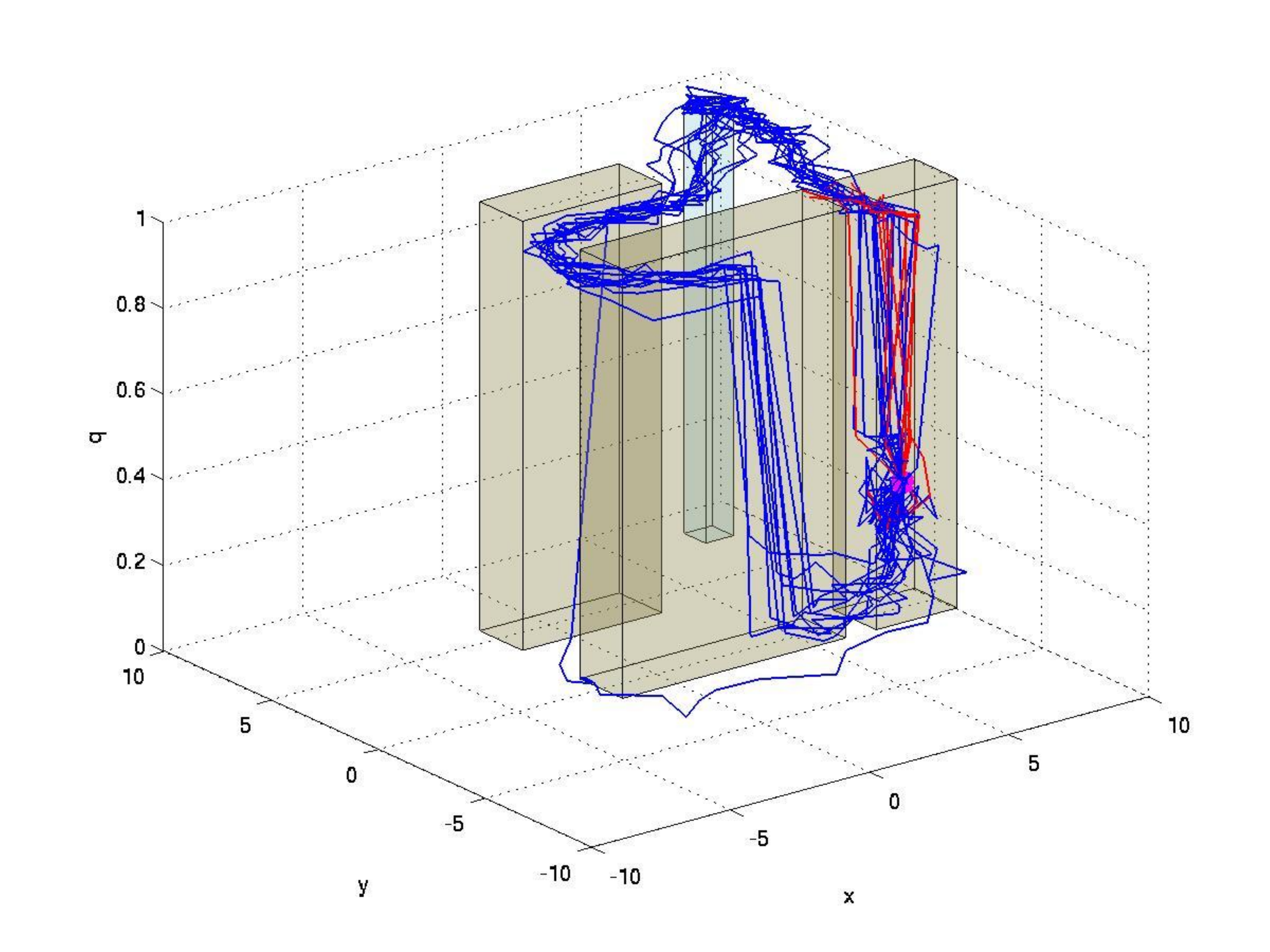}
  }
  \subfigure[$\eta=0.2$: $15.6\%$, $-65.81$.]{
  \label{figsubTraj10}
  \includegraphics[width=47mm,bb=  140 40 1150 860]{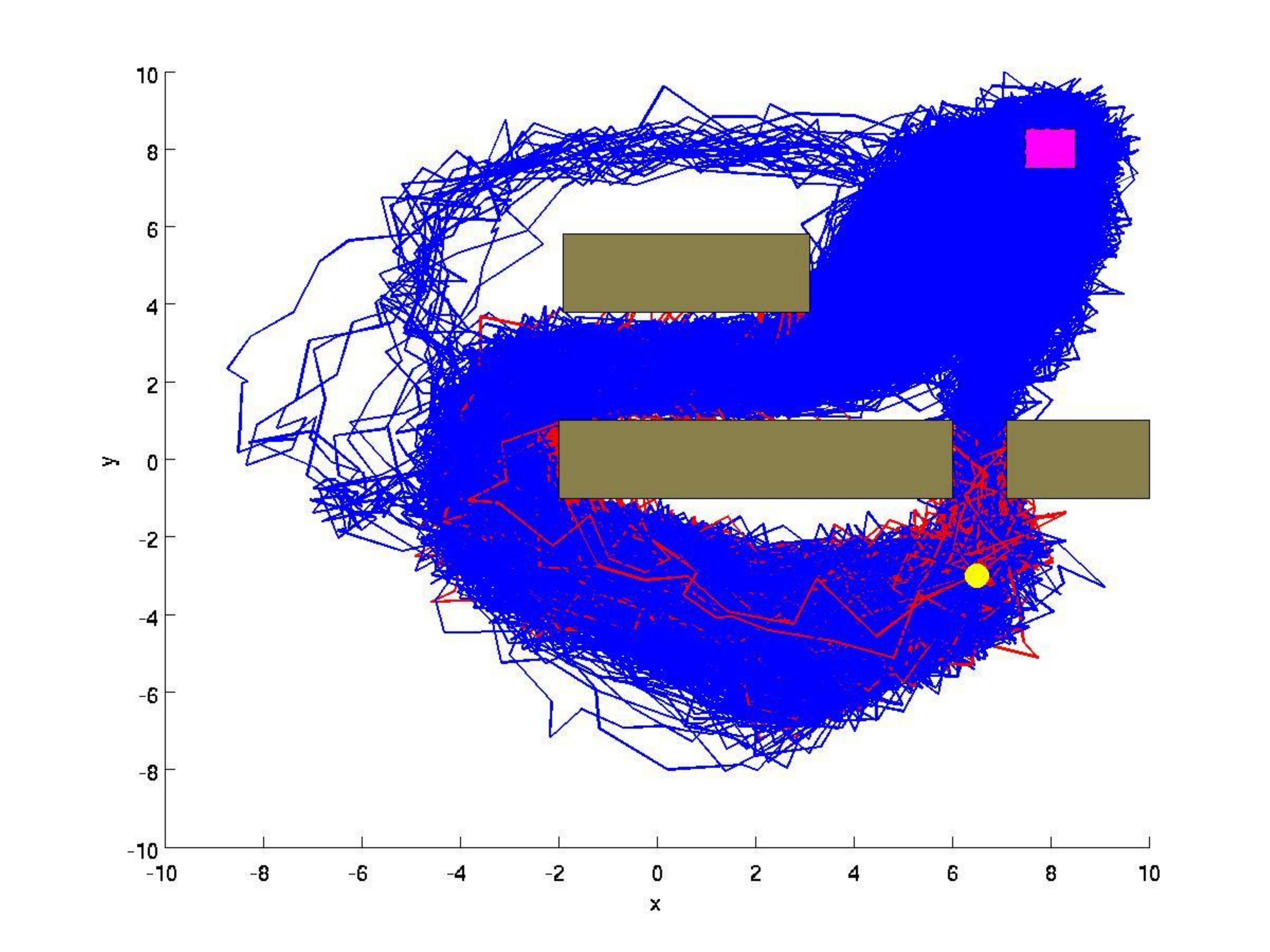}
  }
  \hfill
  \subfigure[$\eta=0.3$: $28.19\%$, $-76.80$.]{
  \label{figsubTraj11}
  \includegraphics[width=47mm,bb=  140 40 1150 860]{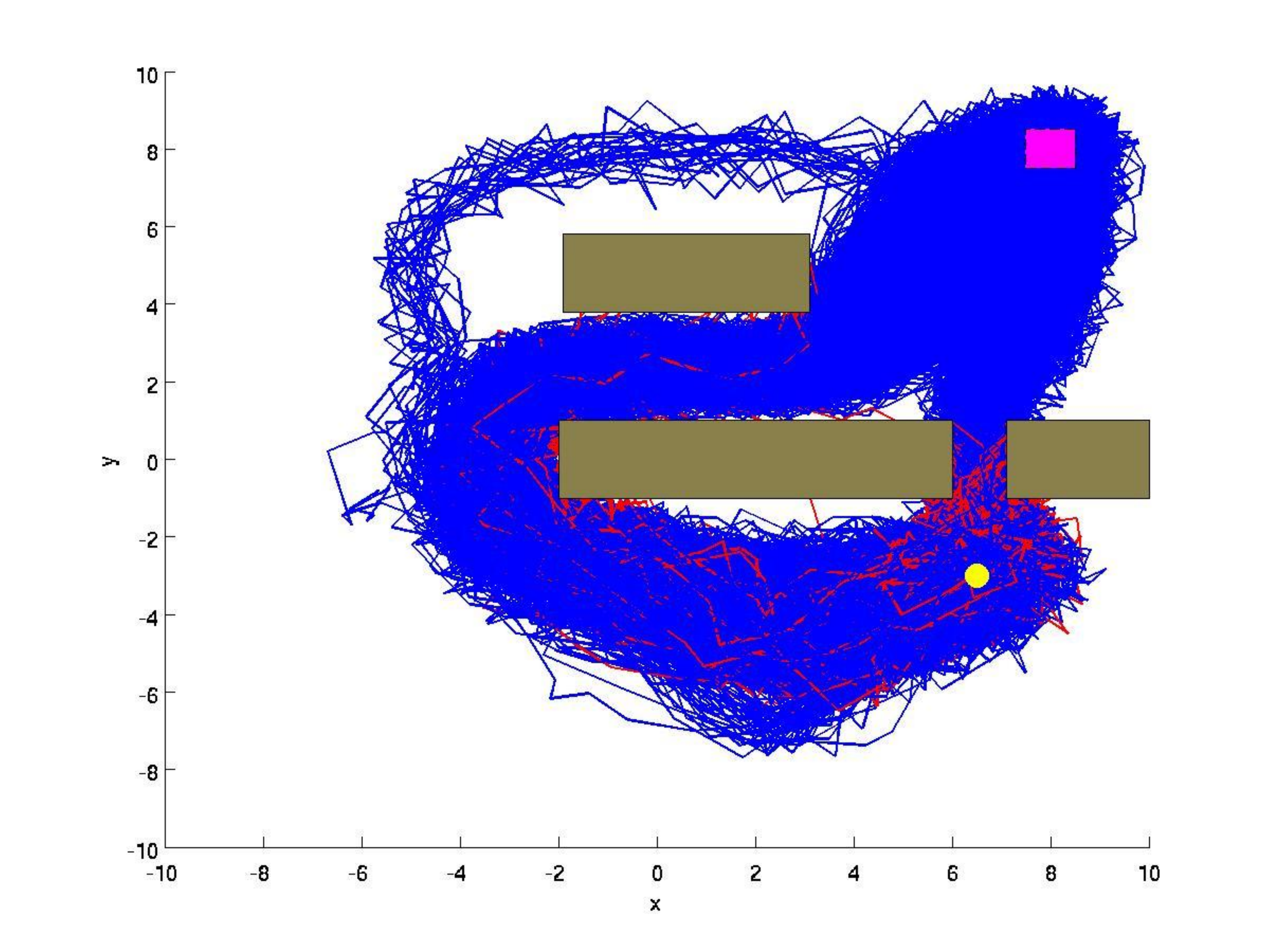}
  }
  \hfill
  \subfigure[$\eta=0.4$: $40\%$, $-115.59$.]{
  \label{figsubTraj12}
  \includegraphics[width=47mm,bb=  140 40 1150 860]{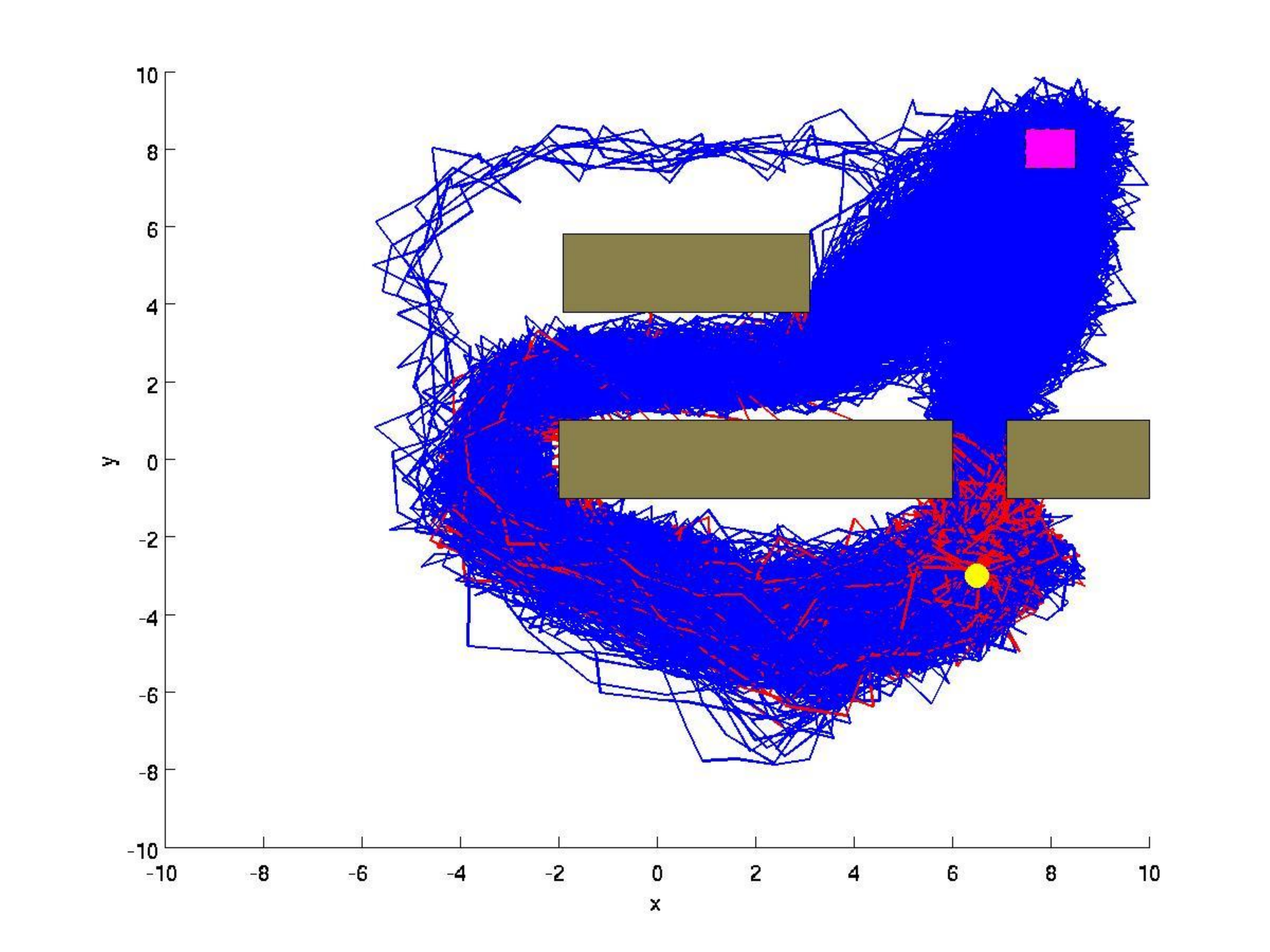}
  }
  \caption[Trajectories after 5000 iterations starting from $(6.5,-3)$.]{Trajectories after 5000 iterations starting from $(6.5,-3)$. In Figs.~\ref{figsubTraj1}-\ref{figsubTraj3} and Figs.~\ref{figsubTraj7}-\ref{figsubTraj9}, we show 50 trajectories resulting from a policy induced by $J_{4000}$ with different collision-probability thresholds ($\eta= 0.01, 0.05, 0.10, 0.20, 0.30, 0.40$). In Figs.~\ref{figsubTraj4}-\ref{figsubTraj6} and Figs.~\ref{figsubTraj10}-\ref{figsubTraj12}, we show $5000$ corresponding trajectories in the original state space $S$ with \textit{simulated collision probabilities} and \textit{average costs} in their captions. Trajectories that reach the goal region are plotted in blue, and trajectories that hit obstacles are plotted in red.}
  \label{figTraj}
\end{center}
\end{figure*}

In the following experiments, we used a computer with a 2.0-GHz Intel Core 2 Duo T6400 processor and $4$ GB of RAM. We controlled a system with stochastic single integrator dynamics to a goal region with free ending time in a cluttered environment. The dynamics is given by $dx(t)= u(t)dt + Fdw(t)$ where $x(t) \in \reals^{2}$, $u(t) \in \reals^2$, and $F=\left[ {\begin{array}{cc}
   0.5 & 0 \\
   0 & 0.5\\
  \end{array} } \right]$. The system stops when it collides with obstacles or reach the goal region. The cost function is the weighted sum of total energy spent to reach the goal $G$ at $(8,8)$, which is measured as the integral of square of control magnitude, and a terminal cost, which is $-1000$ for the goal region $G$ and $10$ for the obstacle region $\Gamma$, with a discount factor $\alpha=0.9$. The maximum velocity of the system in the x and y directions is one. At the beginning, the system starts from $(6.5,-3)$. Failure is defined as collisions with obstacles, and thus we use \textit{failure probability} and \textit{collision probability} interchangeably.

We first show how the extended iMDP algorithm constructs the sequence of approximating MDPs on $S$ over iterations in Fig.~\ref{figConstruct}. In particular, Figs.~\ref{figsubConstruct1}-\ref{figsubConstruct3} depict anytime policies on the boundary $S \times 1.0$ after 500, 1000, and 3000 iterations. Figures~\ref{figsubConstruct4}-\ref{figsubConstruct6} show the Markov chains created by anytime policies found by the algorithm on $\mathcal{M}_n$ after 200, 500 and 1000 iterations. We observe that the structures of these Markov chains are indeed random graphs that are (asymptotically almost-surely) connected to cover the state space $S$. As in the original version of iMDP, it is worth noting that the structures of these Markov chains can be constructed on-demand during the execution of the algorithm. 

The sequence of approximating MDPs on $S$ provides boundary values for the stochastic target problem as shown in Fig.~\ref{figBoundaryInfo}. In particular, Figs.~\ref{figBinfo1}-\ref{figBinfo3} shows a policy map, cost value function $J_{4000,1.0}$ and the associated collision probability function $\Upsilon_{4000}$ for the unconstrained problem after 4000 iterations. Similarly, Figs.~\ref{figBinfo4}-\ref{figBinfo6} show a policy map, the associated value function $J^{\gamma}_{4000}$, and the min-collision probability function $\gamma_{4000}$ after 4000 iterations. As we can see, for the unconstrained problem, the policy map encourages the system to go through the narrow corridors with low cost-to-go values and high probabilities of collision. In contrast, the policy map from the min-collision probability problem encourages the system to detour around the obstacles with high cost-to-go values and low probabilities of collision.

We now show how the extended iMDP algorithm constructs the sequence of approximating MDPs on the augmented state space $\overline{S}$. Figures~\ref{figsubConstruct7}-\ref{figsubConstruct9} show the corresponding anytime policies in $\overline{S}$ over iterations. In Fig.~\ref{figsubConstruct9}, we show the top-down view of a policy for states in $\overline{\mathcal{M}}_{3000} \backslash \mathcal{M}_{3000}$. Compared to Fig~\ref{figsubConstruct3}, we observe that the system will try to avoid the narrow corridors when the risk tolerance is low. In Figs.~\ref{figsubConstruct10}-\ref{figsubConstruct12}, we show the Markov chains that are created by anytime policies in the augmented state space. As we can see again, the structures of these Markov chains quickly cover $\overline{S}$ with (asymptotically almost-surely) connected random graphs. 

We then examine how the algorithm computes the value functions for the interior $D^o$ of the reformulated stochastic target problem in comparison with the value function of the unconstrained problem in Fig.~\ref{figValue}. Figure~\ref{figsubValue1}-\ref{figsubValue3} show approximate cost-to-go $J_n$ when the probability threshold $\eta_0$ is 1.0 for $n=200$, $2000$ and $4000$. We recall that the value functions in these figures form the boundary conditions on $S \times 1$, which is a subset of $\partial D$. In the interior $D^o$, Figs.~\ref{figsubValue7}-\ref{figsubValue9} present the approximate cost-to-go $J_{4000}$ for augmented states where their martingale components are $0.1$, $0.5$ and $0.9$. As we can see, the lower the martingale state is, the higher the cost value is -- which is consistent with the characteristics in Section~\ref{subsection:characterization}.

Lastly, we tested the performance of obtained anytime policies after 4000 iterations with different initial collision probability thresholds $\eta$. To do this, we first show how the policies of the unconstrained problem and the min-collision probability problem perform in Fig.~\ref{figUnMinTrajs}. As we can see, in the unconstrained problem, the system takes risk to go through one of the narrow corridors to reach the goal. In contrast, in the min-collision probability problem, the system detour around the obstacles to reach the goal. While there are about $49.27\%$ of $2000$ trajectories (plotted in red) that collide with the obstacles for the former, we observe no collision out of $2000$ trajectories for the latter. From the characteristics presented in Section~\ref{subsection:characterization} and illustrated in Fig.~\ref{figFeedback_Extra}, from the starting state $(6.5,-3)$, for any initial collision probability threshold $\eta$ above $0.4927$, we can execute the deterministic policy of the unconstrained problem.

In Fig.~\ref{figImerse}, we provide an example of controlled trajectories that are illustrated in Fig.~\ref{figFeedback_Extra} when the system starts from $(6.5,-3)$ with the failure probability threshold $\eta=0.4$. In this figure, the min-collision probability function $\gamma_{4000}$ is plotted in blue, and the collision probability function $\Upsilon_{4000}$ is plotted in green. Starting from the augmented state $(6.5,-3,0.40)$, the martingale state varies along controlled trajectories as a random parameter in a randomized control policy. When the martingale state is above $\Upsilon_{4000}$, the system follows a deterministic control policy obtained from the unconstrained problem.

Similarly, in Fig.~\ref{figTraj}, we show controlled trajectories for different values of $\eta$ ($0.01, 0.05, 0.10, 0.20, 0.30, 0.40$). In Figs.~\ref{figsubTraj1}-\ref{figsubTraj3} and Figs.~\ref{figsubTraj7}-\ref{figsubTraj9}, we show 50 trajectories resulting from a policy induced by $J_{4000}$ with different initial collision probability thresholds. In Figs.~\ref{figsubTraj4}-\ref{figsubTraj6} and Figs.~\ref{figsubTraj10}-\ref{figsubTraj12}, we show $5000$ corresponding trajectories in the original state space $S$ with reported \textit{simulated collision probabilities} and \textit{average costs} in their captions. Trajectories that reach the goal region are plotted in blue, and trajectories that hit obstacles are plotted in red. These simulated collision probabilities and average costs are shown in Table~\ref{table:statistics}. As we can see, the lower the threshold is, the higher the average cost is as we expect. When $\eta=0.01$, the average cost is $-19.42$ and when $\eta=1.0$, the average cost is $-125.20$. 

More importantly, the simulated collision probabilities follow very closely the values of $\eta$ chosen at time $0$. In Fig.~\ref{figFailureRatio}, we plot these simulated probabilities for the first $N$ trajectories where $N \in [1,5000]$ to show that the algorithm fully respects the bounded failure probability. Thus, this observation indicates that the extended iMDP algorithm is able to manage the risk tolerance along trajectories in different executions to minimize the expected costs using feasible and time-consistent anytime policies.

\begin{table}
	\centering
	\caption{Failure ratios and average costs for Fig.~\ref{figFailureRatio}.}   \label{table:statistics}
	\vskip-0.15cm
	\begin{tabular}{c|c|c}
		%\hline
		 {\color{black}{\textbf{ $\eta$ } }} & {\color{black}{ \textbf{ Failure Ratio}}} & {\color{black}{\centering \textbf{Average Cost}}} \\
		\hline
		1.00 & 0.4927 & -125.20 \\
		\hline
		0.40 & 0.4014 & -115.49 \\
		\hline
		0.30 & 0.2819 & -76.80 \\
		\hline
		0.20 & 0.1560 & -65.81 \\
		\hline
		0.10 & 0.1024 & -58.00 \\
		\hline
		0.05 & 0.0420 & -42.53 \\
		\hline
		0.01 & 0.0084 & -19.42 \\
		\hline
		0.001 & 0.0000 & -18.86 \\
		\hline
	\end{tabular}
	\vskip-0.5cm
\end{table}

\section{Conclusions} \label{section:conclusions}
We have introduced and analyzed the extension of the incremental Markov Decision Process (iMDP) algorithm for stochastic optimal control subject to bounded failure probabilities for initial states. We present here the martingale approach that diffuses the probability constraint into a martingale. The martingale stands for the level of risk tolerance that is contingent on available information over time. The approach transforms the probability-constrained problem into an equivalent stochastic target problem with the augmented state and control spaces. The boundary conditions for the transformed problem is, however, unspecified. The extended iMDP algorithm incrementally computes the boundary values and any-time feedback control policies for the transformed problem using asynchronous value iterations. The returned policies can be considered as randomized policies in the original state space. Effectively, the extended iMDP algorithm provides probabilistically-sound and asymptotically-optimal control policies for the class of stochastic control problems with bounded failure-probability constraints.

The future extension of the work is broad. We intend incorporate logical rules expressed as temporal logic constraints to achieve high degree of autonomy for systems to operate safely in uncertain and highly dynamic environments with complex mission specifications. We also plan to implement the algorithm outlined in this paper on robotic platforms for practical demonstration. 

\section*{ACKNOWLEDGMENTS}
This work was partially supported by the National Science Foundation grant CNS-1016213 and the Army Research Office MURI grant W911NF-11-1-0046.

\bibliographystyle{IEEEtran}
\bibliography{IEEEabrv,mybib}

\end{document}